\pdfoutput=1
\documentclass{lmcs}

\usepackage{lastpage}
\lmcsdoi{17}{3}{11}
\lmcsheading{}{\pageref{LastPage}}{}{}%
{Dec.~01,~2020}{Jul.~28,~2021}{}

\usepackage[utf8]{inputenc}

\keywords{dependent type theory, modalities, modal type theory, categorical semantics, gluing}

\usepackage{breakcites}
\usepackage{amssymb}
\usepackage{locallabel}
\usepackage{pftools}
\usepackage{wasysym}
\usepackage{stmaryrd}
\usepackage{fontawesome}
\usepackage{mleftright}
\usepackage{xparse}
\usepackage{ifthen}
\usepackage{scalerel}
\usepackage{epigraph}
\usepackage{datetime}
\usepackage{mathpartir}
\usepackage{array}
\usepackage{mathtools}
\usepackage[section]{placeins}

\definecolor{Matterhorn}{RGB}{77,77,77}
\definecolor{RegalBlue}{RGB}{3,69,117}
\definecolor{RedDevil}{RGB}{134,1,17}

\usepackage{macros}
\usepackage{categories}
\usepackage{jon-tikz}


\title{Multimodal Dependent Type Theory}

\author[D.~Gratzer]{Daniel Gratzer\rsuper{a}}  

\author[G.A.~Kavvos]{G.A. Kavvos\rsuper{b}}  

\author[A.~Nuyts]{Andreas Nuyts\rsuper{c}}  

\author[L.~Birkedal]{Lars Birkedal\rsuper{a}}  

\address{\lsuper{a}Aarhus University}  
\email{gratzer@cs.au.dk, birkedal@cs.au.dk}  

\address{\lsuper{b}University of Bristol}  
\email{alex.kavvos@bristol.ac.uk}  

\address{\lsuper{c}Vrije Universiteit Brussel}  
\email{andreas.nuyts@vub.be}  

\begin{document}
\maketitle

\begin{abstract}
  We introduce \MTT{}, a dependent type theory which supports multiple modalities.  \MTT{} is
  parametrized by a mode theory which specifies a collection of modes, modalities, and
  transformations between them.
  We show that different choices of mode theory allow us to use the same type theory to compute and
  reason in many modal situations, including guarded recursion, axiomatic cohesion, and parametric
  quantification. We reproduce examples from prior work in guarded recursion and axiomatic cohesion,
  thereby demonstrating that \MTT{} constitutes a simple and usable syntax whose instantiations
  intuitively correspond to previous handcrafted modal type theories. In some cases, instantiating
  \MTT{} to a particular situation unearths a previously unknown type theory that improves upon
  prior systems.
  Finally, we investigate the metatheory of \MTT{}. We prove the consistency of \MTT{} and establish
  canonicity through an extension of recent type-theoretic gluing techniques. These results hold
  irrespective of the choice of mode theory, and thus apply to a wide variety of modal situations.

\end{abstract}
\maketitle

\section{Introduction}
\label{sec:intro}

In order to increase the expressivity of Martin-L\"of Type Theory (\MLTT{}) we often wish to extend
it with unary type operators that we call \emph{modalities} or \emph{modal operators}. Some modal
operators arise as shorthands for internally definable structure~\cite{rijke:2020}, while others
are used as a device for internalising non-definable structure from particular models. In the latter
case, we are sometimes even able to prove that a modality cannot be internally expressed---at least
not without extensive changes to the judgmental structure of type theory: see \eg~the `no-go'
theorems by \cite[\S 4.1]{shulman:2018} and \cite{licata:2018}. This paper is concerned with the
development of a systematic approach to the judgmental formulation of type theories with multiple
interacting modalities.

The addition of a modality to a dependent type theory is a non-trivial exercise. Modal operators
often interact with the context of a type or term in a complicated way, and na\"{i}ve approaches
lead to undesirable interplay with other type formers and substitution. However, the consequent gain
in expressivity is substantial, and so it is well worth the effort. For example, modalities have
been used to express guarded recursive
definitions~\cite{birkedal:2012,bizjak:2016,bahr:2017,guatto:2018}, parametric
quantification~\cite{nuyts:2017,nuyts:2018}, proof
irrelevance~\cite{pfenning:2001,abel:2012,nuyts:2018}, and to define global operations which cannot
be localized to an arbitrary context~\cite{licata:2018}. There has also been concerted
effort towards the development of a dependent type theory corresponding to Lawvere's \emph{axiomatic
  cohesion}~\cite{lawvere:2007}, which has many interesting applications \cite{schreiber:2013,
  schreiber:2014, shulman:2018, gross:2017, kavvos:2019}.

Despite this recent flurry of developments, a unifying account of modal dependent type theory has
yet to emerge. Faced with a new modal situation, a type theorist must handcraft a brand new system,
and then prove the usual battery of metatheorems. This introduces formidable difficulties on two
levels. First, an increasing number of these applications are \emph{multimodal}: they involve
multiple interacting modalities, which significantly complicates the design of the appropriate
judgmental structure. Second, the technical development of each such system is entirely separate, so
that one cannot share the burden of proof even between closely related systems.  To take a recent
example, there is no easy way to transfer the work done in the 80-page-long normalization proof for
$\MLTTLock$~\cite{gratzer:2019} to a normalization proof for the modal dependent type theory of
\cite{clouston:dra:2018}, even though these systems are only marginally different. Put simply, if
one wished to prove that type-checking is decidable for the latter, then one would have to start
afresh.

We intend to avoid such duplication in the future. Rather than designing a new dependent type theory
for some preordained set of modalities, we will introduce a system that is \emph{parametrized by a
mode theory}, \ie{}~an algebraic specification of a modal situation.  This system, which we
call \MTT{}, solves both problems at once. First, by instantiating it with different mode theories
we will show that it can capture a wide class of situations. Some of these, \eg{}~the one for
guarded recursion, lead to a previously unknown system that improves upon earlier work. Second, the
predictable behavior of our rules allows us to prove metatheoretic results about large classes of
instantiations of our system. For example, our canonicity theorem applies irrespective of the
chosen mode theory. As a result, we only need to prove such theorems \emph{once}. Returning to the
previous examples, careful choices of mode theory yield two systems that closely resemble the calculi of
\cite{clouston:dra:2018} and $\MLTTLock$~\cite{gratzer:2019} respectively, so that our proof of
canonicity applies to both.

In fact, we take things one step further: \MTT{} is not just multimodal, but also \emph{multimode}.
That is, each judgment of \MTT{} can be construed as existing in a particular \emph{mode}. All modes
have some things in common---\eg{} there will be dependent sums in each---but some might
possess distinguishing features. From a semantic point of view, different modes correspond to
different context categories. In this light, modalities intuitively correspond to \emph{functors}
between those categories: in fact, they will be structures slightly weaker than \emph{dependent
right adjoints} (DRAs)~\cite{clouston:dra:2018}.

\paragraph{Mode theories}

At a high level, \MTT{} can be thought of as a machine that converts a concrete description of modes
and modalities into a type theory. This description, which is often called a \emph{mode theory}, is
given in the form of a \emph{small strict 2-category}~\cite{reed:2009,licata:2016,licata:2017}. A
mode theory gives rise to the following correspondence:
\begin{align*}
  \text{object} &\sim \text{mode} \\
  \text{morphism} &\sim \text{modality} \\
  \text{2-cell} &\sim \text{natural map between modalities}
\end{align*}
The equations between morphisms and between 2-cells in a mode theory can be used to precisely specify
the interactions we want between different modalities. We will illustrate this point with an example.


\paragraph{Instantiating \MTT{}}


Suppose we have a mode theory $\Mode$ with a single object $m$, a single generating morphism $\mu :
m \to m$, and no non-trivial 2-cells. Equipping \MTT{} with $\Mode$ produces a type theory with a
single modal type constructor, $\Modify$. This is the simplest non-trivial setting, and we can prove
very little about it without additional 2-cells.

If we add a 2-cell $\epsilon : \mu \To 1$ to $\Mode$, we can define a function
\[
  \mathsf{extract}_A : \Modify{A} \to A
\]
inside the type theory. If we also add a 2-cell $\delta : \mu \To \mu \circ \mu$ then we can also
define
\[
  \mathsf{duplicate}_A : \Modify{A} \to \Modify{\Modify{A}}
\]
Furthermore, we can control the precise interaction between $\mathsf{duplicate}_A$ and $\mathsf{extract}_A$ by
adding more equations that relate $\epsilon$ and $\delta$. For example, we may ask that $\Mode$ be
the \emph{walking comonad} \cite{schanuel:1986} which leads to a type theory with a dependent
\textsf{S4}-like modality~\cite{pfenning:2001,de-paiva:2015,shulman:2018}. We
can be even more specific, \eg~by asking that $(\mu, \epsilon, \delta)$ be \emph{idempotent}.

Thus, a morphism $\mu : n \to m$ introduces a modality $\Modify$, and a 2-cell
$\alpha : \mu \To \nu$ of $\Mode$ allows for the definition of a function of type
$\Modify{A} \to \Modify[\nu]{A}$ at mode $m$.

\paragraph{Relation to other modal type theories}

Most work on mo\-dal type theories still defies classification. However, we can informatively position
\MTT{} with respect to two qualitative criteria, \viz\ usability and generality.

Much of the prior work on modal type theory has focused on bolting a specific modality onto a type theory. The benefit of this
approach is that the syntax can be designed to be as convenient as possible for the application at
hand. For example, spatial/cohesive type theory~\cite{shulman:2018} features two modalities,
$\flat$ and $\sharp$, and is presented in a dual-context style. This judgmental structure, however,
is applicable only because of the particular properties of $\flat$ and $\sharp$. Nevertheless, the numerous
pen-and-paper proofs in \emph{op. cit.} demonstrate that the resulting system is easy
to use.

At the other end of the spectrum, the framework of Licata-Shulman-Riley (LSR)~\cite{licata:2017}
comprises an extremely general toolkit for simply-typed, substructural modal type theory. Its
dependent generalization, which is currently under development, is able to handle a very large class
of modalities. However, this generality comes at a price: its syntax is complex and unwieldy,
even in the simply-typed case.

\MTT{} attempts to strike a delicate balance between those two extremes. By avoiding substructural
settings and some kinds of modalities we obtain a noticeably simpler apparatus. Unlike LSR, we need
not annotate our term formers with delayed substitutions, and our approach extends to dependent
types in a straightforward manner. Most of the pleasant type-theoretic behaviour of \MTT{} is
achieved by ensuring that none of its rules `trim' the context, which would necessitate either
delayed substitutions~\cite{bizjak:2016,licata:2017} or delicate proofs of the admissibility of
substitution~\cite{bahr:2017,clouston:dra:2018,gratzer:2019}. We also show that \MTT{} can be
employed to reason about many models of interest, and that it is simple enough to be used in
pen-and-paper calculations.

\paragraph{Contributions}

In summary, we make the following contributions:
\begin{itemize}
\item We introduce \MTT{}, a general type theory for multiple modes and multiple interacting
  modalities.
\item We present a semantics, which constitute a category of models deriving from the generalized algebraic theory that underlies \MTT{}.
\item Using the semantics, we prove that---subject to a technical restriction---\MTT{} satisfies
  \emph{canonicity}, an important metatheoretic property. This is achieved through a modern
  \emph{gluing} argument~\cite{shulman:2015,altenkirch:2016,coquand:2018,kaposi:gluing:2019}.
\item Finally, we instantiate \MTT{} with various mode theories, and show its use in reasoning about two specific modal situations, viz.\ guarded recursion~\cite{bizjak:2016}, and internal adjunctions~\cite{shulman:2018,licata:2018}.
\end{itemize}


\section{The Syntax of \MTT{}}
\label{sec:towards-mtt}

As mentioned in the introduction, the syntax of \MTT{} is parameterized by a small 2-category called
a \emph{mode theory}. We will later show how to instatiate \MTT{} with a mode theory in order to
reason about particular scenarios, but for now we will work over an arbitrary mode theory. We thus
fix a mode theory $\Mode$, and use $m, n, o$ to stand for modes (the objects of $\Mode$), $\mu, \nu,
\tau$ for modalities (the morphisms), and $\alpha, \beta, \gamma$ for 2-cells.

In broad terms, \MTT{} consists of a collection of type theories, one for each mode $m \in \Mode$.
These type theories will eventually appear in one another, but only as spectres under a modality. We
thus begin by describing the individual type theories at each mode, and then discuss how modalities
are used to relate them.

\subsection{The Type Theory at Each Mode}
\label{sec:towards-mtt:no-modalities}

Each mode of \MTT{} is inhabited by a standard Martin-L{\"o}f Type Theory (\MLTT{}), and accordingly
includes the usual judgments. For example, we have the judgment $\IsCx{\Gamma}$ which states that
$\Gamma$ is a well-formed context \emph{in that particular mode $m$}. There are likewise judgments
for types, terms, and substitutions at each mode.

\begin{figure*}
  \begin{mathpar}
    \JdgFrame{\IsTy{A}}
    \\
    \inferrule{
      \IsCx{\Gamma}
    }{
      \IsTy{\Uni}[1]
    }
    \and
    \inferrule{
      \IsCx{\Gamma}
    }{
      \IsTy{\Bool}
    }
    \and
    \inferrule{
      \IsCx{\Gamma}\\
      \IsTy{A}\\
      \ell \le \ell'
    }{
      \IsTy{\TyLift{A}}[\ell']
    }
    \and
    \inferrule{
      \IsCx{\Gamma}\\
      \IsTy{A}\\
      \IsTm{M,N}{\TyLift{A}}
    }{
      \IsTy{\Id{A}{M}{N}}
    }
    \and
    \inferrule{
      \IsCx{\Gamma}\\
      \IsTy{A}\\
      \IsTy[\Gamma, x : \TyLift{A}]{B}
    }{
      \IsTy{(x : A) \to B} \\ \IsTy{(x : A) \times B}
    }
  \end{mathpar}
  \caption{Selected mode-local rules.}
  \label{fig:towards-mtt:local-rules}
\end{figure*}

In lieu of an exhaustive list of rules, which we will present in Section~\ref{sec:algebraic-mtt}, we
illustrate this point by only showing the important ones in
Figure~\ref{fig:towards-mtt:local-rules}. In brief, each mode comprises an ordinary intensional type
theory with dependent sums, dependent products, intensional identity types, booleans, and one
universe. Both sums and products satisfy an $\eta$-rule.

\paragraph{Universes \`a la Coquand}
  \label{paragraph:universes}

There are several ways to introduce universes in type theory \cite[\S 2.1.6]{hofmann:1997}
\cite{palmgren:1998, luo:2012}. We use the approach of \cite{coquand:2013}, which is close to
Tarski-style universes. However, instead of inductively defining \emph{codes} that represent
particular types, Coquand-style universes come with an \emph{explicit isomorphism} between types and
terms of the universe $\Uni$. However, we must remember to exercise caution: if this isomorphism
were to cover all types then \emph{Girard's paradox} \cite{coquand:1986} would apply, so we must
restrict it to \emph{small types}. This, in turn, forces us to stratify our types into small and large.

The judgment $\IsTy{A}[0]$ states that $A$ is a small type, and $\IsTy{A}[1]$ that it
is large. The universe itself must be a large type, but otherwise both levels are closed under all
other connectives. Finally, we introduce an operator that \emph{lifts} a small type to a large one:
\[
  \inferrule{
    \ell \le \ell'\\
    \IsTy{A}[\ell]
  }{
    \IsTy{\TyLift{A}}[\ell']
  }
\]
The lifting operation commutes definitionally with all the connectives,
\eg~$\TyLift{(A \to B)} = \TyLift{A} \to \TyLift{B}$. We will use large types for the most part:
only they will be allowed in contexts, and the judgment $\IsTm{M}{A}$ will presuppose that $A$ is
large. As we will not have terms at small types, we will not need the term lifting operations
used by \cite{coquand:2013} and \cite{sterling:2019}.

Following this stratification, we may introduce operations that exhibit the isomorphism:
\begin{mathparpagebreakable}
  \inferrule{
    \IsTm{M}{\Uni}
  }{
    \IsTy{\Dec{M}}[0]
  }
  \and
  \inferrule{
    \IsTy{A}[0]
  }{
    \IsTm{\Enc{A}}{\Uni}
  }
\end{mathparpagebreakable}
along with the equations $\Enc{\Dec{M}} = M$ and $\Dec{\Enc{A}} = A$.

The advantages of universes \`a la Coquand are now evident: rather than having to introduce
Tarski-style codes, we now find that they are \emph{definable}. For example, assuming $M : \Uni$ and
$x : \Dec{M} \vdash N : \Uni$, we let
\[
  (x : M) \mathrel{\widehat{\to}} N \defeq \Enc{(x : \Dec{M}) \to \Dec{N}} : \Uni
\]
We can then calculate that
\[
  \Dec{(x : M) \mathrel{\widehat{\to}} N} = \Dec{\Enc{(x : \Dec{M}) \to \Dec{N}}}
                                          = (x : \Dec{M}) \to \Dec{N}
\]
We will often suppress $\TyLift{-}$ as well as the explicit isomorphism.

\subsection{Introducing a Modality}
\label{sec:towards-mtt:single-modality}

Having sketched the basic type theory inhabiting each mode, we now turn to the interaction between them. This is the domain of the modalities.

Suppose $\Mode$ contains a modality $\mu : n \to m$. We would like to think of $\mu$ as a `map' from
mode $n$ to mode $m$. Then, for each $\IsTy[]{A}[]<n>$ we would like a type
$\IsTy[]{\Modify{A}}[]<m>$. On the level of terms we would similarly like for each
$\IsTm[]{M}{A}<n>$ an induced term $\IsTm[]{\MkBox{M}}{\Modify{A}}<m>$.

These constructs would be entirely satisfactory, were it not for the presence of \emph{open terms}.
To illustrate the problem, suppose we have a type $\IsTy{A}[]<n>$. We would hope that the
corresponding modal type lives in the same context, \ie~that $\IsTy{\Modify{A}}[]<m>$. However,
this is not possible, as $\Gamma$ is only a context at mode $n$, and cannot be
carried over verbatim to mode $m$. Hence, the only pragmatic option is to introduce an operation
that allows a context to be mapped to another mode.

\paragraph{Forming a modal type}

There are several different proposed solutions to this problem in the literature
\cite{pfenning-davies:2001, clouston:fitch:2018}. We will use a \emph{Fitch-style}
discipline~\cite{bahr:2017, clouston:dra:2018, gratzer:2019}: we will require that a modality $\mu$
induce an operation on contexts in the \emph{opposite} direction. We will denote this operation by a \emph{lock}:
\[
  \inferrule[cx/lock]{
    \IsCx{\Gamma}
  }{
    \IsCx{\LockCxV{\Gamma}}<n>
  }
\]
Intuitively, $\Lock_\mu$ will behave somewhat like a left adjoint to $\Modify{-}$. However,
$\Modify{-}$ acts on types while $\LockCxV{-}$ acts on contexts, so this cannot be an ordinary
adjunction. Instead, $\Modify{-}$ will be what \cite{clouston:dra:2018} call a \emph{dependent
right adjoint} (DRA). A DRA essentially consists of a type former $\mathbf{R}$ and a context
operation $\mathbf{L}$ such that
\begin{equation*}
  \{N \mid \mathbf{L}(\Gamma) \vdash N : A \} \cong \{ M \mid \Gamma \vdash M : \mathbf{R}(A) \}
  \TagEq[$\dagger$]\label{eq:towards-mtt:dra}
\end{equation*}
See \cite{clouston:dra:2018} for a formal definition.

Just as with DRAs, the \MTT{} formation and introduction rules for modal types effectively
\emph{transpose} types and terms across this adjunction:
\begin{mathpar}
  \inferrule[tp/modal]{
    \mu : m \to n \\
    \IsTy[\LockCxV{\Gamma}]{A}<n>
  }{
    \IsTy{\Modify{A}}
  }
  \and
  \inferrule[tm/modal-intro]{
    \mu : m \to n \\
    \IsTm[\LockCxV{\Gamma}]{M}{A}<n>
  }{
    \IsTm{\MkBox{M}}{\Modify{A}}
  }
\end{mathpar}

It remains to show how to eliminate modal types. Previous work on Fitch-style calculi
\cite{clouston:dra:2018, gratzer:2019} has employed elimination rules which essentially invert the
introduction rule \ruleref{tm/modal-intro}. Such rules \emph{remove} one or more locks from the
context during type-checking, and sometimes even trim a part of it. For example, a rule of this sort
would be
\[
  \inferrule{
    \Lock_\mu \not\in \Gamma'\\
    \IsTm{M}{\Modify{A}}
  }{
    \IsTm[\LockCxV{\Gamma}, \Gamma']{\mathsf{open}(M)}{A}<n>
  }
\]
This kind of rule tends to be unruly, and delicate work is required to prove even basic results
about it. For example, see the technical report \cite{gratzer:tech-report:2019} for a particularly
laborious proof of the admissibility of substitution. The results in \emph{op. cit.} could not
possibly reuse any of the work of \cite{clouston:dra:2018}, as a small change in the syntax leads
to many subtle differences in the metatheory. Consequently, it seems unlikely that one could adapt
this approach to a modality-agnostic setting like ours.

We will use a different technique, which is reminiscent of dual-context
calculi \cite{kavvos:2020}. First, we will let the variable rule control the use of modal
variables. Then, we will take a `modal cut' rule, which will allow
the substitution of modal terms for modal variables, to be our modal elimination rule.

\paragraph{Accessing a modal variable}

The behavior of modal types can often be clarified by asking a simple question: when can we use a
variable $x : \Modify{A}$ of modal type to construct a term of type $A$? In previous Fitch-style
calculi we would use the modal elimination rule to reduce the goal to $\Modify{A}$, and
then---\emph{had the modal elimination rule not eliminated $x$ from the context}---we would simply
use the variable. We may thus write down a term of type $A$ using a variable $x : \Modify{A}$ only
when our context is structured in a way that does not obstruct the use of $x$, and the final arbiter
of that is the modal elimination rule.

\MTT{} turns this idea on its head: rather than handing control over to the modal elimination rule,
we delegate this decision to the variable rule itself. In order to ascertain whether we can use a
variable in our calculus, the variable rule examines \emph{the locks to the right of the variable}.
The rule of thumb is this: we should always be able to access $\Modify{A}$ behind $\Lock_\mu$.
Carrying the illustrative analogy of an adjunction $\LockCxV{-} \Adjoint \Modify{-}$ further, we see
that the simplest judgment that fits this, namely $\IsTm[\LockCxV{\Gamma, x :
\Modify{A}}]{x}{A}<n>$, corresponds to the \emph{counit} of the adjunction.

To correctly formulate the variable rule, we will require one more idea: following modal type
theories based on \emph{left
  division}~\cite{pfenning:2001,abel:phd,abel:2008,nuyts:2017,nuyts:2018}, every variable in the
context will be annotated with a modality, $\DeclVar{x}{A}$. Intuitively a variable $\DeclVar{x}{A}$
is the same as a variable $x : \Modify{A}$, but the annotations are part of the structure of
a context while $\Modify{A}$ is a type. This small circumlocution will ensure that the variable rule
respects substitution.

The most general form of the variable rule will be able to handle the interaction of modalities, so
we present it in stages. A first counit-like approximation is then
\begin{mathpar}
  \inferH{tm/var/counit}{
    \Lock \not\in \Gamma_1\\
    \IsTy[\LockCxV{\Gamma_0}]{A}[1]<n>
  }{
    \IsTm[\LockCxV{\ECxV{\Gamma_0}{x}{A}}, \Gamma_1]{x}{A}<n>
  }
\end{mathpar}
The first premise requires that no further locks occur in $\Gamma_1$, so that the conclusion remains in the same mode $n$. The second premise is just enough to derive $\IsTy[\Gamma_0]{\Modify{A}}[1]$.

\paragraph{Context extension}

The switch to modality-annotated declarations $\DeclVar{x}{A}$ also requires us to revise the
context extension rule. The revised version, \ruleref{cx/extend}, appears in
Figure~\ref{fig:towards-mtt:modal-rules} and closely follows the formation rule for $\Modify{-}$: if
$\IsTy[\LockCxV{\Gamma}]{A}[1]<n>$ is a type in the locked context $\Gamma$, then we may extend the
context $\Gamma$ to include a declaration $\DeclVar{x}{A}$, so that $x$ stands for a term of type
$A$ \emph{under the modality $\mu$}.

\paragraph{The elimination rule}

The difference between a modal type $\Modify{A}$ and an annotated declaration $\DeclVar{x}{A}<\mu>$
in the context is navigated by the modal elimination rule. In brief, its role is to enable the
substitution of a term of the former type for a variable with the latter declaration. The full rule
is complex, so we first discuss the case of a single modality $\mu : n \to m$. The correspoding rule is
\begin{mathpar}
  \inferH{tm/modal-elim/single-modality}{
    \IsTm{M_0}{\Modify{A}}\\
    \IsTy[\ECxV{\Gamma}{x}{\Modify{A}}<1>]{B}[1]\\
    \IsTm[\ECxV{\Gamma}{y}{A}]{M_1}{\Sb{B}{\MkBox{y}/x}}
  }{
    \IsTm{\Open{M_0}[y]{M_1}<\mu>[1]}{\Sb{B}{M_0/x}}
  }
\end{mathpar}
Forgetting dependence for a moment, we see that this rule is close to the dual-context style
\cite{pfenning-davies:2001, kavvos:2020}: if we think of annotations as separating the context into
multiple zones, then $\DeclVar{y}{A}<\mu>$ clearly belongs to the `modal' part.

In the dependent case we also need a motive $\IsTy[\ECxV{\Gamma}{x}{\Modify{A}}<1>]{B}[1]$, which
depends on a variable of modal type, but under the identity modality $1$. This premise is then
fulfilled by $M_0$ in the conclusion. In a sense, this rule permits a form of \emph{modal
induction}: every variable $\DeclVar{x}{\Modify{A}}<1>$ can be assumed to be of the form $\MkBox{y}$
for some $\DeclVar{y}{A}<\mu>$. This kind of rule has appeared before in the spatial and cohesive type theories of \cite{shulman:2018}.

In the type theory of \cite{clouston:dra:2018} modalities are taken to be dependent right adjoints,
with terms witnessing Equation~\ref{eq:towards-mtt:dra}. This isomorphism can encode
\ruleref{tm/modal-elim/single-modality}, but that rule alone cannot encode
Equation~\ref{eq:towards-mtt:dra}. As a result, modalities in \MTT{} are weaker than DRAs.

\subsection{Multiple Modalities}
\label{sec:towards-mtt:multiple-modalities}

\begin{figure*}
  \begin{mathpar}
    \JdgFrame{\IsCx{\Gamma}}\\
    \inferH{cx/lock}{
      \mu : n \to m\\
      \IsCx{\Gamma}\\
    }{
      \IsCx{\LockCxV{\Gamma}<\mu>}<n>
    }
    \and
    \inferH{cx/extend}{
      \mu : n \to m\\
      \IsCx{\Gamma}\\
      \IsTy[\LockCxV{\Gamma}]{A}[1]<n>
    }{
      \IsCx{\ECxV{\Gamma}{x}{A}}
    }
    \\
    \inferH{cx/id}{
      \IsCx{\Gamma}
    }{
      \EqCx{\Gamma}{\LockCxV{\Gamma}<1>}
    }
    \and
    \inferH{cx/compose}{
      \nu : o \to n\\
      \mu : n \to m\\
      \IsCx{\Gamma}
    }{
      \EqCx{\LockCxV{\LockCxV{\Gamma}}<\nu>}{\LockCxV{\Gamma}<\mu \circ \nu>}<o>
    }
    \\
    \JdgFrame{\IsTy{A} \qquad \IsTm{M}{A}}
    \\
    \inferH{tp/modal}{
      \mu : n \to m \\
      \IsTy[\LockCxV{\Gamma}]{A}<n>
    }{
      \IsTy{\Modify{A}}
    }
    \and
    \inferH{tm/var}{
      \nu : m \to n\\
      \alpha : \nu \To \Locks{\Gamma_1}
    }{
      \IsTm[\ECxV{\Gamma_0}{x}{A}<\nu>, \Gamma_1]{x^\alpha}{A^\alpha}
    }
    \and
    \inferH{tm/modal-intro}{
      \mu : n \to m \\
      \IsTm[\LockCxV{\Gamma}]{M}{A}<n>
    }{
      \IsTm{\MkBox{M}}{\Modify{A}}
    }
    \and
    \inferH{tm/modal-elim}{
      \mu : n \to m \\
      \nu : m \to o \\
      \IsTy[\ECxV{\Gamma}{x}{\Modify[\mu]{A}}<\nu>]{B}[1]<o> \\\\
      \IsTm[\LockCxV{\Gamma}<\nu>]{M_0}{\Modify[\mu]{A}}<m> \\
      \IsTm[\ECxV{\Gamma}{x}{A}<\nu \circ \mu>]{M_1}{\Sb{B}{\MkBox[\mu]{x}/x}}<o>
    }{
      \IsTm{\Open{M_0}[x]{M_1}<\mu>[\nu]}{\Sb{B}{M_0/x}}<o>
    }
    \and
    \inferH{tm/modal-beta}{
      \nu : m \to o \\
      \mu : n \to m \\
      \IsTy[\ECxV{\Gamma}{x}{\Modify[\mu]{A}}<\nu>]{B}[1]<o> \\\\
      \IsTm[\LockCxV{\Gamma}<\nu \circ \mu>]{M_0}{A}<n> \\
      \IsTm[\ECxV{\Gamma}{x}{A}<\nu \circ \mu>]{M_1}{\Sb{B}{\MkBox[\mu]{x}/x}}<o>
    }{
      \EqTm{
        \Open{\MkBox[\mu]{M_0}}[x]{M_1}<\mu>[\nu]
      }{
        \Sb{M_1}{M_0/x}
      }{
        \Sb{B}{\MkBox[\mu]{M_0}/x}
      }<o>
    }
    \\
    \JdgFrame{\Locks{\Gamma}}
    \\
    \Locks{\Emp} = 1 \and
    \Locks{\ECxV{\Gamma}{x}{A}} = \Locks{\Gamma} \and
    \Locks{\LockCxV{\Gamma}} = \Locks{\Gamma} \circ \mu
  \end{mathpar}
  \caption{Selected modal rules.}
  \label{fig:towards-mtt:modal-rules}
\end{figure*}

So far we have only considered a single modality. In this section we discuss the few additional
tweaks that are needed to support multiple interacting modalities. The final version of the modal
rules is given in Figure~\ref{fig:towards-mtt:modal-rules}.

\paragraph{Multimodal locks}

Up to this point the operation $\LockCxV{-}$ on contexts has referred to a single modality $\mu : n
\to m$. The rule \ruleref{cx/lock} generalizes it to work with any modality. The only question then
is how the resulting operations should interact. This is where the mode theory comes in: locks
should be \emph{functorial}, so that $\nu : o \to n$, $\mu : n \to m$, and $\IsCx{\Gamma}$ imply
$\EqCx{\LockCxV{\LockCxV{\Gamma}}<\nu>}{\LockCxV{\Gamma}<\mu \circ \nu>}<o>$. We additionally ask
that the identity modality $1 : m \to m$ at each mode has a trivial, invisible action on contexts,
\ie that $\LockCxV{\Gamma}<1> = \Gamma$.

These two actions, which are encoded by \ruleref{cx/compose} and \ruleref{cx/id}, ensure that
\emph{$\Lock$ is a contravariant functor on $\Mode$}, mapping each mode $m$ to the category of
contexts $\IsCx{\Gamma}$. The contravariance originates from the fact that $\Mode$ is a
specification of the behavior of the modalities $\Modify{-}$, so that their left-adjoint-like
counterparts $\LockCxV{-}$ act with the opposite variance.

\paragraph{The full variable rule}

We have seen that $\Lock$ induces a functor from $\Mode$ to categories of contexts, but we have not
yet used the 2-cells of $\Mode$. In short, a 2-cell $\alpha : \mu \To \nu$ contravariantly induces a
substitution from $\LockCxV{\Gamma}<\nu>$ to $\LockCxV{\Gamma}<\mu>$. We will discuss this further
in Section~\ref{sec:algebraic-mtt}, but for now we only mention that this arrangement gives rise to
an \emph{admissible operation on types}: for each 2-cell we obtain an operation $(-)^\alpha$ such
that $\IsTy[\LockCxV{\Gamma}]{A}[]$ implies $\IsTy[\LockCxV{\Gamma}<\nu>]{A^\alpha}[]$.

In order to prove the admissibility of this operation we need a more expressive variable rule
that builds in the action of 2-cells. The first iteration (\ruleref{tm/var/counit}) required that
the lock and the variable annotation were an exact match. We relax this requirement by allowing
for a mediating 2-cell:
\begin{mathpar}
  \inferH{tm/var/combined}{
    \mu,\nu : n \to m\\
    \alpha : \mu \To \nu
  }{
    \IsTm[\LockCxV{\ECxV{\Gamma}{x}{A}}<\nu>]{x^\alpha}{A^\alpha}<n>
  }
\end{mathpar}

The superscript in $x^\alpha$ is now part of the syntax: each variable must be annotated with the
2-cell that `unlocks' it and enables its occurrence, though we will still write $x$ to mean
$x^{1_\mu}$.  The final form of the variable rule, which appears as \ruleref{tm/var} in
Figure~\ref{fig:towards-mtt:modal-rules}, is only a slight generalization of this last rule: it allows the variable to
occur at positions other than the very front of the context. In fact, \ruleref{tm/var} can be
reduced to \ruleref{tm/var/combined} by using weakening to remove variables to the right of $x$, and
then invoking functoriality to fuse all the locks to the right of $x$ into a single one with
modality $\Locks{\Gamma_1}$.

\paragraph{The full elimination rule}

Recall that the elimination rule for a single modality allowed us to plug in a term of type
$\Modify{A}$ for an assumption $\DeclVar{x}{A}<\mu>$. Some additional generality is needed to cover
the case where the motive $\IsTy[\DeclVar{x}{\Modify{A}}<\nu>]{B}[]$ depends on $x$ under a modality
$\nu \neq 1$. This is where the composition of modalities in $\Mode$ comes in handy: our new rule
will use it to absorb $\nu$ by replacing the assumption $\DeclVar{x}{\Modify{A}}<\nu>$ with
$\DeclVar{x}{A}<\nu \circ \mu>$.

The new rule, \ruleref{tm/modal-elim}, is given in Figure~\ref{fig:towards-mtt:modal-rules}. The simpler
rule may be recovered by setting $\nu \defeq 1$. In this simpler case, we will suppress the
subscript $1$ on $\mathsf{let}$, just as in \ruleref{tm/modal-elim/single-modality}.  However,
many natural examples require eliminations where $\nu \neq 1$. For instance, in
Section~\ref{sec:programming-in-mtt} we show that $\Modify[\nu \circ \mu]{A} \Equiv
\Modify[\nu]{\Modify{A}}$. The function from the right-hand side to the left crucially depends on
the ability to pattern-match on a variable $\DeclVar{x}{\Modify{A}}<\nu>$, which requires the
stronger \ruleref{tm/modal-elim}.



\paragraph{Definitional equality in \MTT{}}

A perennial problem in type theory is that of deciding where the boundary between those equalities
that are \emph{provable} in the system (e.g. using various forms of induction), and those that are
\emph{definitional}, i.e. hold by fiat. While we have simply followed standard practices in the $\MLTT$
connectives at each mode, the situation is somewhat more complicated regarding modal types. On the
one hand, we have the expected $\beta$-rule \ruleref{tm/modal-beta}: see
Figure~\ref{fig:towards-mtt:modal-rules}. On the other hand, we do not include any definitional
$\eta$-rules: as the eliminator is a \emph{positive} pattern-matching construct, the proper
$\eta$-rule would need \emph{commuting conversions}, which would enormously complicate the
metatheory.

\paragraph{Notational conventions}

In the rest of the paper we shall make use of the following notational conventions.

\begin{nota}
  When opening a modal term under the modality $1$ we will suppress the $1$ in the $\textsf{let}_1$
  part of the term, and write $\Open{M}{N}[1]$ instead.
\end{nota}


\begin{nota}
  As remarked before, Coquand-style universes do not require the introduction of codes that
  represent various types in the universe, for they are definable. Nevertheless, in examples we will
  often suppress both $\Dec{-}$ and $\Enc{-}$, and in some straightfoward cases even elide the
  coercion $\TyLift{-}$. This not only makes our terms more perspicuous, but can also be formally
  justified by an \emph{elaboration procedure} which inserts the missing isomorphisms and coercions
  when needed.
\end{nota}


\section{Programming with Modalities}
\label{sec:programming-in-mtt}

In this section we show how \MTT{} can be used to program and reason with modalities. We we identify
a handful of basic modal combinators which demonstrate the behaviour of our modal types. Then, in
Section~\ref{sec:programming-in-mtt:comonads} we use them to present a type theory featuring an
idempotent comonad with almost no additional effort.

\subsection{Modal Combinators}
\label{sec:programming-in-mtt:combinators}

We first show how each 2-cell $\alpha : \mu \To \nu$ with $\mu, \nu : n \to m$ induces a natural
transformation $\Modify{-} \to \Modify[\nu]{-}$. We call the components of this natural
transformation \emph{coercions}. Given $\IsTy[\LockCxV{\Gamma}]{A}[1]$, define
\[
  \arraycolsep=1.4pt
  \begin{array}{lcl}
    \Coe{-}{\mu}{\nu}{\alpha} & : & \Modify{A} \to \Modify[\nu]{A^\alpha}\\
    \Coe{x}{\mu}{\nu}{\alpha} & \defeq & \Open{x}[z]{\MkBox[\nu]{z^\alpha}}<\mu>[1]
  \end{array}
\]
The heart of this combinator is a use of the rule \ruleref{tm/var}. This operation completes the
correspondence sketched in Section~\ref{sec:intro}: objects of $\Mode$ correspond to modes, morphisms to
modalities, and 2-cells to coercions.

Additionally, the assignment $\mu \mapsto \Modify{-}$ is \emph{functorial}. Unlike the action of
locks, this functoriality is not definitional, but only a type-theoretic \emph{equivalence}
\cite[\S 4]{hottbook}. Fixing $\nu : o \to n$, $\mu : n \to m$, and $\IsTy[\LockCxV{\Gamma}<\mu
\circ \nu>]{A}[1]$, we let
\[
  \arraycolsep=1.4pt
  \begin{array}{lcl}
    \multicolumn{3}{l}{\MComp{}{\mu}{\nu} : \Modify{\Modify[\nu]{A}} \to \Modify[\mu \circ \nu]{A}}\\
    \MComp{x}{\mu}{\nu} & \defeq & \Open{x}[x_0]{}<\mu>[{\ \ }]\\
                        && \Open{x_0}[x_1]{}<\nu>[\mu]\\
                        && \MkBox[\mu \circ \nu]{x_1}
  \end{array}
\]
and
\[
  \arraycolsep=1.4pt
  \begin{array}{lcl}
    \multicolumn{3}{l}{\MComp*{}{\mu}{\nu} : \Modify[\mu \circ \nu]{A} \to \Modify{\Modify[\nu]{A}}}\\
    \MComp*{x}{\mu}{\nu} & \defeq & \Open{x}[x_0]{}<\mu \circ \nu>[1] \MkBox[\mu]{\MkBox[\nu]{x_0}}
  \end{array}
\]
We elide the 2-cell annotations on variables, as they are all identities (\ie~we only need
\ruleref{tm/var/counit}). Even in this small example the context equations for locks are
essential: for $\Modify{\Modify[\nu]{A}}$ to be a valid type we need that
$\LockCxV{\LockCxV{\Gamma}}<\nu> = \LockCxV{\Gamma}<\mu \circ \nu>$, which is ensured by
\ruleref{cx/compose}. Furthermore, observe that $\MComp{}{\mu}{\nu}$ crucially relies on the
multimodal elimination rule \ruleref{tm/modal-elim}: we must pattern-match on $x_0$, which is under
$\mu$ in the context.

Similarly, fixing $\IsTy{A}[1]$ we have
\begin{alignat*}{2}
  &\Triv{-} : \Modify[1]{A} \to A           \qquad &&\Triv*{-} : A \to \Modify[1]{A} \\
  &\Triv{x} \defeq \Open{x}[x_0]{x_0}<1>[1] \qquad &&\Triv*{x} \defeq \MkBox[1]{x}
\end{alignat*}
In both cases, these combinators are only propositionally inverse. For example, the proof for one direction of the composition combinator is
\[
  \arraycolsep=1.4pt
  \begin{array}{lcl}
    \_ & : & (x : \Modify{\Modify[\nu]{A}}) \to \Id{\Modify{\Modify[\nu]{A}}}{x}{\MComp*{\MComp{x}{\mu}{\nu}}{\mu}{\nu}}\\
    \_ & \defeq & \Lam[x]{}\Open{x}[x_0]{}<\mu>[1] \Open{x_0}[x_1]{}<\nu>[\mu] \Refl{\MkBox{\MkBox[\nu]{x}}}
  \end{array}
\]
This is in many ways a typical example: we use the modal elimination rule to induct on a
modally-typed term, which reduces it to a term of the form $\MkBox[]{-}$. This is just enough to
make various terms compute, and the result then follows by reflexivity.

As a final example, we will show that each modal type satisfies \emph{the \textsf{K}
axiom},\footnote{Not to be confused with Streicher's \emph{axiom K}.} a central axiom of
Kripke-style modal logics. This combinator will be immediately recognizable to functional
programmers: it is the term that witnesses that $\Modify{-}$ is an \emph{applicative
functor}~\cite{mcbride:2008}.
\[
  \arraycolsep=1.4pt
  \begin{array}{lcl}
    \ZApp{-}{-}{\mu} & : & \Modify{A \to B} \to \Modify{A} \to \Modify{B}\\
    \ZApp{f}{a}{\mu} & \defeq & \Open{f}[f_0]{\Open{a}[a_0]{\MkBox{\App{f_0}{a_0}}}<\mu>[1]}<\mu>[1]
  \end{array}
\]
We can also define a stronger combinator which corresponds to a dependent form of the Kripke axiom
\cite{clouston:dra:2018} along the same lines. As it generalizes $\ZApp{}{}{\mu}$ to dependent
products, this operation has precisely the same implementation but a more complex type:
\[
  \Modify{(x : A) \to B} \to{} (x_0 : \Modify{A}) \to (\Open{x_0}[x]{\Modify{B}}<\mu>[1])
\]
In order to ensure that $\Modify{B}$ is well-typed, the context must contain $\DeclVar{x}{A}$, but
instead we have bound $\DeclVar{x_0}{\Modify{A}}<1>$. We correct this mismatch by eliminating $x_0$
and binding the result to $x$.

\subsection{Idempotent Comonads in \MTT{}}
\label{sec:programming-in-mtt:comonads}
\NewDocumentCommand{\IComonadMode}{}{\mathcal{M}_{\mathsf{ic}}}

A great deal of prior work in modal type theory has focused on \emph{comonads}
\cite{pfenning-davies:2001,de-paiva:2015,shulman:2018,gratzer:2019}, and in particular
\emph{idempotent} comonads. \cite[Theorem 4.1]{shulman:2018} has shown that such modalities
necessitate changes to the judgmental structure, as the only idempotent comonads that are internally
definable in type theory are of the form $- \times U$ for some proposition $U$. In this section we
present a mode theory for idempotent comonads, and prove that the resulting type theory internally
satisfies the expected equations. In fact, we only use the combinators of the previous section.


We define the mode theory $\IComonadMode$ to consist of a single mode $m$, and a single non-trivial
morphism $\mu : m \to m$. We will enforce idempotence by setting $\mu \circ \mu = \mu$.  Finally, in
order to induce a morphism $\Modify{A} \to A$ we include a unique non-trivial 2-cell
$\epsilon : \mu \To 1$. In order to ensure that this 2-cell to be unique, we add equations such as
$\epsilon \Whisker 1_\mu = 1_\mu \Whisker \epsilon : \mu \circ \mu \To \mu$, where $\Whisker$
denotes the horizontal composition of 2-cells. The resulting mode theory is a 2-category, albeit a
very simple one: it is in fact only a \emph{poset-enriched} category.

We can show that $\Modify{A}$ is a comonad by defining the expected operations using the combinators
of Section~\ref{sec:programming-in-mtt:combinators}:
\begin{alignat*}{2}
  &\Dup_A : \Modify{A} \to \Modify{\Modify{A}} \qquad &&\Unbox_A : \Modify{A} \to A^\epsilon \\
  &\Dup_A \defeq \MComp*{}{\mu}{\mu}           \qquad &&\Unbox_A \defeq \Triv*{-} \circ \Coe{}{\mu}{1}{\epsilon}
\end{alignat*}
We must also show that $\Dup_A$ and $\Unbox_A$ satisfy the como\-nad laws, but that automatically
follows from general facts pertaining to $\mathbf{coe}$ and $\mathbf{comp}$.\footnote{In
  particular, our modal combinators satisfy a variant of the \emph{interchange law} of a 2-category.}
This is indicative of the benefits of using \MTT{}: every general result about it also applies to
this instance, including the canonicity theorem of Section~\ref{sec:semantics}.


\section{Algebraic Syntax}
\label{sec:algebraic-mtt}

Until this point we have presented a curated, high-level view of \MTT{}, and we have avoided any
discussion of its metatheory. Yet, syntactic matters can be quite complex, and have historically
proven to be sticking points for modal type theory. While such details are not necessary for the
casual reader, it is essential to validate that \MTT{} is syntactically well-behaved, enjoying \eg{}
a substitution principle. The aim of this section is to provide a setting for this study: we
introduce the formal counterpart of \MTT{}, which is given as a \emph{generalized algebraic theory}
(GAT)~\cite{cartmell:1978,kaposi:qiits:2019}.

Historically, GATs were used in the semantics of type theory, but modern techniques show that they
are also useful in the analysis of syntax. For example, recasting \MTT{} as a GAT naturally leads us
to include \emph{explicit substitutions}~\cite{curien:1990, martin-lof:1992, granstrom:2011} in the
syntax. Thus, substitution in \MTT{} is not a metatheoretic operation on raw terms, but a syntactic
operation within the theory. This presentation helps us carefully state the equations that govern
substitutions and their interaction with type formers. We consequently obtain an elegant
\emph{substitution calculus}, which can often be quite complex for modal type theories.

This approach proffers a number of technical advantages. Amongst other things, the theorems proven in the aforementioned works on GATs imply the following points:
\begin{enumerate}
  \item
    We absolve ourselves from having to prove tedious syntactic metatheorems, e.g. admissibility of
    substitution.
  \item
    We automatically obtain a notion of \emph{model} of our theory, which is given in entirely
    algebraic terms.
  \item
    We obtain a notion of \emph{homomorphism of models}. (NB that this notion is rather
    \emph{strict} and not fit for every purpose.)
  \item
    In an equally automatic fashion, we obtain an \emph{initial model} for the algebraic theory,
    which we consider as our main formal object of study.
  \item
    The unique morphism of models from this initial model to any other is the \emph{semantic
    interpretation map}. We then have no need to explicitly describe these semantic maps and prove
    that they are well-defined on derivations, as done \eg~by \cite{hofmann:1997}.
\end{enumerate}

While this approach is straightforward and uncluttered, some readers might object to the lack of a
more traditional formulation, \eg{} a \emph{named syntax} with variables and a metatheoretic
substitution operation, like the one we informally presented in Section~\ref{sec:towards-mtt}. We believe
that it is indeed possible to define such a syntax and systematically show how to \emph{elaborate}
its terms to the algebraic syntax.

However, such a named syntax would not be directly suitable for implementation: for that purpose we
ought to develop an entirely different \emph{algorithmic syntax}. We believe that such a syntax can
be constructed as an extension of existing \emph{bidirectional} presentations of type
theory~\cite{coquand:1996,pierce:2000} as has been done for existing modal
calculi~\cite{gratzer:2019}. Such a bidirectional presentation would occupy a midpoint between the
maximally annotated algebraic syntax we present here, and the more typical named syntax of
Section~\ref{sec:towards-mtt}: it would contain only a select few annotations to ensure the decidability
of typechecking, yet maintain readability. The development of such a syntax is a substantial
undertaking that requires a proof of normalization, and is orthogonal to the foundational
metatheoretic results that we seek to develop here. We thus refrain from developing it, and instead
work directly with the GAT.

\subsection{Sorts}

We begin by defining the different \emph{sorts} (contexts, types, terms, \etc{}) that constitute our
type theory. In order to support multiple modes, our sorts will be parameterized in modes. Thus,
rather than having a single sort of types, we will have a sort of types \emph{at mode $m \in
\Mode$}, and likewise for contexts at mode $m$, terms at mode $m$, \etc{}

Moreover, we take care to index our types by \emph{levels}. The reason for doing so was discussed in
Section~\ref{paragraph:universes}: we seek to introduce a hierarchy of sizes, which we can then use to
introduce universes \`a la \cite{coquand:2013}. We stratify our types in two levels, drawn from the
set $\Lvl = \{0, 1\}$. There are no technical obstacles on the way to a richer hierarchy, but two
levels suffice for our purposes: we aim to divide our types into \emph{small types} (\ie~those that
can be reified in a universe) and \emph{large types} (which also include the universe itself). In
order to enforce cumulativity we will also include an explicit \emph{coercion} operator, which
includes small types into large types.

The levelled approach raises an obvious question: on which level should we admit terms? We could
follow the approach of \cite{sterling:2019} in allowing terms at both, but this requires the
introduction of term-level coercions, which then require equations relating term formers at
different levels. Thus, for the sake of simplicity we will only allow the formation of terms at
large types. Similarly, we will only allow the extension of a context by a large type.

\MTT{} has four families of sorts, which are introduced by the following rules:

\begin{mathpar}
  \inferrule{
    m : \Mode
  }{
    \Sort{\Cx{m}}
  }
  \and
  \inferrule{
    \ell : \Lvl\\
    m : \Mode\\
    \Gamma : \Cx{m}
  }{
    \Sort{\Ty{m}(\Gamma)}
  }
  \and
  \inferrule{
    m : \Mode\\
    \Gamma : \Cx{m}\\
    A : \Ty{m}[1](\Gamma)
  }{
    \Sort{\Tm{m}(\Gamma, A)}
  }
  \and
  \inferrule{
    m : \Mode\\
    \Gamma,\Delta : \Cx{m}
  }{
    \Sort{\Sub{m}(\Gamma, \Delta)}
  }
\end{mathpar}
In the interest of clarity we will use the following shorthands:
\begin{align*}
  \IsCx{\Gamma} &\defeq \Gamma : \Cx{m} &
  \IsTy{A} &\defeq A : \Ty{m}(\Gamma) \\
  \IsTm{M}{A} &\defeq M : \Tm{m}(\Gamma, A) &
  \IsSb{\delta}{\Delta} &\defeq \delta : \Sub{m}(\Gamma, \Delta)
\end{align*}
Even though we will use this more familiar notation, we will take no prisoners in terms of rigour: we will carefully avoid overloading and ambiguity, and we will enforce \emph{presupposition}.

\subsection{Judgments}

We shall now introduce the type theory itself by writing down the constructors and equalities of its GAT. In the interest of brevity, we elide a number of standard rules, including
\begin{itemize}
  \item the congruence rules pushing substitutions inside terms and types;
  \item the congruence rules pushing explicit lifts inside of type formers;
  \item the associativity, unit, and weakening laws for the explicit substitutions;
  \item the $\beta$ laws for $\Pi$, $\Sigma$, $\Bool$ and $\textsf{Id}$;
  \item the $\eta$ laws for $\Pi$ and $\Sigma$;
\end{itemize}

The specification of the GAT is given in Figures
\ref{fig:mtt-gat-contexts}--\ref{fig:mtt-gat-eqsubst}. As the judgments are defined in a mutually
recursive manner, the division of the rules between different figures is merely presentational.
Given $\IsSb[\Delta]{\gamma}{\Gamma}$ and $\IsTy[\LockCx{\Gamma}]{A}$ we write
\[
  \IsSb[\ECx{\Delta}{\Sb{A}{\LockSb{\gamma}}}]
    {\gamma^{+} \defeq \ESb{(\gamma \circ \Wk)}{\Var{0}}}
    {\ECx{\Gamma}{A}}
\]
for the `weakened' substitution.

\begin{figure}
  \small
\begin{mathpar}
  \JdgFrame{\IsCx{\Gamma}}\\
  \inferrule{ }{
    \IsCx{\Emp}
  }
  \and
  \inferrule{
    \IsCx{\Gamma} \\
    \mu : \Hom[\Mode]{n}{m}
  }{
    \IsCx{\LockCx{\Gamma}}<n>
  }
  \and
  \inferrule{
    \IsCx{\Gamma}<m>\\
    \mu : \Hom[\Mode]{n}{m}\\
    \IsTy[\LockCx{\Gamma}]{A}[1]<n>
  }{
    \IsCx{\ECx{\Gamma}{A}}<m>
  }
  \and
  \inferrule{
    \IsCx{\Gamma} \\
    \nu : \Hom[\Mode]{o}{n} \\
    \mu : \Hom[\Mode]{n}{m}
  }{
    \EqCx{\LockCx{\LockCx{\Gamma}}<\nu>}{\LockCx{\Gamma}<\mu \circ \nu>}<o>
  }
  \and
  \inferrule{
    \IsCx{\Gamma}
  }{
    \EqCx{\LockCx{\Gamma}<1>}{\Gamma}
  }
  \end{mathpar}
  \caption{\MTT{} Contexts}
  \label{fig:mtt-gat-contexts}
\end{figure}

\begin{figure}[p]
  \small
  \begin{mathpar}
  \JdgFrame{\IsTy{A}}
  \\
  \inferrule{
    \IsCx{\Gamma}
  }{
    \IsTy{\Bool}
  }
  \and
  \inferrule{
    \IsCx{\Gamma}
  }{
    \IsTy{\Uni}[1]
  }
  \and
  \inferrule{
    \IsCx{\Gamma}\\
    \IsTm{M}{\Uni}
  }{
    \IsTy{\Dec{M}}[0]
  }
  \and
  \inferrule{
    \ell \le \ell'\\
    \IsCx{\Gamma}\\
    \IsTy{A}
  }{
    \IsTy{\TyLift{A}}[\ell']
  }
  \and
  \inferrule{
    \IsCx{\Gamma}\\
    \IsTy{A}\\
    \IsTm{M,N}{\TyLift{A}}
  }{
    \IsTy{\Id{A}{M}{N}}
  }
  \and
  \inferrule{
    \IsCx{\Gamma}\\
    \mu : \Hom[\Mode]{n}{m}\\
    \IsTy[\LockCx{\Gamma}]{A}<n>\\
  }{
    \IsTy{\Modify{A}}
  }
  \and
  \inferrule{
    \mu : \Hom[\Mode]{n}{m}\\
    \IsCx{\Gamma}\\
    \IsTy[\LockCx{\Gamma}]{A}<n>\\
    \IsTy[\ECx{\Gamma}{\TyLift{A}}]{B}
  }{
    \IsTy{\Fn{A}{B}}
  }
  \and
  \inferrule{
    \IsCx{\Gamma}\\
    \IsTy[\Gamma]{A}\\
    \IsTy[\ECx{\Gamma}{\TyLift{A}}<1>]{B}
  }{
    \IsTy{\Prod{A}{B}}
  }
  \and
  \inferrule{
    \IsCx{\Gamma,\Delta}\\
    \IsTy[\Delta]{A}\\
    \IsSb{\delta}{\Delta}
  }{
    \IsTy{\Sb{A}{\delta}}
  }
  \end{mathpar}
  \caption{\MTT{} Types}
   \label{ref:mtt-gat-types}
\end{figure}

\begin{figure}[p]
  \small
\begin{mathpar}
  \JdgFrame{\IsSb{\delta}{\Delta}}
  \\
  \and
  \inferrule{
    \IsCx{\Gamma}
  }{
    \IsSb{\Emp}{\Emp}
  }
  \and
  \inferrule{
    \IsCx{\Gamma}<n>\\
    \mu : \Hom[\Mode]{n}{m}\\
    \IsTy[\LockCx{\Gamma}]{A}[1]<n>
  }{
    \IsSb[\ECx{\Gamma}{A}]{\Wk}{\Gamma}
  }
  \and
  \inferrule{
    \IsCx{\Gamma}
  }{
    \IsSb[\Gamma]{\ISb}{\Gamma}
  }
  \and
  \inferrule{
    \IsCx{\Gamma, \Delta,\Xi}\\
    \IsSb[\Gamma]{\gamma}{\Delta}\\
    \IsSb[\Delta]{\delta}{\Xi}
  }{
    \IsSb[\Gamma]{\delta \circ \gamma}{\Xi}
  }
  \and
  \inferrule{
    \IsCx{\Gamma, \Delta}\\
    \mu : \Hom[\Mode]{n}{m}\\
    \IsSb{\delta}{\Delta}
  }{
    \IsSb[\LockCx{\Gamma}]{\LockSb{\delta}}{\LockCx{\Delta}}<n>
  }
  \and
  \inferrule{
    \IsCx{\Gamma} \\
    \mu, \nu : \Hom[\Mode]{n}{m} \\
    \alpha : \nu \To \mu
  }{
    \IsSb[\LockCx{\Gamma}]{\Key{\alpha}{\Gamma}}{\LockCx{\Gamma}<\nu>}<n>
  }
  \and
  \inferrule{
    \mu : \Hom[\Mode]{n}{m} \\
    \IsCx{\Gamma, \Delta} \\
    \IsSb{\delta}{\Delta}<m> \\
    \IsTy[\LockCx{\Delta}]{A}[1]<n> \\
    \IsTm[\LockCx{\Gamma}]{M}{\Sb{A}{\LockSb{\delta}}}<n>
  }{
    \IsSb{\ESb{\delta}{M}}{\ECx{\Delta}{A}}
  }
  \end{mathpar}
  \caption{\MTT{} Substitutions}
   \label{fig:mtt-gat-subst}
\end{figure}

\begin{figure}[p]
  \small
  \begin{mathpar}
  \JdgFrame{\IsTm{M}{A}}
  \\
  \inferrule{
    \mu : \Hom[\Mode]{n}{m}\\
    \IsCx{\Gamma}\\
    \IsTy[\LockCx{\Gamma}]{A}[1]<n>
  }{
    \IsTm[\LockCx{\ECx{\Gamma}{A}}]{\Var{0}}{\Sb{A}{\LockSb{\Wk}}}<n>
  }
  \and
  \inferrule{
    \IsCx{\Gamma}
  }{
    \IsTm{\True,\False}{\Bool}
  }
  \and
  \inferrule{
    \IsCx{\Gamma}\\
    \IsTy[\ECx{\Gamma}{\Bool}<1>]{A}[1]\\
    \IsTm{M_t}{\Sb{A}{\ESb{\ISb}{\True}}}\\
    \IsTm{M_f}{\Sb{A}{\ESb{\ISb}{\False}}}\\
    \IsTm{N}{\Bool}
  }{
    \IsTm{\BoolRec{A}{M_t}{M_f}{N}}{\Sb{A}{\ESb{\ISb}{N}}}
  }
  \and
  \inferrule{
    \IsCx{\Gamma}\\
    \IsTy{A}[0]
  }{
    \IsTm{\Enc{A}}{\Uni}
  }
  \and
  \inferrule{
    \IsCx{\Gamma} \\
    \IsTy{A}[1]\\
    \IsTm{M}{A}
  }{
    \IsTm{\Refl{M}}{\Id{A}{M}{M}}
  }
  \and
  \inferrule{
    \IsCx{\Gamma}\\
    \IsTy{A}[1]\\
    \IsTy[\ECx{\ECx{\ECx{\Gamma}{A}<1>}{\Sb{A}{\Wk}}<1>}{\Id{\Sb{A}{\Wk[2]}}{\Var{1}}{\Var{0}}}<1>]{B}[1]\\
    \IsTm[\ECx{\Gamma}{A}<1>]{M}{\Sb{B}{\ESb{\ESb{\ESb{\Wk}{\Var{0}}}{\Var{0}}}{\Refl{\Var{0}}}}}\\
    \IsTm{N_0,N_1}{A}\\
    \IsTm{P}{\Id{A}{N_0}{N_1}}
  }{
    \IsTm{\IdRec{B}{M}{P}}{\Sb{B}{\ESb{\ESb{\ESb{\ISb}{N_0}}{N_1}}{P}}}
  }
  \and
  \inferrule{
    \IsCx{\Gamma}\\
    \mu : \Hom[\Mode]{n}{m}\\
    \IsTy[\LockCx{\Gamma}]{A}[1]<n>\\
    \IsTm[\LockCx{\Gamma}]{M}{A}<n>
  }{
    \IsTm{\MkBox{M}}{\Modify{A}}
  }
  \and
  \inferrule{
    \nu : \Hom[\Mode]{o}{n} \\
    \mu : \Hom[\Mode]{n}{m} \\
    \IsCx{\Gamma} \\
    \IsTy[\LockCx{\LockCx{\Gamma}}<\nu>]{A}[1]<o> \\
    \IsTm[\LockCx{\Gamma}]{M_0}{\Modify[\nu]{A}}<n> \\
    \IsTy[\ECx{\Gamma}{\Modify[\nu]{A}}]{B}[1] \\
    \IsTm[\ECx{\Gamma}{A}<\mu \circ \nu>]{M_1}{\Sb{B}{\ESb{\Wk}{\MkBox[\nu]{\Var{0}}}}}
  }{
    \IsTm{\Open{M_0}{M_1}<\nu>[\mu]}{\Sb{B}{\ESb{\ISb}{M_0}}}
  }
  \and
  \inferrule{
    \mu : \Hom[\Mode]{n}{m}\\
    \IsCx{\Gamma}\\
    \IsTy[\LockCx{\Gamma}]{A}[1]<n>\\
    \IsTy[\ECx{\Gamma}{A}<\mu>]{B}[1]\\
    \IsTm[\ECx{\Gamma}{A}<\mu>]{M}{B}
  }{
    \IsTm{\Lam{M}}{\Fn{A}{B}}
  }
  \and
  \inferrule{
    \mu : \Hom[\Mode]{n}{m}\\
    \IsCx{\Gamma}\\
    \IsTy[\LockCx{\Gamma}]{A}[1]<n>\\
    \IsTy[\ECx{\Gamma}{A}<\mu>]{B}[1]\\
    \IsTm{M_0}{\Fn{A}{B}}\\
    \IsTm[\LockCx{\Gamma}]{M_1}{A}<n>
  }{
    \IsTm{\App{M_0}{M_1}}{\Sb{B}{\ESb{\ISb}{M_1}}}
  }
  \and
  \inferrule{
    \IsCx{\Gamma}\\
    \IsTy[\Gamma]{A}[1]\\
    \IsTy[\ECx{\Gamma}{A}<1>]{B}[1]\\
    \IsTm[\Gamma]{M_0}{A}\\
    \IsTm{M_1}{\Sb{B}{\ESb{\ISb}{M_0}}}
  }{
    \IsTm{\Pair{M_0}{M_1}}{\Prod{A}{B}}
  }
  \and
  \inferrule{
    \IsCx{\Gamma}\\
    \IsTy[\Gamma]{A}[1]\\
    \IsTy[\ECx{\Gamma}{A}<1>]{B}[1]\\
    \IsTm{M}{\Prod{A}{B}}
  }{
    \IsTm[\Gamma]{\Proj[0]{M}}{A}\\
    \IsTm{\Proj[1]{M}}{\Sb{B}{\ESb{\ISb}{\Proj[0]{M}}}}
  }
  \and
  \inferrule{
    \IsCx{\Gamma,\Delta}\\
    \IsTy[\Delta]{A}[1]\\
    \IsSb{\delta}{\Delta}\\
    \IsTm[\Delta]{M}{A}
  }{
    \IsTm{\Sb{M}{\delta}}{\Sb{A}{\delta}}
  }
  \end{mathpar}
  \caption{\MTT{} Terms}
  \label{fig:mtt-gat-terms}
\end{figure}

\begin{figure}[p]
  \small
  \begin{mathpar}
  \JdgFrame{\EqTy{A}{B}}
  \\
  \inferrule{
    \mu : \Hom[\Mode]{n}{m}\\
    \IsCx{\Gamma, \Delta}\\
    \IsSb[\Gamma]{\delta}{\Delta}\\
    \IsTy[\LockCx{\Delta}]{A}<n>
  }{
    \EqTy{\Sb{\Modify{A}}{\delta}}{\Modify{\Sb{A}{\LockSb{\delta}}}}
  }
  \and
  \inferrule{
    \mu : \Hom[\Mode]{n}{m}\\
    \IsCx{\Gamma, \Delta}\\
    \IsSb[\Gamma]{\delta}{\Delta}\\
    \IsTy[\LockCx{\Gamma}]{A}<n>\\
    \IsTy[\ECx{\Gamma}{\TyLift{A}}]{B}
  }{
    \EqTy{\Sb{(\Fn{A}{B})}{\delta}}
         {\Fn{\Sb{A}{\LockSb{\delta}}}
             {\Sb{B}{\delta^{+}}}
         }
  }
  \and
  \inferrule{
    \IsTy{A}
  }{
    \EqTy{\TyLift{A}}{A}
  }
  \and
  \inferrule{
    \ell_0 \le \ell_1 \le \ell_2\\
    \IsTy{A}[\ell_0]
  }{
    \EqTy{\TyLift{\TyLift{A}}}{\TyLift{A}}[\ell_2]
  }
  \and
  \inferrule{
    \ell \le \ell'\\
    \mu \in \Hom[\Mode]{n}{m}\\
    \IsCx{\Gamma}\\
    \IsTy[\LockCx{\Gamma}]{A}<n>\\
    \IsTy[\ECx{\Gamma}{\TyLift{A}}]{B}
  }{
    \EqTy{\TyLift{(\Fn{A}{B})}}{\Fn{\TyLift{A}}{\TyLift{B}}}[\ell']
  }
  \and
  \inferrule{
    \IsCx{\Gamma}\\
    \IsTy{A}[0]
  }{
    \EqTy{\Dec{\Enc{A}}}{A}[0]
  }
  \and
  \inferrule{
    \IsCx{\Gamma, \Delta}\\
    \IsSb[\Gamma]{\delta}{\Delta}
  }{
    \EqTy{\Sb{\Uni}{\delta}}{\Uni}[1]
  }
  \end{mathpar}
  \caption{Equality of Types}
  \label{fig:mtt-gat-eqtypes}
\end{figure}

\begin{figure}[p]
  \small
  \begin{mathpar}
  \JdgFrame{\EqTm{M}{N}{A}}
  \\
  \inferrule{
    \mu : \Hom[\Mode]{n}{m}\\
    \IsSb{\delta}{\Delta}\\
    \IsTy[\LockCx{\Delta}]{A}[1]<n>\\
    \IsTm[\LockCx{\Gamma}]{M}{\Sb{A}{\LockSb{\delta}}}<n>
  }{
    \EqTm[\LockCx{\Gamma}]{\Sb{\Var{0}}{\LockSb{(\ESb{\delta}{M})}}}{M}{\Sb{A}{\LockSb{\delta}}}<n>
  }
  \and
  \inferrule{
    \IsCx{\Gamma}\\
    \IsTm{M}{\Uni}
  }{
    \EqTm{\Enc{\Dec{M}}}{M}{\Uni}
  }
  \and
  \inferrule{
    \nu : \Hom[\Mode]{o}{n}\\
    \mu : \Hom[\Mode]{n}{m}\\
    \IsCx{\Gamma}\\
    \IsTy[\LockCx{\LockCx{\Gamma}}<\nu>]{A}[1]<o>\\
    \IsTm[\LockCx{\LockCx{\Gamma}}<\nu>]{M_0}{A}<o>\\
    \IsTy[\ECx{\Gamma}{\Modify[\nu]{A}}]{B}[1]\\
    \IsTm[\ECx{\Gamma}{A}<\mu \circ \nu>]{M_1}{\Sb{B}{\ESb{\Wk}{\MkBox{\Var{0}}}}}
  }{
    \EqTm{
      \Open{\MkBox[\nu]{M_0}}{M_1}<\nu>[\mu]
    }{
      \Sb{M_1}{\ESb{\ISb}{M_0}}
    }{
      \Sb{B}{\ESb{\ISb}{\MkBox[\nu]{M_0}}}
    }
  }
  \and
  \inferrule{
    \mu : \Hom[\Mode]{n}{m}\\
    \IsCx{\Delta,\Gamma}\\
    \IsSb[\Gamma]{\delta}{\Delta}\\
    \IsTy[\LockCx{\Delta}]{A}[1]<n>\\
    \IsTm[\LockCx{\Delta}]{M}{A}<n>
  }{
    \EqTm{\Sb{\MkBox{M}}{\delta}}{\MkBox{\Sb{M}{\LockSb{\delta}}}}{\Modify{\Sb{A}{\LockSb{\delta}}}}
  }
  \and
  \inferrule{
    \nu : \Hom[\Mode]{o}{n}\\
    \mu : \Hom[\Mode]{n}{m}\\
    \IsCx{\Gamma,\Delta}\\
    \IsSb[\Gamma]{\delta}{\Delta}\\
    \IsTy[\LockCx{\LockCx{\Delta}}<\nu>]{A}[1]<o>\\
    \IsTm[\LockCx{\Delta}]{M_0}{\Modify[\nu]{A}}<n>\\
    \IsTy[\ECx{\Delta}{\Modify[\nu]{A}}]{B}[1]\\
    \IsTm[\ECx{\Delta}{A}<\mu \circ \nu>]{M_1}{\Sb{B}{\ESb{\Wk}{\MkBox{\Var{0}}}}}
  }{
    \EqTm{
      \Sb{\parens{\Open{M_0}{M_1}<\nu>[\mu]}}{\delta}
    }{
      \Open{\Sb{M_0}{\LockCx{\delta}}}{\Sb{M_1}{\delta^{+}}}<\nu>[\mu] 
    }{
      \Sb{B}{\ESb{\delta}{\Sb{M_0}{\LockCx{\delta}}}}
    }
  }
  \end{mathpar}
  \caption{Equality of Terms}
  \label{fig:mtt-gat-eqterms}
\end{figure}

\begin{figure}
  \small
  \begin{mathpar}
  \JdgFrame{\EqSb{\gamma}{\delta}{\Delta}}
  \\
  \and
  \inferrule{
    \IsCx{\Gamma_0,\Gamma_1}<n>\\
    \IsCx{\Delta}<n>\\
    \IsTy[\LockCx{\Delta}]{A}[1]<m>\\
    \mu : \Hom[\Mode]{m}{n}\\
    \IsSb[\Gamma_0]{\gamma}{\Gamma_1}<n>\\
    \IsSb[\Gamma_1]{\delta}{\Delta}<n>\\
    \IsTm[\LockCx{\Gamma_1}]{M}{\Sb{A}{\LockSb{\delta}}}<m>
  }{
    \EqSb[\Gamma_0]{
      (\ESb{\delta}{M}) \circ \gamma
    }{
      \ESb{(\delta \circ \gamma)}{\Sb{M}{\LockSb{\gamma}}}
    }{
      \ECx{\Delta}{A}
    }<n>
  }
  \and
  \inferrule{
    \IsCx{\Gamma,\Delta}<o>\\
    \mu : \Hom[\Mode]{m}{n}\\
    \nu : \Hom[\Mode]{n}{o}\\
    \IsSb{\delta}{\Delta}
  }{
    \EqSb[\LockCx{\Gamma}<\nu \circ \mu>]{
      \LockSb{\delta}<\nu \circ \mu>
    }{
      \LockSb{\LockSb{\delta}<\nu>}<\mu>
    }{
      \LockCx{\Delta}<\nu \circ \mu>
    }
  }
  \and
  \inferrule{
    \IsCx{\Gamma,\Delta}\\
    \IsSb{\delta}{\Delta}
  }{
    \EqSb{\LockSb{\delta}<1>}{\delta}{\Delta}
  }
  \and
  \inferrule{
    \IsCx{\Gamma}<n>\\
    \mu : \Hom[\Mode]{m}{n}\\
  }{
    \EqSb[\LockCx{\Gamma}]{\LockSb{\ISb}}{\ISb}{\LockCx{\Gamma}}
  }
  \and
  \inferrule{
    \IsCx{\Gamma,\Delta,\Xi}<n>\\
    \mu : \Hom[\Mode]{m}{n}\\
    \IsSb{\delta}{\Delta}<n>\\
    \IsSb[\Delta]{\xi}{\Xi}<n>
  }{
    \EqSb[\LockCx{\Gamma}]{\LockSb{(\xi \circ \delta)}}{\LockSb{\xi} \circ \LockSb{\delta}}{\LockCx{\Xi}}
  }
  \and
  \inferrule{
    \IsCx{\Gamma}<n>\\
    \mu : \Hom[\Mode]{m}{n}
  }{
    \EqSb[\LockCx{\Gamma}]{\ISb}{\Key{1_\mu}{\Gamma}}{\LockCx{\Gamma}}
  }
  \and
  \inferrule{
    \IsCx{\Gamma,\Delta}<n>\\
    \mu,\nu : \Hom[\Mode]{m}{n}\\
    \IsSb{\delta}{\Delta}<n>\\
    \alpha : \nu \To \mu
  }{
    \EqSb[\LockCx{\Gamma}]{
      \Key{\alpha}{\Gamma} \circ \parens{\LockSb{\delta}}
    }{
      \parens{\LockSb{\delta}<\nu>} \circ \Key{\alpha}{\Delta}
    }{
      \LockCx{\Delta}<\nu>
    }
  }
  \and
  \inferrule{
    \IsCx{\Gamma}<m>\\
    \mu_0, \mu_1, \mu_2 : \Hom[\Mode]{n}{m}\\
    \alpha_0 : \mu_0 \To \mu_1 \\
    \alpha_1 : \mu_1 \To \mu_2
  }{
    \EqSb[
      \LockCx{\Gamma}<\mu_2>
    ]{
      \Key{\alpha_1 \circ \alpha_0}{\Gamma}
    }{
      \Key{\alpha_0}{\Gamma} \circ \Key{\alpha_1}{\Gamma}
    }{
      \LockCx{\Gamma}<\mu_0>
    }<n>
  }
  \and
  \inferrule{
    \IsCx{\Gamma} \\
    \nu_0, \nu_1 : \Hom[\Mode]{o}{n} \\
    \mu_0, \mu_1 : \Hom[\Mode]{n}{m} \\
    \beta : \nu_0 \Rightarrow \nu_1 \\
    \alpha : \mu_0 \Rightarrow \mu_1
  }{
    \EqSb[
      \LockCx{\Gamma}<\mu_0 \circ \nu_0>
    ]{
      \Key{\alpha \Whisker \beta}{\Gamma}
    }{
      \LockSb{\Key{\alpha}{\Gamma}}<\nu_1> \circ \Key{\beta}{\LockCx{\Gamma}<\mu_0>}
    }{
      \LockCx{\Gamma}<\mu_1 \circ \nu_1>
    }<o>
  }
  \end{mathpar}
  \caption{Equality of Substitutions}
  \label{fig:mtt-gat-eqsubst}
\end{figure}

\subsection{Discussion}

We record some points on the generalized algebraic theory.

\paragraph{Modal dependent products}

The algebraic presentation of \MTT{} includes a primitive \emph{modal dependent product} type
$\Fn{A}{B}$. This is a combination of the modality $\Modify$ and the ordinary dependent product.
Using a named syntax, it may be understood as
\[
  \FnV{x}{A}{B} \defeq (x_0 : \Modify{A}) \to (\Open{x_0}[x]{B}<\mu>[1])
\]
However, the modal types of \MTT{} do not readily support a definitional $\eta$-equality, so this
definition is not equivalent to the modal dependent product of the GAT. We use the latter because it
is convenient for programming, and also has a natural semantics, which we will present in
Section~\ref{sec:semantics:standard-pi}.

\paragraph{Modal substitutions}

In addition to the usual rules, \MTT{} features substitutions corresponding to the 1-cells and
2-cells of the mode theory. First, recall that for each modality $\mu : n \to m$ we have the
operation $\Lock_\mu$ on contexts. Its action extends to substitutions:
\begin{mathpar}
  \inferH{sb/lock}{
    \mu : n \to m\\
    \IsSb{\delta}{\Delta}
  }{
    \IsSb[\LockCx{\Gamma}]{\LockSb{\delta}}{\LockCx{\Delta}}<n>
  }
\end{mathpar}
Second, each 2-cell $\alpha : \mu \To \nu$ induces a \emph{natural transformation} between
$\Lock_\nu$ and $\Lock_\mu$, whose component at $\Gamma$ is the `key' substitution
\begin{mathpar}
  \inferH{sb/key}{
    \alpha : \mu \To \nu
  }{
    \IsSb[\LockCx{\Gamma}<\nu>]{\Key{\alpha}{\Gamma}}{\LockCx{\Gamma}}<n>
  }
\end{mathpar}
Recalling that $\Coop{\Mode}$ is the 2-category with morphisms and 2-cells opposite from $\Mode$, we
see that these substitutions come with equations postulating that $\LockSb{-}<\mu>$ is a functor,
$\Key{\alpha}{\Gamma}$ is a natural transformation, and that together they form a 2-functor
$\Coop{\Mode} \to \CAT$. As a consequence, our type theory is forced to contain a calculus of
(strict) 2-categories. Indeed, the given equations for keys above suffice to derive the two ways of
internally stating the \emph{interchange laws}, viz.
\begin{mathpar}
  \inferrule{
    \IsCx{\Gamma} \\
    \nu_0,\nu_1,\nu_2 : \Hom[\Mode]{o}{n} \\
    \mu_0,\mu_1,\mu_2 : \Hom[\Mode]{n}{m} \\
    \alpha_0 : \mu_0 \To \mu_1 \\
    \alpha_1 : \mu_1 \To \mu_2 \\
    \beta_0 : \nu_0 \To \nu_1 \\
    \beta_1 : \nu_1 \To \nu_2 \\
  }{
    \EqSb[\LockCx{\Gamma}<\mu_2 \circ \nu_2>]{
      \Key{\alpha_0 \Whisker \beta_0}{\Gamma}
        \circ
      \Key{\alpha_1 \Whisker \beta_1}{\Gamma}
    }{
      \LockSb{\Key{\alpha_1 \circ \alpha_0}{\Gamma}}<\nu_0>
        \circ
      \Key{\beta_1 \circ \beta_0}{\LockCx{\Gamma}<\mu_2>}
    }{
      \LockCx{\Gamma}<\mu_0 \circ \nu_0>
    }<o>
  }
  \and
  \inferrule{
    \IsCx{\Gamma} \\
    \nu_0,\nu_1,\nu_2 : \Hom[\Mode]{o}{n} \\
    \mu_0,\mu_1,\mu_2 : \Hom[\Mode]{n}{m} \\
    \alpha_0 : \mu_0 \To \mu_1 \\
    \alpha_1 : \mu_1 \To \mu_2 \\
    \beta_0 : \nu_0 \To \nu_1 \\
    \beta_1 : \nu_1 \To \nu_2 \\
  }{
    \EqSb[\LockCx{\Gamma}<\mu_2 \circ \nu_2>]{
      \Key{\alpha_0 \Whisker \beta_0}{\Gamma}
        \circ
      \Key{\alpha_1 \Whisker \beta_1}{\Gamma}
    }{
      \Key{\beta_1 \circ \beta_0}{\LockCx{\Gamma}<\mu_0>}
        \circ
      \LockSb{\Key{\alpha_1 \circ \alpha_0}{\Gamma}}<\nu_2>
    }{
      \LockCx{\Gamma}<\mu_0 \circ \nu_0>
    }<o>
  }
\end{mathpar}
In fact, the second version of the interchange law follows from the first one and the equation that
expresses the naturality of $\Key{-}{-}$. Conversely, except the two laws for the identity 2-cell
and naturality, the given equations follow from either one of the two interchange laws.

While it is no longer necessary to prove that substitution is \emph{admissible} in the setting of
the GAT, we would still like to show that explicit substitutions can be eliminated on closed terms.
The proof of canonicity implicitly contains such an algorithm, but that is overkill: a simple,
direct argument proves that explicit substitutions can be propagated down to variables. Moreover, we
may define the admissible operation mentioned in Section~\ref{sec:towards-mtt} by
\begin{align*}
  A^\alpha &\defeq \Sb{A}{\Key{\alpha}{\Gamma}}
  &
  M^\alpha &\defeq \Sb{M}{\Key{\alpha}{\Gamma}}
\end{align*}
We may then use the aforementioned algorithm to eliminate the keys.

\paragraph{Pushing substitutions under modalities}

In order for the aforementioned algorithm to work, we must specify how substitutions commute with
the modal connectives of \MTT{}. Unlike previous work~\cite{gratzer:tech-report:2019}, the
necessary equations are straightforward:
\begin{align*}
  \Sb{\Modify{A}}{\delta} &= \Modify{\Sb{A}{\LockSb{\delta}}} &
  \Sb{\MkBox{M}}{\delta} &= \MkBox{\Sb{M}{\LockSb{\delta}}}
\end{align*}

This simplicity is not coincidental. Previous modal type theories included rules that, in one way or
another, \emph{trimmed} the context during type checking: some removed
variables~\cite{prawitz:1965,pfenning-davies:2001,shulman:2018}, while others erased context
formers, \eg~locks~\cite{clouston:dra:2018,gratzer:2019}. In either case, it was necessary to show
that the trimming operation, which we may write as $\Verts{\Gamma}$, is functorial: $\Gamma \vdash
\delta : \Delta$ should imply $\Verts{\Gamma} \vdash \Verts{\delta} : \Verts{\Delta}$.
Unfortunately, the proof of this fact is almost always very complicated. Some type theories avoid it
by `forcing' substitution to be admissible using delayed substitutions~\cite{bierman:2000,
licata:2017}, but this causes serious complications in the equational theory.

\MTT{} circumvents this by avoiding any context trimming. As a result, we need neither delayed
substitutions nor a complex proof of admissibility.


\section{Models}
\label{sec:semantics}

In the preceding section we presented the formal definition of \MTT{} in the form of a GAT. As a
consequence we automatically obtained a \emph{category of models} of \MTT{}, as well as
\emph{(strict) homomorphisms} between them~\cite{cartmell:1978, kaposi:qiits:2019}. Moreover, this
category of models had an initial object, \ie~the syntax of \MTT{} itself. This category of models
is inhabited by algebras for this GAT. Hence, showing that a mathematical structure is a model of
\MTT{} becomes a laborious task: one must show that each and every construct can be interpreted in a
manner that makes the postulated equations hold. In this section we shall take on the task of
decomposing these algebraic models into more tractable pieces.

A moment's thought reveals that many of the equations of the GAT given in
Section~\ref{sec:algebraic-mtt}
are rather close to the familiar notion of \emph{categories with families}
(CwFs)~\cite{dybjer:1996}, which can be adapted to the present setting.\footnote{The conference
version of this paper used such a presentation in the interest of brevity.} However, we will take
things a bit further by opting for a category-theoretic reformulation of CwFs known as \emph{natural
models}~\cite{awodey:2018}.

Natural models build on the view of CwFs as consisting of a presheaf of types (over the category of
contexts), coupled with a presheaf of terms (over the elements of those types). We find this
relatively recent technology helpful for two reasons. First, it concisely encodes the many
naturality conditions normally required of a CwF. Second, it aids in uncovering the implicit
universal properties of type-theoretic connectives, which are not quite so evident in the usual
GAT-like formulation of CwFs.

In Section~\ref{sec:semantics:standard} we demonstrate how the basic notions of context, type, term, and
context extension in \MTT{} can be presented in terms of natural models. Then, in
Section~\ref{sec:semantics:connectives} we show how to interpret the various connectives---including the
modality---in the language of natural models; this discussion concludes with a concise definition of
a model of \MTT{} in Section~\ref{sec:semantics:model}. Following that, in
Section~\ref{sec:semantics:morphism}
we briefly discuss a strict notion of morphism of models.

\subsection{Contexts, Types, and Terms}
\label{sec:semantics:standard}

\subsubsection{Contexts}
\label{sec:semantics:standard-context-categories}

First, we observe that a model of our type theory must contain a set of contexts at each mode $m \in
\Mode$. Equipped with the substitutions at the same mode, which can be composed associatively and
have the identity substitution as a unit, these sets are readily seen to form a category---the \emph{context category} at $m \in \Mode$---for which we write $\InterpMode{m}$.

Moreover, recall that for $\IsCx{\Gamma}$ and $\mu : n \to m$ we have a context
$\IsCx{\LockCxV{\Gamma}}<n>$, and that this construction extends to substitutions in a functorial
fashion. Hence, we will require for each modality $\mu : n \to m$ a functor
\[
  \InterpModal{\mu} : \InterpMode{m} \to \InterpMode{n}
\]
Similarly, each $\alpha : \mu \To \nu$ induces a natural transformation. Accordingly, a model should
come with a natural transformation
\[
  \InterpKey{\alpha} : \InterpModal{\nu} \To \InterpModal{\mu}
\]
The equalities of the GAT require that the assignments $\mu \mapsto
\Lock_\mu$ and $\alpha \mapsto \Key{\alpha}{}$ be strictly 2-functorial. Thus, this part of the
model can be succinctly summarized as follows.

\begin{defi}
  \label{def:semantics:context-structure}
  A \emph{modal context structure} for a mode theory $\Mode$ is a (strict) 2-functor
  \[
    \Interp{-}  : \Coop{\Mode} \to \CAT_1
  \]
  where $\Coop{\Mode}$ is the 2-category $\Mode$ with the direction of \emph{both} 1-cells and
  2-cells reversed, and $\CAT_1$ is the full subcategory of (large) categories with a terminal
  object.
\end{defi}

This double contravariance may seem peculiar at first sight. Recall that the 2-category $\Mode$
specifies the behaviour of the modal types $\Modify{-}$, which are supposed to have a
right-adjoint-like behaviour, with the corresponding left-adjoint-like operators being the lock
functors $\LockCx{-}$. Being left-adjoint-like, the interpretation $\InterpModal{-}$ of each lock
will behave with variance opposite to the specification of $\Mode$. Of course, this is merely an
analogy, as these constructions are not truly adjoint.

\subsubsection{Types and Context Extension}
\label{sec:semantics:standard-cwf}

The following definition plays a central role.

\begin{defi}[Representable natural transformation]
  \label{def:semantics:representable-nat}
  Let $\mathbb{C}$ be a small category, and let $P, Q : \PSH{\mathbb{C}}$ be presheaves on
  $\mathbb{C}$. A natural transformation $\alpha : P \Rightarrow Q$ is \emph{representable} just if
  for every $\Gamma : \mathbb{C}$ and $x : \Yo{\Gamma} \Rightarrow P$ there exists a $y :
  \Yo{\Delta} \Rightarrow Q$ and a morphism $\gamma : \Delta \rightarrow \Gamma$ in $\mathbb{C}$
  such that there is a pullback square
  \[
    \begin{tikzpicture}[node distance=2.5cm, on grid]
      \node[pullback] (Delta) {$\Yo{\Delta}$};
      \node (El) [right = 3cm of Delta] {$Q$};
      \node (Gamma) [below = 2cm of Delta] {$\Yo{\Gamma}$};
      \node (UU) [below = 2cm of El] {$P$};
      \path[->] (Delta) edge node [left] {$\Yo{\gamma}$} (Gamma);
      \path[->] (Delta) edge node [above] {$y$} (El);
      \path[->] (Gamma) edge node [below] {$x$} (UU);
      \path[->] (El) edge node [right] {$\alpha$} (UU);
    \end{tikzpicture}
  \]
\end{defi}
This enables a very succinct definition of a model of type theory \cite{awodey:2018}.
\begin{defi}
  Let $\mathbb{C}$ be a small category with a terminal object $\mathbf{1}$, and let $\CTm{}, \CTy{}
  : \PSH{\mathbb{C}}$. A \emph{natural model of type theory} is a representable natural
  transformation $\El{} : \CTm{} \Rightarrow \CTy{}$.
\end{defi}
It is shown in \emph{op. cit.} that this corresponds to the usual notion of CwF: the
representability of $\El{} : \CTm{} \Rightarrow \CTy{}$ is a clever way to encode context
extension and comprehension in a manner that automatically ensures naturality with respect to
substitution: see also \cite{fiore:2012}. Moreover, one can use this economy to write down very
concise interpretations of type formers. Our objective here is to adapt this to modes and
modalities.

To begin, given a mode $m \in \Mode$ we define two presheaves on the context category
$\InterpMode{m}$:
\begin{align*}
  \CTy{m}(\Gamma) &\defeq \Ty{m}[1](\Gamma) &
  \CTm{m}(\Gamma) &\defeq
   \{ \parens{A, M} \mid A \in \Ty{m}[1](\Gamma), M \in \Tm{m}(\Gamma, A) \}
\end{align*}
The first one maps a context at mode $m \in \Mode$ to the set of large types over it. The second one
maps a context to the set of pointed types, i.e. to the set of pairs consisting of a type and a term
of that type. The presheaf action is given by substitution. We immediately obtain a natural
transformation $\El{m} : \CTm{m} \Rightarrow \CTy{m}$: at each context $\Gamma$, $\El{m, \Gamma}$
projects a pair $(A, M)$ to the underlying type $A$. As a result, the fibres of $\El{m}$ are the
terms of a given type.

Context extension postulates that for any object $\Gamma : \InterpMode{m}$, modality $\mu
\in \Hom{n}{m}$, and large type $A \in \Ty{n}[1](\InterpModal{\mu}\Gamma)$ there exists an object
$\ECx{\Gamma}{A} : \InterpMode{m}$ along with a morphism and a term
\begin{align*}
  \CWk :\Hom[\InterpMode{m}]{\ECx{\Gamma}{A}}{\Gamma} &&
  \CVar \in \Tm{n}(\InterpModal{\mu}(\ECx{\Gamma}{A}),\Sb{A}{\InterpModal{\mu}(\CWk)})
\end{align*}
The object $\ECx{\Gamma}{A}$ is universal with respect to $\CWk$ and $\CVar$: for any $\gamma \in
\Hom[\InterpMode{m}]{\Delta}{\Gamma}$ and term $M \in
\Tm{n}(\InterpModal{\mu}\Delta,\Sb{A}{\LockSb{\gamma}})$ there is a \emph{unique} $\ESb{\gamma}{M} :
\Delta \to \ECx{\Gamma}{A}$ such that
\begin{alignat}{2}
  \label{eq:p-proj}
  \CWk \circ (\ESb{\gamma}{M}) &= \gamma &&: \Delta \to \Gamma \\
  \label{eq:q-proj}
  \Sb{\CVar}{\LockSb{(\ESb{\gamma}{M})}}
    &= M &&: \Tm{n}(\InterpModal{\mu}(\Gamma),\Sb{A}{\LockSb{\gamma}})
\end{alignat}
As usual, \eqref{eq:q-proj} is only well-typed because of \eqref{eq:p-proj}. The only difference to the usual context extension of CwFs is that $A$ and $\Gamma$ are in different modes.

This can be encoded in the style of natural models as follows. We write $\YoEm{-}$ for the Yoneda
isomorphism. Given $\mu : \Hom[\Mode]{n}{m}$, context $\Gamma : \InterpMode{m}$, and a type $A :
\Ty{n}[1]{\parens{\InterpModal{\mu}(\Gamma)}}$, there is a chosen context $\Gamma'$ (cf.
$\ECx{\Gamma}{A}$), a chosen morphism $\CWk : \Gamma' \to \Gamma$, and a chosen morphism
$\YoEm{\CVar} : \Yo{\InterpModal{\mu}\Gamma'} \to \CTm{n}$ that make the following square commute:
\[
  \begin{tikzpicture}[node distance=2.5cm, on grid]
    \node (Gammax) {$\Yo{\InterpModal{\mu}{\Gamma'}}$};
    \node (El) [right = 3cm of Gammax] {$\CTm{n}$};
    \node (Gamma) [below = 1.5cm of Gammax] {$\Yo{\InterpModal{\mu}{\Gamma}}$};
    \node (UU) [below = 1.5cm of El] {$\CTy{n}$};
    \path[->] (Gammax) edge node [left] {$\Yo{\InterpModal{\mu}\CWk}$} (Gamma);
    \path[->] (Gammax) edge node [above] {$\YoEm{\CVar}$} (El);
    \path[->] (Gamma) edge node [below] {$\YoEm{A}$} (UU);
    \path[->] (El) edge node [right] {$\El{n}$} (UU);
  \end{tikzpicture}
\]
We have surreptitiously `decoded' the top arrow into a term $\CVar \in
\Tm{n}(\InterpModal{\mu}(\Gamma'),\Sb{A}{\InterpModal{\mu}\CWk})$.

The universality of these objects is expressed by asking that for a given $\Delta : \InterpMode{m}$,
$\gamma : \Hom[\InterpMode{m}]{\Delta}{\Gamma}$, and $\YoEm{M} : \Yo{\InterpModal{\mu}\Delta}
\Rightarrow \CTm{n}$, there must be a unique morphism $\gamma' : \Delta \to \Gamma'$ (which stands
for $\ESb{\gamma}{M}$) such that the following square commutes:
\[
  \begin{tikzpicture}[node distance=2.5cm, on grid]
    \node (Delta) {$\Yo{\InterpModal{\mu}(\Delta)}$};
    \node (Gammax) [below right = 2cm and 2cm of Delta] {$\Yo{\InterpModal{\mu}{\Gamma'}}$};
    \node (El) [right = 4cm of Gammax] {$\CTm{n}$};
    \node (Gamma) [below = 2cm of Gammax] {$\Yo{\InterpModal{\mu}\Gamma}$};
    \node (UU) [below = 2cm of El] {$\CTy{n}$};
    \path[->, densely dashed] (Delta) edge node[upright desc] {$\Yo{\InterpModal{\mu}{\gamma'}}$} (Gammax);
    \path[->, bend right] (Delta) edge node[upright desc] {$\Yo{\InterpModal{\mu}\gamma}$} (Gamma);
    \path[->, bend left]  (Delta) edge node[upright desc] {$\YoEm{M}$} (El);
    \path[->] (Gammax) edge node [upright desc] {$\Yo{\InterpModal{\mu}\CWk}$} (Gamma);
    \path[->] (Gammax) edge node [above] {$\YoEm{\CVar}$} (El);
    \path[->] (Gamma) edge node [below] {$\YoEm{A}$} (UU);
    \path[->] (El) edge node [right] {$\El{n}$} (UU);
  \end{tikzpicture}
\]
This diagram is not a pullback, but we can make it into one. Recall that for any functor $f :
\mathcal{C} \to \mathcal{D}$ we can define the precomposition functor $\Pre{f} : \PSH{\mathcal{D}}
\to \PSH{\mathcal{C}}$ by
\[
  \Pre{f}(P) \defeq \Op{\mathcal{C}} \xrightarrow{\Op{f}} \Op{\DD} \xrightarrow{P} \SET
\]
Then, for any $c : \mathcal{C}$ and $Q : \PSH{\mathcal{D}}$ we can use the Yoneda lemma to establish
a series of natural isomorphisms
\[
  \Hom[\PSH{\mathcal{D}}]{\Yo{f(c)}}{Q}
    \cong
  Q\parens*{f(c)}
    =
  \Pre{f}{Q}(c)
    \cong
  \Hom[\PSH{\mathcal{C}}]{\Yo{c}}{\Pre{f}{Q}}
\]
We can then \emph{transpose} the diagram in order to obtain
\begin{equation}
  \begin{tikzpicture}[node distance=2.5cm, on grid, baseline=(current  bounding  box.center)]
    \node (Delta) {$\Yo{\Delta}$};
    \node (Gammax) [below right = 2cm and 2cm of Delta] {$\Yo{\Gamma'}$};
    \node (El) [right = 4cm of Gammax] {$\Pre{\InterpModal{\mu}}{\CTm{n}}$};
    \node (Gamma) [below = of Gammax] {$\Yo{\Gamma}$};
    \node (UU) [below = of El] {$\Pre{\InterpModal{\mu}}{\CTy{n}}$};
    \path[->, densely dashed] (Delta) edge node[upright desc] {$\gamma'$} (Gammax);
    \path[->, bend right] (Delta) edge node[upright desc] {$\gamma$} (Gamma);
    \path[->, bend left]  (Delta) edge node[upright desc] {$\YoEm{M}$} (El);
    \path[->] (Gammax) edge node [left] {$\Yo{\CWk}$} (Gamma);
    \path[->] (Gammax) edge node [above] {$\YoEm{\CVar}$} (El);
    \path[->] (Gamma) edge node [below] {$\YoEm{A}$} (UU);
    \path[->] (El) edge node [right] {$\Pre{\InterpModal{\mu}}{\El{n}}$} (UU);
  \end{tikzpicture}
  \label{diag:modal-natural-model}
\end{equation}
where $\gamma' : \Delta \to \Gamma'$ is the unique arrow that makes the diagram commute.
The requirement that this diagram be a pullback leads us to the following definition.
\begin{defi}
  \label{def:semantics:modal-natural-model}
  A \emph{modal natural model} on a context structure $\Interp{-}  : \Coop{\Mode} \to \CAT_1$
  consists of a family of natural transformations of presheaves
  \[
      \parens*{\El{m} : \CTm{m} \Rightarrow \CTy{m}}_{m \in \Mode}
  \]
  where $\CTm{m}, \CTy{m} : \PSH{\InterpMode{m}}$ such that for every $\mu : \Hom[\Mode]{m}{n}$ the
  natural transformation
  \[
    \Pre{\InterpModal{\mu}}{\El{n}} :
      \Pre{\InterpModal{\mu}}{\CTm{n}} \To \Pre{\InterpModal{\mu}}{\CTy{n}}
  \]
  is a natural model.
\end{defi}
We will write $\ECx{\Gamma}{A}$ for the object $\Gamma'$ that makes \eqref{diag:modal-natural-model}
a pullback, as we do in the type theory.


\subsection{Connectives}
\label{sec:semantics:connectives}

 We shall only discuss the key cases of $\Pi$ types, modal types, Boolean types, and universes. The
 interpretation of the other connectives largely follows the style of \cite{awodey:2018}. More
 details can be found in the tech report \cite{gratzer:multimodal:2020}.

\subsubsection{$\Pi$ Structure}
\label{sec:semantics:standard-pi}

Even though \MTT{} $\Pi$ types are close to traditional $\Pi$ types they are not quite the same, as
they involve a modality in the domain. Thus, we need to construct an appropriate variation of the
interpretation given by \cite{awodey:2018}. To begin, we need some way to represent the
\emph{binding} of an additional assumption. This is achieved through the use of \emph{polynomial
endofunctors}. 
Given a `display map' $\ell : E \to B$ we define a polynomial endofunctor $\Poly{\ell : E \to B} :
\PSH{\InterpMode{m}} \to \PSH{\InterpMode{m}}$ by\footnote{This is given in the internal language,
but may also be written purely categorically, as is done in \emph{op. cit.}}
\[
  \Poly{\ell : E \to B}{A} \defeq \sum_{b : B} A^{\ReIdx{\ell}{b}}
\]
When specialized to the `modalized' natural model $\ell \defeq \Pre{\InterpModal{\mu}}(\tau_n) :
\Pre{\InterpModal{\mu}}{\CTm{n}} \To \Pre{\InterpModal{\mu}}{\CTy{n}}$, this functor has a useful
property: morphisms $\Yo{\Gamma} \To \Poly{\Pre{\InterpModal{\mu}}{\El{n}}}{\CTy{m}}$ are in
bijection with tuples
\begin{equation}
  \parens{A \in \CTy{n}(\InterpModal{\mu}(\Gamma)), B \in \CTy{m}(\ECx{\Gamma}{A})}
  \label{eq:semantics:binding-pair}
\end{equation}
This enables the representation of a pair of types $\IsTy[\LockCx{\Gamma}]{A}[1]<n>$ and
$\IsTy[\ECx{\Gamma}{A}]{B}[1]<m>$---i.e. the premises of $\Pi$ formation---as a single morphism
$\Yo{\Gamma} \To \Poly{\Pre{\InterpModal{\mu}}{\El{n}}}{\CTy{n}}$. A similar observation applies to
the presheaf of terms $\CTm{m}$. See \cite[Lemma 5]{awodey:2018} for a detailed proof of this
property.

A model is equipped with a $\prod$-structure if for $\mu : \Hom[\Mode]{n}{m}$ we have a pullback
\[
  \begin{tikzpicture}[node distance=1.75cm, on grid]
    \node[pullback] (PiTm) {$\Poly{\Pre{\InterpModal{\mu}}{\El{n}}}{\CTm{m}}$};
    \node (Tm) [right = 4cm of PiTm] {$\CTm{m}$};
    \node (PiTy) [below = of PiTm] {$\Poly{\Pre{\InterpModal{\mu}}{\El{n}}}{\CTy{m}}$};
    \node (Ty) [below = of Tm] {$\CTy{m}$};
    \path[->] (PiTm) edge node [left] {} (PiTy);
    \path[->] (PiTm) edge node [above] {$\CLam$} (Tm);
    \path[->] (PiTy) edge node [below] {$\CPi$} (Ty);
    \path[->] (Tm) edge node [right] {$\El{m}$} (Ty);
  \end{tikzpicture}
\]
The lower morphism $\prod$ models the formation rule: the premises of the rule constitute a pair of
the form \eqref{eq:semantics:binding-pair}. We may thus combine them into an arrow $\Yo{\Gamma} \To
\Poly{\Pre{\InterpModal{\mu}}{\El{n}}}{\CTy{n}}$, and then postcompose $\prod$ to obtain a morphism
$\Yo{\Gamma} \To \CTy{m}$, i.e. a type at mode $m$ in context $\IsCx{\Gamma}$. In a similar fashion,
the top morphism $\CLam$ models the introduction rule. The $\beta$ and $\eta$ laws then follow from
the pullback property of the square: see \cite{awodey:2018}.

\subsubsection{Modal Structure}
\label{sec:semantics:standard-modify}

The interpretation of the modal types is a bit more involved. Intuitively, the reason is that
$\Modify{A}$ behaves like a \emph{positive type former}, i.e. one with a `let-style'
pattern-matching eliminator, and no $\eta$-rule. These features render its behaviour closer to that
of intensional identity types.

First, for each $\mu : \Hom[\Mode]{n}{m}$ the formation and introduction rules for $\Modify{-}$ are
given by a commuting square
\begin{equation}
  \begin{tikzpicture}[node distance=2.5cm, on grid, baseline=(current  bounding  box.center)]
    \node (ModTm) {$\Pre{\InterpModal{\mu}}{\CTm{n}}$};
    \node (Tm) [right = 4cm of ModTm] {$\CTm{m}$};
    \node (ModTy) [below = 1.75cm of ModTm] {$\Pre{\InterpModal{\mu}}{\CTy{n}}$};
    \node (Ty) [below = 1.75cm of Tm] {$\CTy{m}$};
    \path[->] (ModTm) edge node [left] {$\Pre{\InterpModal{\mu}}{\El{n}}$} (ModTy);
    \path[->] (ModTm) edge node [above] {$\CMkBox$} (Tm);
    \path[->] (ModTy) edge node [below] {$\CModify$} (Ty);
    \path[->] (Tm) edge node [right] {$\El{m}$} (Ty);
  \end{tikzpicture}
  \label{diag:form-intro-modal}.
\end{equation}
By Yoneda, every type $\IsTy[\LockCx{\Gamma}]{A}[1]<n>$ can be seen as a morphism $\Yo{\Gamma} \To
\Pre{\InterpModal{\mu}}{\El{n}}$. Postcomposition with $\CModify$ gives a morphism $\Yo{\Gamma} \To
\CTy{m}$, which constitutes the interpretation of the type $\IsTy[\Gamma]{\Modify{A}}[1]<m>$.
$\CMkBox$ interprets the introduction rule in a similar fashion. Nevertheless, asking that this
square be a pullback is stronger than the elimination rule. In Section~\ref{sec:dra} we
shall see that states that $\CModify$ is a \emph{dependent right adjoint}.

Instead, we will model our elimination rule by a \emph{lifting structure}. We phrase this definition
in the internal language of the presheaf topos $\PSH{\InterpMode{m}}$, \ie~extensional type
theory.\footnote{This is derived from unpublished work by Jon Sterling, Daniel Gratzer, Carlo
Angiuli, and Lars Birkedal.} This has a serious technical advantage: as the definition is given in
an empty context, the given lifts are automatically natural.

\begin{defi}[Left lifting structure]
  \label{def:semantics:left-lifting-structure}
  Given $\vdash A, I, B\ \mathsf{type}$, a family $b : B \vdash E[b]\ \mathsf{type}$ and a section
  $a : A \vdash i[a] : I$, we define the type $\vdash i[-]\pitchfork E[-]\ \mathsf{type}$ of
  \emph{left lifting structures} for $i$ with respect to $E$ to be
  \[
    \textstyle
    i[-] \pitchfork E[-] \triangleq
    \prod_{C: I \to B}
    \prod_{c: \prod_{a:A} E[C(i[a])]}
    \braces*{
      j:\prod_{p:I} E[C(p)]
      \mid
      \forall a:A.\
      j(i[a])=c(a)
    }
  \]
\end{defi}
Informally, left lifting structures provide \emph{diagonal fillers} for the diagram
\[
  \begin{tikzpicture}[on grid, node distance = 1.75cm]
    \node (A) {$A$};
    \node (I) [below = of A] {$I$};
    \node (E) [right = 4.5cm of A] {$\sum_{b : B} E[b]$};
    \node (B) [below = of E] {$B$};
    \path[->] (A) edge node [above] {$\angles{C(i[-]), c}$} (E);
    \path[->] (A) edge node [left] {$i[-]$} (I);
    \path[->] (E) edge node [right] {$\pi_1$} (B);
    \path[->] (I) edge node [below] {$C$} (B);
    \path[->, densely dotted] (I) edge node [upright desc] {$j$} (E);
  \end{tikzpicture}
\]
Intuitively, $C : I \to B$ is the \emph{motive} of an elimination: we would like to prove $E[C(p)]$
for all $p : I$. At the same time, $c : \prod_{a : A} E[C(i[a])]$ is a given section that specifies
the desired computational behaviour of this elimination at the `special case' $A$. The left lifting
structure then provides a section $j$ of $E[-]$ defined on all of $I$. This section is above $C$,
and extends $c$.  Note that these fillers are not necessarily unique.  Moreover, they are
automatically \emph{natural}: as all the types involved in this definition are closed, we are at
liberty to weaken the context.

This style of lifting structure is an essential ingredient in recent work on models of intensional
identity types. First, they play an important r\^{ole} in natural models: \cite[Lemma
19]{awodey:2018} shows that they precisely correspond to enriched left lifting properties in the
sense of categorical homotopy theory \cite[\S 13]{riehl:2014}. In fact, the above definition
given above is a word-for-word restatement in the internal language. Second, such lifting structures
are also central devices in internal presentations of models of cubical type theory, in particular
the recent work of \cite{orton:2018}.

We can now approach this in a manner similar to intensional identity types in \emph{op. cit}. Recall
that the elimination rule for $\Modify[\nu]{A}$ is
\[
  \inferrule{
    \nu : \Hom[\Mode]{o}{n} \\
    \mu : \Hom[\Mode]{n}{m} \\
    \IsCx{\Gamma} \\
    \IsTy[\LockCx{\LockCx{\Gamma}}<\nu>]{A}[1]<o> \\
    \IsTm[\LockCx{\Gamma}]{M_0}{\Modify[\nu]{A}}<n> \\
    \IsTy[\ECx{\Gamma}{\Modify[\nu]{A}}]{B}[1] \\
    \IsTm[\ECx{\Gamma}{A}<\mu \circ \nu>]{M_1}{\Sb{B}{\ESb{\Wk}{\MkBox[\nu]{\Var{0}}}}}
  }{
    \IsTm{\Open{M_0}{M_1}<\nu>[\mu]}{\Sb{B}{\ESb{\ISb}{M_0}}}
  }
\]
First, we must remove the `implicit cut' with $M_0$. We construct the substitution
\[
  \IsSb[ \ECx{\ECx{\Gamma}{\Modify[\nu]{A}}}{\Sb{A}{\LockSb{\Wk}<\mu \circ \nu>}}<\mu \circ \nu> ]
  {\sigma \defeq \ESb{\Wk[2]}{\Var{0}}}{\ECx{\Gamma}{A}<\mu \circ \nu>}
\]
It then suffices to construct the elimination rule
\[
  \inferrule{
    \nu : \Hom[\Mode]{o}{n} \\
    \mu : \Hom[\Mode]{n}{m} \\
    \IsCx{\Gamma} \\
    \IsTy[\LockCx{\LockCx{\Gamma}}<\nu>]{A}[1]<o> \\
    \IsTy[\ECx{\Gamma}{\Modify[\nu]{A}}]{B}[1] \\
    \IsTm[\ECx{\Gamma}{A}<\mu \circ \nu>]{M_1}{\Sb{B}{\ESb{\Wk}{\MkBox[\nu]{\Var{0}}}}}
  }{
    \IsTm[\ECx{\Gamma}{\Modify[\nu]{A}}]{\Open{\Var{0}}{\Sb{M_1}{\sigma}}<\nu>[\mu]}{B}
  }
\]
because we can calculate that
\[
  \EqTm[\Gamma]{
    \Sb{\parens{\Open{\Var{0}}{\Sb{M_1}{\sigma}}<\nu>[\mu]}}{\ESb{\ISb}{M_0}}
  }{
    \Open{M_0}{M_1}<\nu>[\mu]
  }{\Sb{B}{\ESb{\ISb}{M_0}}}
\]
We can rephrase this as the existence of a diagonal filler in the diagram
\[
  \begin{tikzpicture}[on grid, node distance = 2.5cm]
    \node (Base) {$\Yo{\ECx{\Gamma}{A}<\mu \circ \nu>}$};
    \node (Motive) [below = 2cm of Base] {$\Yo{\ECx{\Gamma}{\Modify[\nu]{A}}}$};
    \node (Tm) [right = 6cm of Base] {$\CTm{m}$};
    \node (Ty) [below = 2cm of Tm] {$\CTy{m}$};
    \path[->] (Tm) edge node [right] {$\El{m}$} (Ty);
    \path[->] (Base) edge node [left] {$\Yo{\ESb{\Wk}{\MkBox[\nu]{\Var{0}}}}$} (Motive);
    \path[->] (Motive) edge node [below] {$\YoEm{B}$} (Ty);
    \path[->] (Base) edge node [above] {$\YoEm{\Sb{M_1}{\sigma}}$} (Tm);
    \path[->] (Motive) edge node [upright desc] {$\YoEm{\Open{\Var{0}}{\Sb{M_1}{\sigma}}<\nu>[\mu]}$}(Tm);
  \end{tikzpicture}
\]

We can use a left lifting structure on a carefully chosen slice category to obtain such diagonal
fillers. The internal language approach still applies because of the well-known lemma stating that
the slice of a presheaf topos is also a presheaf topos, but over the corresponding category of
elements. In symbols, for any $P : \PSH{\CC}$ we have an equivalence $\SLICE{\PSH{\CC}}{P} \Equiv
\PSH{\int_\CC P}$: see \cite[III Ex. 8]{maclane-moerdijk:1992}

First, given $\nu : \Hom[\Mode]{o}{n}$ we construct the following pullback:
\[
  \begin{tikzpicture}[node distance=2.5cm, on grid]
    \node (ModTm) {$\Pre{\InterpModal{\nu}}{\CTm{o}}$};
    \node[pullback] (M) [below right = 1cm and 2cm of ModTm] {$M$};
    \node (Tm) [right = 4cm of M] {$\CTm{n}$};
    \node (ModTy) [below = 1.75cm of M] {$\Pre{\InterpModal{\nu}}{\CTy{o}}$};
    \node (Ty) [below = 1.75cm of Tm] {$\CTy{n}$};
    \path[->, densely dashed] (ModTm) edge node [upright desc] {$m$} (M);
    \path[->] (M) edge node {} (Tm);
    \path[->] (M) edge node[left] {$h$} (ModTy);
    \path[->] (ModTm) edge[bend right] node [left] {$\Pre{\InterpModal{\nu}}{\El{o}}$} (ModTy);
    \path[->] (ModTm) edge[bend left] node [above] {$\CMkBox[\nu]$} (Tm);
    \path[->] (ModTy) edge node [below] {$\CModify[\nu]$} (Ty);
    \path[->] (Tm) edge node [right] {$\El{n}$} (Ty);
  \end{tikzpicture}
\]
The outer commuting square is that given by the formation and introduction for $\Modify[\nu]{-}$, as
in \eqref{diag:form-intro-modal}. Intuitively, $M$ is a `generic $\nu$-modal terms object' that
consists of terms $\IsTm[\Gamma]{M}{\Modify[\nu]{A}}<n>$, where
$\IsTy[\LockCx{\Gamma}<\nu>]{A}[1]<o>$. We know that $\Pre{\InterpModal{\mu}}$ has a left adjoint,
so it preserves pullbacks. Applying it to this diagram yields
\begin{equation}
  \begin{tikzpicture}[node distance=2.5cm, on grid, baseline=(current  bounding  box.center)]
    \node (ModTm) {$\Pre{\InterpModal{\mu \circ \nu}}{\CTm{o}}$};
    \node[pullback] (M) [below right = 1cm and 3.25cm of ModTm] {$\Pre{\InterpModal{\mu}}{M}$};
    \node (Tm) [right = 4.5cm of M] {$\Pre{\InterpModal{\mu}}{\CTm{n}}$};
    \node (ModTy) [below = of M] {$\Pre{\InterpModal{\mu \circ \nu}}{\CTy{o}}$};
    \node (Ty) [below = of Tm] {$\Pre{\InterpModal{\mu}}{\CTy{n}}$};
    \path[->, densely dashed] (ModTm) edge node [above] {$\Pre{\InterpModal{\mu}}{m}$} (M);
    \path[->] (M) edge node {} (Tm);
    \path[->] (M) edge node[left] {$\Pre{\InterpModal{\mu}}{h}$} (ModTy);
    \path[->] (ModTm) edge[bend right] node [left] {$\Pre{\InterpModal{\mu \circ \nu}}{\El{o}}$} (ModTy);
    \path[->] (ModTm) edge[bend left] node [above] {$\Pre{\InterpModal{\mu}}{\CMkBox[\nu]}$} (Tm);
    \path[->] (ModTy) edge node [below] {$\Pre{\InterpModal{\mu}}{\CModify[\nu]}$} (Ty);
    \path[->] (Tm) edge node [right] {$\Pre{\InterpModal{\mu}}{\El{n}}$} (Ty);
  \end{tikzpicture}
  \label{diag:pullback-generic-modal}
\end{equation}
We have also used the fact that $\Pre{(-)}$ is functorial to contract the two locks into
one. Moreover, we get that the unique mediating morphism is indeed $\Pre{\InterpModal{\mu}}{m}$.

From this point onwards we will also work in the slice $\SLICE{\PSH{\InterpMode{m}}}{Z}$, where $Z
\defeq \Pre{\InterpModal{\mu \circ \nu}}{\CTy{o}}$. In order to model the elimination rule
we will ask for a left lifting structure \emph{in the slice category}, of type
\begin{equation}
  \vdash \COpen[\nu]^\mu :
    \Pre{\InterpModal{\mu}}{m}
      \pitchfork
    \Pre*{Z}{\El{m}} 
    \label{eq:semantics:modal-elimination-structure}
\end{equation}
where both of these are considered as morphisms in the slice $\SLICE{\PSH{\InterpMode{m}}}{Z}$,
respectively of type
\begin{align*}
  \Pre{\InterpModal{\mu}}{m} :\
    &\Pre{\InterpModal{\mu \circ \nu}}{\El{o}} \to \Pre{\InterpModal{\mu}}{h} \\
  \Pre*{Z}{\El{m}} :\
    &\Pre*{Z}{\CTm{m}} \to \Pre*{Z}{\CTy{m}}
\end{align*}

Following \cite{awodey:2018} we may calculate that this models the rule. We suppose its
premises, and construct the diagram of Figure~\ref{fig:semantics:modal-elimination}.
\begin{figure}
  \begin{tikzpicture}[node distance=2.5cm, on grid]
    \node (Cx-munu-A)  {$\Yo{\ECx{\Gamma}{A}<\mu \circ \nu>}$};
    \node[pullback] (Cx-mu-nu-A) [below = 4cm of Cx-munu-A] {$\Yo{\ECx{\Gamma}{\Modify[\nu]{A}}}$};
    \node (Tm-munu-o) [right = 5cm of Cx-munu-A] {$\Pre{\InterpModal{\mu \circ \nu}}{\CTm{o}}$};
    \node[pullback] (M) [below = 4cm of Tm-munu-o] {$\Pre{\InterpModal{\mu}}{M}$};
    \node (Cx) [below = 3cm of Cx-mu-nu-A] {$\Yo{\Gamma}$};
    \node (Ty-munu-o) [below = 3cm of M] {$\Pre{\InterpModal{\mu \circ \nu}}{\CTy{o}}$};
    \node (Tm-mu-n) [right = 4cm of M] {$\Pre{\InterpModal{\mu}}{\CTm{n}}$};
    \node (Ty-mu-n) [below = 3cm of Tm-mu-n] {$\Pre{\InterpModal{\mu}}{\CTy{n}}$};
    \path[->, densely dashed] (Cx-munu-A) edge node [upright desc]
      {$\Yo{\ESb{\CWk}{\CMkBox[\nu](\CVar)}}$} (Cx-mu-nu-A);
    \path[->] (Cx-mu-nu-A) edge node[upright desc] {$\Yo{\CWk}$} (Cx);
    \path[->] (Cx-munu-A) edge node[upright desc] {$\YoEm{\CVar}$} (Tm-munu-o);
    \path[->] (Tm-munu-o) edge node[upright desc] {$\Pre{\InterpModal{\mu}}{m}$} (M);
    \path[->] (Tm-munu-o) edge node[upright desc] {$\Pre{\InterpModal{\mu}}{\CMkBox[\nu]}$} (Tm-mu-n);
    \path[->] (Tm-mu-n) edge node[right] {$\Pre{\InterpModal{\mu}}{\El{n}}$} (Ty-mu-n);
    \path[->] (Cx) edge[bend right] node[below] {$\YoEm{\Modify[\nu]{A}}$} (Ty-mu-n);
    \path[->] (Cx) edge node[upright desc] {$\YoEm{A}$} (Ty-munu-o);
    \path[->] (Ty-munu-o) edge node[below] {$\Pre{\InterpModal{\mu}}{\CModify[\nu]}$} (Ty-mu-n);
    \path[->] (Cx-munu-A) edge[bend left = 60] node[above, near start] {$\YoEm{\MkBox[\nu]{\CVar}}$} (Tm-mu-n);
    \path[->] (M) edge node[upright desc] {$\Pre{\InterpModal{\mu}}{h}$} (Ty-munu-o);
    \path[->] (M) edge (Tm-mu-n);
    \path[->] (Cx-munu-A) edge[bend right = 50] node[left] {$\Yo{\CWk}$} (Cx);
    \path[->, densely dashed] (Cx-mu-nu-A) edge (M);
    \path[->] (Cx-mu-nu-A) edge[bend left = 20] node[near start, upright desc] {$\YoEm{\CVar}$} (Tm-mu-n);
    \path[->] (Tm-munu-o) edge[bend right = 40] node[near start, left]
      {$\Pre{\InterpModal{\mu \circ \nu}}{\El{o}}$} (Ty-munu-o);
  \end{tikzpicture}
  \caption{Modelling the elimination rule}
  \label{fig:semantics:modal-elimination}
\end{figure}
The right (both top and bottom) part of the diagram is just \eqref{diag:pullback-generic-modal}. The
bottom composite is easily seen to correspond to the application of the introduction rule of
$\Modify[\nu]{-}$ to the type $\IsTy[\LockCx{\LockCx{\Gamma}}<\nu>]{A}[1]<o>$, and hence to the type
$\IsTy[\LockCx{\Gamma}]{\Modify[\nu]{A}}[1]<n>$. The outer bottom square is the natural model pullback
square that defines the object $\ECx{\Gamma}{\Modify[\nu]{A}}$, and we thus get a mediating morphism
to $\Pre{\InterpModal{\mu}}{M}$, and that the bottom-left square is also a pullback. The left (both
top and bottom) part of the diagram is the natural model pullback square that defines the object
$\ECx{\Gamma}{A}<\mu \circ \nu>$. We hence get a mediating morphism
$\ESb{\CWk}{\CMkBox[\nu](\CVar)} : \ECx{\Gamma}{A}<\mu \circ \nu> \to
\ECx{\Gamma}{\Modify[\nu]{A}}$. Finally, for the same reasons as the bottom composite, the top
composite is easily seen to correspond to the term $\MkBox[\nu]{\CVar}$.


We write $\sum_Z : \SLICE{\PSH{\InterpMode{m}}}{Z} \to \PSH{\InterpMode{m}}$ for the usual domain
projection functor, so that $\sum_Z \dashv Z^*$. Now, using the usual approach to slice
categories---where the cartesian product $\times_Z$ is the pullback---we see from the diagram that
\begin{equation}
  \begin{aligned}
    \sum_Z\parens{\YoEm{A} \times_Z \Pre{\InterpModal{\mu \circ \nu}}{\CTm{o}}}
      &\cong
    \Yo{\ECx{\Gamma}{A}<\mu \circ \nu>} \\
    \sum_Z\parens{\YoEm{A} \times_Z \Pre{\InterpModal{\mu}}{h}}
      &\cong
    \Yo{\ECx{\Gamma}{\Modify[\nu]{A}}<\mu>} \\
    \sum_Z\parens{\ISb_{\YoEm{A}} \times_Z \Pre{\InterpModal{\mu}}{m}}
      &\cong
    \Yo{\ESb{\CWk}{\CMkBox[\nu](\CVar)}}
  \end{aligned}
  \label{eq:modal-lift-isos}
\end{equation}
Recall that we are trying to find a diagonal filler to the $\PSH{\InterpMode{m}}$ diagram
\begin{equation}
  \begin{tikzpicture}[on grid, node distance = 2.5cm, baseline=(current  bounding  box.center)]
    \node (Base) {$\Yo{\ECx{\Gamma}{A}<\mu \circ \nu>}$};
    \node (Motive) [below = 2cm of Base] {$\Yo{\ECx{\Gamma}{\CModify[\nu](A)}<\mu>}$};
    \node (Tm) [right = 6cm of Base] {$\CTm{m}$};
    \node (Ty) [below = 2cm of Tm] {$\CTy{m}$};
    \path[->] (Tm) edge node [right] {$\El{m}$} (Ty);
    \path[->] (Base) edge node [left] {$\Yo{\ESb{\CWk}{\CMkBox[\nu](\CVar)}}$} (Motive);
    \path[->] (Motive) edge node [below] {$\YoEm{B}$} (Ty);
    \path[->] (Base) edge node [above] {$\YoEm{M_1}$} (Tm);
    \path[->, densely dashed] (Motive) edge (Tm);
  \end{tikzpicture}
  \label{diag:modal-fill}
\end{equation}
We use the adjunction $\sum_Z \Adjoint Z^*$ to transpose this diagram, and we compose with the
isomorphisms \eqref{eq:modal-lift-isos} to obtain the following diagram in
$\SLICE{\PSH{\InterpMode{m}}}{Z}$:
\[
  \begin{tikzpicture}[on grid, node distance = 2.5cm]
    \node (Base) {$\YoEm{A} \times_Z \Pre{\InterpModal{\mu \circ \nu}}{\CTm{o}}$};
    \node (Motive) [below = of Base]
      {$\YoEm{A} \times_Z \Pre{\InterpModal{\mu}}{h}$};
    \node (Tm) [right = 6cm of Base] {$Z^*\parens{\CTm{m}}$};
    \node (Ty) [below = of Tm] {$Z^*\parens{\CTy{m}}$};
    \path[->] (Tm) edge node [right] {$Z^*\parens{\El{m}}$} (Ty);
    \path[->] (Base) edge node [left] {$\ISb \times_Z \Pre{\InterpModal{\mu}}m$} (Motive);
    \path[->] (Motive) edge node [below] {$\Transpose{\YoEm{B}}$} (Ty);
    \path[->] (Base) edge node [above] {$\Transpose{\YoEm{M_1}}$} (Tm);
    \path[->, densely dashed] (Motive) edge node [upright desc] {$\COpen[\nu]^\mu$}(Tm);
  \end{tikzpicture}
\]
We may then use the lifting structure to prove a diagonal filler, and transpose backwards along the
adjunction to obtain a filler for \eqref{diag:modal-fill}. The naturality of all these steps
(composing isomorphisms, transposition, and lifting structure) ensure that the choice is natural.

\subsubsection{Boolean Structure}
\label{sec:semantics:standard-bool}

A boolean structure is defined similarly to the structure for modal types. First, we require
two constants, as well as naturally given diagonal fillers for the appropriate squares:
\[
  \begin{tikzpicture}[node distance=1.5cm, on grid]
    \node (BoolTm) {$1$};
    \node (Tm) [right = 4cm of BoolTm] {$\CTm{m}$};
    \node (BoolTy) [below = 1.5cm of BoolTm] {$1$};
    \node (Ty) [below = 1.5cm of Tm] {$\CTy{m}$};
    \draw[double, double distance=1.5pt] (BoolTm) -- (BoolTy);
    \path[->] (BoolTm) edge[bend left=20] node [above] {$\CTrue$} (Tm);
    \path[->] (BoolTm) edge[bend right=20] node [below] {$\CFalse$} (Tm);
    \path[->] (BoolTy) edge node [below] {$\CBool$} (Ty);
    \path[->] (Tm) edge node [right] {$\El{m}$} (Ty);
  \end{tikzpicture}
  \quad
  \begin{tikzpicture}[node distance=2.5cm, on grid]
    \node (Term) {$1 + 1$};
    \node (Tm) [right = 3cm of Term] {$\CTm{m}$};
    \node (Motive) [below = 2cm of Term] {$\ReIdx{\El{m}}{\CBool}$};
    \node (Ty) [below = 2cm of Tm] {$\CTy{m}$};
    \path[->] (Term) edge node [left] {$[\CTrue,\CFalse]$} (Motive);
    \path[->] (Term) edge node [above] {} (Tm);
    \path[->] (Motive) edge node [below] {} (Ty);
    \path[->] (Tm) edge node [right] {$\El{m}$} (Ty);
    \path[->,densely dotted] (Motive) edge node {} (Tm);
  \end{tikzpicture}
\]
$\ReIdx{\El{m}}{\CBool}$ is the fibre of $\El{m}$ over $\CBool$, and the map $[\CTrue, \CFalse]$ is
obtained as the cotuple of the maps obtained by factoring $\CTrue$ and $\CFalse$ through the fibre.
Requiring a left lifting structure
\[
  \CRec : \brackets*{\CTrue, \CFalse} \pitchfork \El{m}[-]
\]
in the internal language provides enough naturality to yield diagonal fillers for all squares
\[
  \begin{tikzpicture}[node distance=2.5cm, on grid]
    \node (Gamma) {$\Yo{\Gamma} + \Yo{\Gamma}$};
    \node (Tm) [right = 4cm of Gamma] {$\CTm{m}$};
    \node (GammaB) [below = 2cm of Gamma] {$\Yo{\CECx{\Gamma}{\CBool}}$};
    \node (Ty) [below = 2cm of Tm] {$\CTy{m}$};
    \path[->] (Gamma) edge node [left] {$[\ESb{\ISb}{\CTrue},\ESb{\ISb}{\CFalse}]$} (GammaB);
    \path[->] (Gamma) edge node [above] {} (Tm);
    \path[->] (GammaB) edge node [below] {} (Ty);
    \path[->] (Tm) edge node [right] {$\El{m}$} (Ty);
    \path[->,densely dotted] (GammaB) edge node {} (Tm);
  \end{tikzpicture}
\]

\subsubsection{Universe}
\label{sec:semantics:standard-uni}


The universe itself is given by a presheaf $\CSTy{m}$ at each mode $m$. The Coquand-style
isomorphism is implemented by a natural transformation $\CUni : 1 \To \El{m}$, which stands for the
universe type at mode $m$, as well as a natural isomorphism
\[
  \ReIdx{\El{m}}{\CUni} \cong \CSTy{m}
\]
As the pullback of $\El{m}$ along $\CUni$ is $\Yo{\CECx{1}{\CUni}}$, this exactly postulates an
isomorphism between terms of the universe and small types. The coercion from small to large type is
interpreted by a natural transformation $\CLift : \CSTy{m} \Rightarrow \CTy{m}$ that maps each small
type to its associated large type. Moreover, we ask that the formation rules \emph{factor} through
small types: we require a mediating morphism in each of the following diagrams:\footnote{There are
also similar diagrams for $\Sigma$ and intensional identity types.}
  \[
    \begin{tikzpicture}[node distance = 2.5cm, on grid]
      \node (SF) {$\Poly{\Pre{\InterpModal{\mu}}{\El{n}}}{\CSTy{m}}$};
      \node (S) [right = 3.5cm of SF] {$\CSTy{m}$};
      \node (TF) [below = 2cm of SF] {$\Poly{\Pre{\InterpModal{\mu}}{\El{n}}}{\CTy{m}}$};
      \node (T) [below = 2cm of S] {$\CTy{m}$};
      \path[->] (SF) edge node {} (TF);
      \path[->, densely dotted] (SF) edge node {} (S);
      \path[->] (S) edge node[right] {$\CLift$} (T);
      \path[->] (TF) edge node[below] {$\CPi$} (T);
    \end{tikzpicture}
    \quad
    \begin{tikzpicture}[node distance = 2.5cm, on grid]
      \node (SF) {$\Pre{\InterpModal{\mu}}{\CSTy{n}}$};
      \node (S) [right = 2.5cm of SF] {$\CSTy{m}$};
      \node (TF) [below = 2cm of SF] {$\Pre{\InterpModal{\mu}}{\CTy{n}}$};
      \node (T) [below = 2cm of S] {$\CTy{m}$};
      \path[->] (SF) edge node {} (TF);
      \path[->, densely dotted] (SF) edge node {} (S);
      \path[->] (S) edge node[right] {$\CLift$} (T);
      \path[->] (TF) edge node[below] {$\CModify$} (T);
    \end{tikzpicture}
    \quad
    \begin{tikzpicture}[node distance = 2.5cm, on grid]
      \node (SF) {$1$};
      \node (S) [right = 2.5cm of SF] {$\CSTy{m}$};
      \node (TF) [below = 2cm of SF] {$1$};
      \node (T) [below = 2cm of S] {$\CTy{m}$};
      \path[->] (SF) edge node {} (TF);
      \path[->, densely dotted] (SF) edge node {} (S);
      \path[->] (S) edge node[right] {$\CLift$} (T);
      \path[->] (TF) edge node[below] {$\CBool$} (T);
    \end{tikzpicture}
  \]
These factorisations ensure that type formation is closed under small types, and commutation ensures
that the coercions commute with the type formers defintitionally.

\subsubsection{The full definition}
\label{sec:semantics:model}

We have shown how to interpret each rule of \MTT{} through natural models. In fact, every step of
our working is reversible: each contraption we have introduced precisely corresponds to the portion
of the generalized algebraic theory it was used to interpret. In summary, we can make the following
definition.

\begin{defi}
  \label{def:semantics:semantics}
  A model of \MTT{} over $\Mode$ consists of
  \begin{itemize}
    \item a modal context structure for $\Mode$ (as in
    Definition~\ref{def:semantics:context-structure}), and a
    \item a modal natural model on that context structure (as in
    Definition~\ref{def:semantics:modal-natural-model})
  \end{itemize}
  such that the modal natural model supports
  \begin{itemize}
    \item dependent product types 
    \item dependent sum types (at each mode) 
    \item intensional identity types (at each mode) 
    \item modal types 
    \item a boolean type (at each mode), and 
    \item a universe of small types. 
  \end{itemize}
\end{defi}

\subsection{Morphisms of Models}
\label{sec:semantics:morphism}

The generalized algebraic theory (GAT) of \MTT{} also induces a notion of morphism between
models. Traditionally neglected, morphisms are of paramount importance when one produces semantic
proofs of metatheoretic properties, such as canonicity, a proof of which we will present in
Section~\ref{sec:canonicity}.

The last decade has seen much use of relatively weak morphisms of CwFs, \ie~morphisms which preserve
structures only up to isomorphism: see \eg~\cite{clairambault:2014,clouston:dra:2018}. However, our
proof of canonicity will require the strictest notion of CwF morphism, \ie~a GAT homomorphism. Such
morphisms preserve \emph{all} structure on-the-nose, including context extension, and were
introduced alongside CwFs by \cite{dybjer:1996}. As we are using natural models, we will use an
adaptation due to \cite[\S 2.3]{newstead:2018}. We believe that one may construct a biequivalence
or biadjunction between a category based on strict morphisms and one based on weaker ones, as done
by \eg~\cite{uemura:2019}, but we will leave that to future work.

\begin{defi}[Strict morphism of natural models]
  \label{def:semantics:morphism-of-natmod}
  A morphism of natural models $(\CC, \Mor[\El{c}]{\CTm{c}}{\CTy{c}}) \to (\DD,
  \Mor[\El{d}]{\CTm{d}}{\CTy{d}})$ comprises a functor $F : \CC \to \DD$ and a commuting diagram
  \begin{equation}
    \begin{tikzpicture}[on grid, node distance = 2.5cm, baseline=(current  bounding  box.center)]
      \node (TmC) {$\CTm{c}$};
      \node (TmD) [right = 4cm of TmC] {$\Pre{F}{\CTm{d}}$};
      \node (TyC) [below = 2cm of TmC] {$\CTy{c}$};
      \node (TyD) [below = 2cm of TmD] {$\Pre{F}{\CTy{d}}$};
      \path[->] (TmC) edge node [left] {$\El{c}$} (TyC);
      \path[->] (TmC) edge node [above] {$\widetilde{\varphi}$} (TmD);
      \path[->] (TyC) edge node [below] {$\varphi$} (TyD);
      \path[->] (TmD) edge node [right] {$\Pre{F}{\El{d}}$} (TyD);
    \end{tikzpicture}
    \label{diag:semantics:natmod-morphism}
  \end{equation}
  such that $F(\mathbf{1}) = \mathbf{1}$ and the canonical morphism $F(\CECx{\Gamma}{A}) \to
  \CECx{F(\Gamma)}{\varphi(A)}$ is an identity.
\end{defi}
The type $\varphi(A)$ in the last line is defined as follows. Given $\YoEm{A} : \Yo{\Gamma}
\Rightarrow \CTy{c}$ we let
\[
  k \defeq \Yo{\Gamma} \xrightarrow{\YoEm{A}} \CTy{c} \xrightarrow{\varphi} \Pre{F}{\CTm{d}} \\
\]
By Yoneda this induces a natural isomorphism
\begin{equation}
  \Hom[\PSH{\CC}]{\Yo{\Gamma}}{\Pre{F}{\CTy{d}}}
    \cong
  \Pre{F}{\CTy{d}}(\Gamma)
    =
  \CTy{d}(F(\Gamma))
    \cong
  \Hom[\PSH{\DD}]{\Yo{F\Gamma}}{\CTy{d}}
  \label{eq:natmod-morphism-natiso}
\end{equation}
We define $\YoEm{\phi(A)} : \Yo{F\Gamma} \Rightarrow \CTy{d}$ to be $k$ transported under this isomorphism. Also, let
\[
  \YoEm{M} : \Yo{\Gamma} \Rightarrow \CTy{c}
    \quad
      \longmapsto
    \quad
  \YoEm{\widetilde{\varphi}(M)} : \Yo{F(\Gamma)} \Rightarrow \CTy{d}
\]
which maps a term $\Gamma \vdash M : A$ to a term $F\Gamma \vdash \widetilde{\varphi
}(M) : \varphi(A)$ in a similar manner.

Returning to the last condition in the definition, we may now form the diagram
\[
  \begin{tikzpicture}[node distance=2.5cm, on grid]
    \node (FGA) {$\Yo{F(\CECx{\Gamma}{A})}$};
    \node[pullback] (FGFA) [below right = 1cm and 2cm of FGA] {$\Yo{\CECx{F\Gamma}{\varphi(A)}}$};
    \node (El) [right = 4cm of FGFA] {$\CTm{d}$};
    \node (FG) [below = 2cm of FGFA] {$\Yo{F\Gamma}$};
    \node (UU) [below = 2cm of El] {$\CTy{d}$};
    \path[->, densely dashed] (FGA) edge (FGFA);
    \path[->, bend right] (FGA) edge node[upright desc] {$\Yo{F\CWk}$} (FG);
    \path[->, bend left]  (FGA) edge node[upright desc] {$\YoEm{\widetilde{\varphi}(\CVar)}$} (El);
    \path[->] (FGFA) edge node [left] {$\Yo{\CWk}$} (FG);
    \path[->] (FGFA) edge node [above] {$\YoEm{\CVar}$} (El);
    \path[->] (FG) edge node [below] {$\YoEm{\varphi(A)}$} (UU);
    \path[->] (El) edge node [right] {$\El{d}$} (UU);
  \end{tikzpicture}
\]
where the outer square is the diagram composed by pasting together the context extension diagram for
$\CECx{\Gamma}{A}$ and \eqref{diag:semantics:natmod-morphism}, followed by transposing along the natural
isomorphism \eqref{eq:natmod-morphism-natiso}.  We then ask that the unique induced arrow be the
identity.

We can lift these natural transformations to the formation data of the connectives (making
special use of the final equality for the polynomial functors). For instance, we can define a
morphism
\[
  \Mor[(\varphi, \widetilde{\varphi})]{\Poly{\El{c}}{\CTy{c}}}{\Poly{\Pre{F}{\El{d}}}{\Pre{F}{\CTy{d}}}}
  \defeq
    \Poly{\El{c}}{\CTy{c}}
      \xrightarrow{}
    \Poly{\Pre{F}{\CTy{d}}}{\CTy{c}}
      \xrightarrow{\Poly{\Pre{F}{\CTy{d}}}{\varphi}}
    \Poly{\Pre{F}{\El{d}}}{\Pre{F}{\CTy{d}}}
\]
The first component comes from a natural transformation $\Poly{\El{c}}{-} \Rightarrow
\Poly{\Pre{F}{\CTy{d}}}{-}$, which exists because \eqref{diag:semantics:natmod-morphism} not only
commutes, but is a pullback square. That is a nontrivial fact proven laboriously by \cite[\S
2.3.14]{newstead:2018}.  A more conceptual proof is given by \cite[Cor. 3.14]{uemura:2019} in the
language of discrete fibrations.

We then require that all connectives ($\CPi$, $\CSig$, $\CRefl$) strictly commute with these
morphisms. Finally, we can extend this to a model of \MTT{} by requiring not just a functor, but a
natural transformation $\mathcal{C} \To \mathcal{D}$, where $\Mod{C},\Mod{D} : \Coop{\Mode} \to
\CAT$ satisfy the obvious generalizations of the conditions written above. Specifying this formally:
\begin{defi}
  A morphism between two models of \MTT{}, $\Mod{C},\Mod{D}$, is given by a 2-natural
  transformation $\alpha : \Mod{C} \To \Mod{D}$. Moreover, we require a choice of commuting
  squares:
  \[
    \begin{tikzpicture}[on grid, node distance = 2.5cm]
      \node[pullback] (TmC) {$\ModTm{C}{m}$};
      \node (TmD) [right = of TmC] {$\Pre{\alpha_m}{\ModTm{D}{m}}$};
      \node (TyC) [below = 2cm of TmC] {$\ModTy{C}{m}$};
      \node (TyD) [below = 2cm of TmD] {$\Pre{\alpha}{\ModTy{D}{m}}$};
      \path[->] (TmC) edge node [left] {$\ModEl{C}{m}$} (TyC);
      \path[->] (TmC) edge node [above] {$\widetilde{\varphi}_m$} (TmD);
      \path[->] (TyC) edge node [below] {$\varphi_m$} (TyD);
      \path[->] (TmD) edge node [right] {$\Pre{\alpha}{\ModEl{D}{m}}$} (TyD);
    \end{tikzpicture}
  \]
  Moreover, we require that $(\varphi,\widetilde{\varphi})$ strictly commutes with all operations.
  \begin{align*}
    \alpha_m(\ECx{\Gamma}{A}) &= \ECx{\alpha_m(\Gamma)}{\varphi(A)}&&\\
    \CPi \circ (\varphi, \varphi) &= \varphi \circ \CPi & \CLam \circ (\varphi, \widetilde{\varphi}) &= \widetilde{\varphi} \circ \CLam\\
    \CSig \circ (\varphi, \varphi) &= \varphi \circ \CSig & \CPair \circ (\varphi, \widetilde{\varphi}) &= \widetilde{\varphi} \circ \CPair\\
    \CModify \circ \Pre{\InterpModal{\mu}}{\varphi} &= \varphi \circ \CModify & \CMkBox \circ \Pre{\InterpModal{\mu}}{\widetilde{\varphi}} &= \widetilde{\varphi} \circ \CMkBox\\
    && \COpen[\mu]^\nu \circ (\widetilde{\varphi}, \Pre{\InterpModal{\mu}}{\widetilde{\varphi}}) &= \widetilde{\varphi} \circ \COpen[\mu]^\nu\\
    \CBool &= \varphi \circ \CBool & \CTrue = \widetilde{\varphi} \circ \CTrue &\quad \CFalse = \widetilde{\varphi} \circ \CFalse\\
     && \CRec \circ (\widetilde{\varphi},\widetilde{\varphi},\widetilde{\varphi}) &= \widetilde{\varphi} \circ \CRec\\
    \CId \circ (\varphi,\widetilde{\varphi},\widetilde{\varphi}) &= \varphi \circ \CId & \CRefl \circ \widetilde{\varphi} &= \widetilde{\varphi} \circ \CRefl\\
     && \CIdRec \circ (\widetilde{\varphi},\widetilde{\varphi}) &= \widetilde{\varphi} \circ \CIdRec
  \end{align*}
\end{defi}

\begin{rem}[The Initiality of Syntax]
  \label{rem:semantics:initiality}
  Under this definition of homomorphism, we immediately have an initial
  model~\cite{cartmell:1978,kaposi:qiits:2019}. We will \emph{define} this model to be our
  syntax and designate it $(\SynCat{m})_{m \in \Mode}$.
\end{rem}


\section{Canonicity}
\label{sec:canonicity}

Equipped with the generalized algebraic theory of Section~\ref{sec:algebraic-mtt} and its reformulation
through natural models in Section~\ref{sec:semantics}, we are ready to show that the syntax of \MTT{} is
well-behaved. In this section we will sketch the main parts of a proof of \emph{canonicity} for
\MTT{}. This is a basic well-behavedness property which guarantees that terms of ground type, e.g.~$\Bool$, can be normalized. As expected, the proof is independent of the mode theory.

\begin{prop}
  If $\IsTm[{}]{M}{\Bool}$, then either $\EqTm[{}]{M}{\True}{\Bool}$ or
  $\EqTm[{}]{M}{\False}{\Bool}$.
\end{prop}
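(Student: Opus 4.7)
The plan is to prove canonicity by a semantic gluing argument in the style of \cite{shulman:2015,coquand:2018,kaposi:gluing:2019}, adapted to the modal setting of \MTT{}. I would construct an auxiliary model $\mathcal{G} = (\mathcal{G}_m)_{m \in \Mode}$ of \MTT{} together with a strict homomorphism $p : \mathcal{G} \to (\SynCat{m})_{m \in \Mode}$ onto the syntactic model. Since $(\SynCat{m})_{m \in \Mode}$ is initial (Remark~\ref{rem:semantics:initiality}), there is a unique strict homomorphism $\iota$ in the reverse direction, and the composite $p \circ \iota$ must coincide with the identity. Thus every closed term of \MTT{} lifts along $p$ to a ``computable'' interpretation in $\mathcal{G}$, from which the canonicity witness at $\Bool$ can be read off.

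Concretely, I would adopt proof-relevant logical predicates. Each $\mathcal{G}_m$ would consist of pairs $(\Gamma, \widetilde{\Gamma})$ of a syntactic context $\Gamma$ at mode $m$ together with a presheaf $\widetilde{\Gamma}$ over the category of closed substitutions into $\Gamma$; types and terms are then interpreted as syntactic data together with a compatible family of computability predicates lying over them. The 2-functorial context structure of Definition~\ref{def:semantics:context-structure} is inherited from $(\SynCat{m})_m$: each $\Lock_\mu$ acts on the syntactic component as in the syntactic model and on the predicate component by reindexing along the lock, and similarly for the key substitutions arising from 2-cells. The non-modal connectives are handled by transporting the recipes of \S\ref{sec:semantics:connectives}; most importantly, the predicate at $\Bool$ on a closed term $M$ is the disjunction $\EqTm[{}]{M}{\True}{\Bool}$ or $\EqTm[{}]{M}{\False}{\Bool}$, and its elimination rule is validated by a standard case-split.

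The main obstacle is the interpretation of the modal type $\Modify{A}$. The natural candidate predicate on a closed term $M : \Modify{A}$ says that $M$ is definitionally equal to $\MkBox{M_0}$ for some $M_0$ at the adjacent mode whose interpretation lies in the predicate associated to $A$; validating the modal elimination rule then amounts to constructing the left lifting structure \eqref{eq:semantics:modal-elimination-structure} in the gluing model. The delicate point is orchestrating the predicate component so that it behaves coherently across the locks and respects the interchange laws imposed on keys. This constrains which categories of closed substitutions one can use at each mode, and is precisely the place where the ``technical restriction'' on the mode theory mentioned in the abstract enters the picture.

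Once these verifications are in place, the proof concludes by applying $\iota$ to $\IsTm[{}]{M}{\Bool}$. The second component of the resulting pair is, by the very definition of the predicate at $\Bool$, either a proof that $\EqTm[{}]{M}{\True}{\Bool}$ or a proof that $\EqTm[{}]{M}{\False}{\Bool}$, which is exactly the required dichotomy.
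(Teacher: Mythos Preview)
Your proposal is correct and follows essentially the same approach as the paper: a proof-relevant gluing model over closed substitutions, a projection homomorphism to syntax, and the initiality argument to extract canonicity witnesses from the predicate at $\Bool$. You also correctly locate the ``technical restriction'' as the coherence condition needed for locks to act on the predicate component; the paper makes this explicit by assuming $\LockCx{\Emp} = \Emp$, which is exactly what is required for $\LockSb{\Pred{\Gamma}(x)}$ to land in $\Sub{m}(\Emp,\LockCx{\Syn{\Gamma}})$ and for the key substitutions to act trivially on predicates.
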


This kind of result would traditionally be established by producing a rewriting system along with a
lengthy PER model construction. We will instead opt for a proof given by constructing a \emph{glued}
model~\cite{coquand:2018,kaposi:gluing:2019}. The contexts, types, and terms of this model will
contain a syntactic component (resp. a context, type, or term), along with a \emph{proof-relevant
predicate} that is appropriately fibred over it. The base types of this model are carefully chosen
so that a normal form can be extracted from proofs of the predicate. By interpreting a term of
ground type in the glued model we automatically obtain a proof of the predicate, from which we
extract a normal form.

Such proofs involve two steps: defining the glued construction, and proving that it is a model.
While the first step is often straightforward, the second usually involves checking innumerable
equations. In order to shorten the proof sketched here we will make a simplifying assumption (effectively adding an equation to the algebraic syntax): we
will assume that locks preserve the empty context, i.e. that
\[
  \EqCx{\LockCx{\Emp}}{\Emp}
\]
for $\mu : \Hom[\Mode]{m}{n}$. Using the universal property of the terminal context, this implies
\begin{equation}
  \EqSb[\LockSb{\Emp}]{\Key{\alpha}{\Emp}}{\Emp = \ISb}{\LockSb{\Emp}<\nu>}
  \label{eq:canonicity:terminal-key}
\end{equation}
Requiring this equation unfortunately limits our class of models to those where the left adjoint
strictly preserves the terminal product. Despite this simplification the proof remains rather long,
so we will only sketch the construction of the modal natural models. The missing details may be
found in an accompanying technical report.

\begin{rem}
  In what follows we will assume the existence of two Grothendieck universes $\VV' \subset \VV :
  \SET$. We could make do with just one but at the price of some contortions, which are both
  unnecessary and tiresome. We will assume that the sets of contexts, substitutions, types, and
  terms of the syntactic model are $\VV'$-small.
\end{rem}

\subsection{The Glued Model}

We begin by defining the context structure.

\begin{defi}[Glued Contexts]
  A glued context $\Gamma$ at mode $m$ consists of a context $\Syn{\Gamma} \in \Cx{m}$, a predicate
  $\Sem{\Gamma} \in \VV$, and a function
  \[
    \Pred{\Gamma} : \Sem{\Gamma} \to \Sub{m}(\Emp, \Syn{\Gamma})
  \]
\end{defi}

A glued context $\Gamma = (\Syn{\Gamma}, \Sem{\Gamma})$ can be thought of as a proof-relevant
predicate over substitutions into the syntactic context $\Syn{\Gamma}$. An element $x \in
\Sem{\Gamma}$ can be thought of as a proof that the predicate holds of the substitution
$\Pred{\Gamma}(x) : \Emp \to \Syn{\Gamma}$. We will henceforth use the metavariable $\Gamma$ to
range over \emph{glued contexts}, and denote contexts of the syntax by $\Syn{\Gamma}$.

\begin{defi}[Glued Substitutions]
  A glued substitution from $\Delta$ to $\Gamma$ at mode $m$ is a pair of a substitution
  $\Syn{\gamma} \in \Sub{m}(\Syn{\Delta},\Syn{\Gamma})$ and a function $\Sem{\gamma} : \Sem{\Delta}
  \to \Sem{\Gamma}$ such that
  \[
    \forall x \in \Sem{\Delta}.\ \Pred{\Gamma}(\Sem{\gamma}(x)) = \Syn{\gamma} \circ \Pred{\Delta}(x) : \Emp \to \Syn{\Gamma}
  \]
\end{defi}
Glued contexts and glued substitutions form a category, \viz\ the comma category
\[
  \InterpMode{m} \defeq \COMMA{1_{\VV}}{\Sub{m}(\Emp, -)}
\]
which we take as the category of contexts at mode $m$. Next, we define a 2-functor from $\Mode$
sending each $m$ to $\InterpMode{m}$. For each $\mu : \Hom[\Mode]{m}{n}$ a functor
$\InterpModal{\mu} : \InterpMode{n} \to \InterpMode{m}$ as by taking the $\InterpMode{n}$ morphism
\[
  \begin{tikzpicture}[on grid, node distance = 2.5cm, baseline = -1.25cm]
    \node (SemD) {$\Sem{\Delta}$};
    \node (SynD) [below = 2cm of SemD] {$\Sub{n}(\Emp, \Syn{\Delta})$};
    \node (SemG) [right = 6cm of SemD] {$\Sem{\Gamma}$};
    \node (SynG) [below = 2cm of SemG] {$\Sub{n}(\Emp, \Syn{\Gamma})$};
    \path[->] (SemD) edge node [above] {$\Sem{\gamma}$} (SemG);
    \path[->] (SynD) edge node [below] {$\Syn{\gamma} \circ -$} (SynG);
    \path[->] (SemD) edge node [upright desc] {$\Pred{\Delta}$} (SynD);
    \path[->] (SemG) edge node [upright desc] {$\Pred{\Gamma}$} (SynG);
  \end{tikzpicture}
\]
to the $\InterpMode{m}$ morphism:
\[
  \begin{tikzpicture}[on grid, node distance = 2.5cm, baseline = -1.25cm]
    \node (SemD) {$\Sem{\Delta}$};
    \node (SynD) [below = 2cm of SemD] {$\Sub{m}(\Emp, \LockCx{\Syn{\Delta}})$};
    \node (SemG) [right = 6cm of SemD] {$\Sem{\Gamma}$};
    \node (SynG) [below = 2cm of SemG] {$\Sub{m}(\Emp, \LockCx{\Syn{\Gamma}})$};
    \path[->] (SemD) edge node [above] {$\Sem{\gamma}$} (SemG);
    \path[->] (SynD) edge node [below] {$\LockSb{\Syn{\gamma}} \circ -$} (SynG);
    \path[->] (SemD) edge node [upright desc] {$\Pred{\LockSb{\Delta}}$} (SynD);
    \path[->] (SemG) edge node [upright desc] {$\Pred{\LockSb{\Gamma}}$} (SynG);
  \end{tikzpicture}
\]
where the function $\Pred{\LockSb{\Delta}}$ is defined by
\[
  \Pred{\LockSb{\Delta}}(x) \defeq \LockSb{\Pred{\Delta}(x)}\ :\ \Emp \to \LockSb{\Syn{\Delta}}
\]
Notice that the equation $\LockCx{\Emp} = \Emp$ is necessary to ensure that this definition is
well-typed. The diagram commutes because locks act functorially on substitutions. It is also
functorial in $\mu$, because $\LockCx{\LockCx{\Gamma}}<\nu> = \LockCx{\Gamma}<\mu \circ \nu>$, and
$\LockCx{\Gamma}<1> = \Gamma$.

We define a 2-cell $\InterpModal{\mu} \To \InterpModal{\nu}$ for each $\alpha : \nu \To \mu$. The
component at $(\Sem{\Gamma}, \Syn{\Gamma}, \Pred{\Gamma})$ is
\[
  \begin{tikzpicture}[on grid, node distance = 2.5cm]
    \node (SemD) {$\Sem{\Gamma}$};
    \node (SynD) [below = 2cm of SemD] {$\Sub{m}(\Emp, \LockCx{\Syn{\Gamma}})$};
    \node (SemG) [right = 6cm of SemD] {$\Sem{\Gamma}$};
    \node (SynG) [below = 2cm of SemG] {$\Sub{m}(\Emp, \LockCx{\Syn{\Gamma}}<\nu>)$};
    \draw[double] (SemD) -- (SemG);
    \path[->] (SynD) edge node [below] {$\Key{\alpha}{\Syn{\Gamma}} \circ -$} (SynG);
    \path[->] (SemD) edge node [upright desc] {$\Pred{\LockCx{\Gamma}}$} (SynD);
    \path[->] (SemG) edge node [upright desc] {$\Pred{\LockCx{\Gamma}<\nu>}$} (SynG);
  \end{tikzpicture}
\]
This diagram commutes because of \eqref{eq:canonicity:terminal-key}, so it is a morphism in comma
category. Naturality follows from the numerous equations pertaining to keys and their composition.

This completes the definition of a strict 2-functor $\Coop{\Mode} \to \CAT_1$ as per
Section~\ref{def:semantics:context-structure}. Next, we must define the modal natural model structure for
each category of contexts.

\begin{rem}
  For the rest of this section we will freely use type-theoretic notation, viewing the predicate
  $\Sem{\Gamma} \to \Sub{m}(\Emp, \Syn{\Gamma})$ as a family fibred over $\Sub{m}(\Emp,
  \Syn{\Gamma})$, \ie~a map $\Sub{m}(\Emp, \Syn{\Gamma}) \to \VV$.

  We will follow the convention that symbols annotated with $\Sem{(-)}$ correspond to proof-relevant
  constructions---i.e. members of the predicate, or maps between predicates---whereas symbols
  annotated with $\Syn{(-)}$ correspond to pieces of syntax (\eg~terms, contexts, substitutions). In
  particular, $\Sem{\gamma}$ will not necessarily refer to a fibred map between proof-relevant
  predicates, but also to a generalized element of $\Sem{\Gamma}$.

  In other words, when $\Sem{\gamma} \in \Sem{\Gamma}$ and $\Pred{\Gamma}(\Sem{\gamma}) =
  \Syn{\gamma}\ :\ \Emp \to \Syn{\Gamma}$, we will abusively write $\Sem{\gamma} :
  \Sem{\Gamma}(\Syn{\gamma})$. That is, we will view $\Sem{\gamma}$ as living in the fibre of
  $\Pred{\Gamma}$ over $\Syn{\gamma}$. This amounts to considering $\Sem{\gamma}$ as a \emph{proof}
  that the predicate $\Sem{\Gamma}$ holds at the substitution $\Syn{\gamma}$. Observe that if
  $\Sem{\gamma} : \Sem{\LockCx{\Gamma}}(\Syn{\gamma})$ then $\Syn{\gamma}$ must be of the form
  $\LockSb{\Syn{\theta}}$ for some $\Syn{\theta}\ :\ \Emp \to \Syn{\Gamma}$ with $\Sem{\gamma} :
  \Sem{\Gamma}(\Syn{\theta})$.
\end{rem}

Types over $\InterpMode{m}$ are given by the presheaf
\begin{align*}
  &\CTy{m}(\Gamma) \defeq \{\\
  &\quad \Syn{A} \in \Ty{m}[1](\Syn{\Gamma});\\
  &\quad \Sem{A} :
    (\Syn{\gamma} : \Sub{m}(\Emp,\Syn{\Gamma})) \to
    (\Sem{\gamma} : \Sem{\Gamma}(\Syn{\gamma})) \to
    \Tm{m}(\Emp, \Sb{\Syn{A}}{\Syn{\gamma}}) \to \VV\\
  &\}
\end{align*}
Extending this presheaf with some additional data gives a presheaf of terms over $\InterpMode{m}$:
\begin{align*}
  &\CTm{m}(\Gamma) \defeq \{\\
  &\quad \Syn{A} \in \Ty{m}[1](\Syn{\Gamma});\\
  &\quad \Sem{A} :
    (\Syn{\gamma} : \Sub{m}(\Emp,\Syn{\Gamma})) \to
    (\Sem{\gamma} : \Sem{\Gamma}(\Syn{\gamma})) \to
    \Tm{m}(\Emp, \Sb{\Syn{A}}{\Syn{\gamma}}) \to \VV\\
  &\quad \Syn{M} \in \Tm{m}(\Syn{\Gamma}, \Syn{A});\\
  &\quad \Sem{M} :
    (\Syn{\gamma} : \Sub{m}(\Emp,\Syn{\Gamma})) \to
    (\Sem{\gamma} : \Sem{\Gamma}(\Syn{\gamma})) \to
    \Sem{A}(\Syn{\gamma}, \Sem{\gamma}, \Sb{\Syn{M}}{\Syn{\gamma}})\\
  &\}
\end{align*}
Thus, a \emph{type} over $\Gamma = (\Syn{\Gamma}, \varphi_\Gamma)$ in the glued model consists of a
type $\IsTy[\Syn{\Gamma}]{\Syn{A}}[1]$ of \MTT{}, along with another predicate, a family of
$\VV$-small sets, indexed over both closing substitutions $\Syn{\gamma}$ that satisfy the predicate
$\Sem{\Gamma}$ and terms of type $\Sb{\Syn{A}}{\Syn{\Gamma}}$.

A \emph{term} over $\Gamma$ in the glued model adds to the above a term
$\IsTm[\Syn{\Gamma}]{\Syn{M}}{\Syn{A}}$ of that type, and a section $\Sem{M}$ of the aforementioned
predicate. This section produces a proof that the predicate holds at that term after we close it by
applying any substitution $\Syn{\gamma}$ of which the $\Sem{\Gamma}$ holds. The reindexing action of
these presheaves is defined by the action of substitution on contexts, types, and terms of \MTT{}.

It can be shown that the projection $\El{m}(\Gamma) \defeq (\Syn{A},\Sem{A},\Syn{M},\Sem{M}) \mapsto
(\Syn{A}, \Sem{A})$ that maps terms to types by forgetting the additional data defines a
representable natural transformation in the sense of Section~\ref{sec:semantics:standard-cwf}; the full
proof can be found in the technical report. With respect to the connectives, we only show how to
interpret the base type $\Bool$, as per Section~\ref{sec:semantics:standard-bool}. For the formation and
introduction rules we define:
\begin{align*}
  \Syn{\CBool} &= \Bool &
  \Sem{\CBool} &=
    \lambda \Syn{\gamma},\Sem{\gamma}, \Syn{M}.\
      (\Sb{\Syn{M}}{\Syn{\gamma}} = \True) + (\Sb{\Syn{M}}{\Syn{\gamma}} = \False)\\
  \Syn{\CTrue} &= \True &
  \Sem{\CTrue} &= \lambda \_.\ \iota_0(\star)\\
  \Syn{\CFalse} &= \False &
  \Sem{\CFalse} &= \lambda \_.\ \iota_1(\star)
\end{align*}
We must now define the left lifting structure $\CRec : [\CTrue,\CFalse] \pitchfork \El{m}$. In
type-theoretic notation:
\begin{align*}
  \Syn{\CRec(C, [c_0,c_1], M)} &= \BoolRec{\Syn{C}}{\Syn{c_0}}{\Syn{c_1}}{\Syn{M}}\\
  \Sem{\CRec(C, [c_0,c_1], M)} &= \lambda \Syn{\gamma},\Sem{\gamma}.\ %
  \begin{cases}
    \Sem{c_0}(\Syn{\gamma},\Sem{\gamma}) & \text{if $\Sem{M}(\Syn{\gamma},\Sem{\gamma}) = \iota_0(\star)$}\\
    \Sem{c_1}(\Syn{\gamma},\Sem{\gamma}) & \text{if $\Sem{M}(\Syn{\gamma},\Sem{\gamma}) = \iota_1(\star)$}
  \end{cases}
\end{align*}

\subsection{Deriving Canonicity}

With the gluing model constructed, the rest of the proof is surprisingly easy and boils down to one
fact, which is immediate by inspection:
\begin{thm}
  \label{thm:canonicity:projection}
  The 2-natural transformation $\pi : \InterpMode{-} \To \mathbb{S}[-]$ from the glued model to the
  syntactic model which forgets the predicates extends to a morphism of models.
\end{thm}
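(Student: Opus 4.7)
The strategy is to exhibit $\pi$ concretely as a strict morphism of models in the sense of the last definition of \S\ref{sec:semantics:morphism}, and then observe that every one of the required equations is true \emph{by construction} of the glued model. Concretely, I would set $\pi_m : \InterpMode{m} \to \SynCat{m}$ to be the forgetful functor $(\Syn{\Gamma}, \Sem{\Gamma}, \Pred{\Gamma}) \mapsto \Syn{\Gamma}$ on objects and $(\Syn{\gamma}, \Sem{\gamma}) \mapsto \Syn{\gamma}$ on morphisms. 2-Naturality amounts to the two equations $\pi_n \circ \InterpModal{\mu} = \LockCx{-} \circ \pi_m$ and $\pi_n \cdot \InterpKey{\alpha} = \Key{\alpha}{-} \cdot \pi_m$; but both sides are literally the same syntactic expression, since the definitions of $\InterpModal{\mu}$ and $\InterpKey{\alpha}$ in the glued model were obtained by applying $\LockSb{-}$ and $\Key{\alpha}{-}$ to the syntactic component and augmenting with a predicate on top.

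Next, I would take $\varphi_m$ and $\widetilde{\varphi}_m$ to be the evident projections on the glued presheaves: $\varphi_m(\Syn{A}, \Sem{A}) \defeq \Syn{A}$ and $\widetilde{\varphi}_m(\Syn{A}, \Sem{A}, \Syn{M}, \Sem{M}) \defeq (\Syn{A}, \Syn{M})$. Commutation of the square \eqref{diag:semantics:natmod-morphism} is immediate because $\El{m}$ acts as $(\Syn{A}, \Sem{A}, \Syn{M}, \Sem{M}) \mapsto (\Syn{A}, \Sem{A})$ on the glued side and as $(\Syn{A}, \Syn{M}) \mapsto \Syn{A}$ on the syntactic side. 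For the context extension condition, $\InterpMode{m}$-context extension was defined by pulling back against the syntactic extension, so $\pi_m(\ECx{\Gamma}{A})$ is exactly $\ECx{\Syn{\Gamma}}{\Syn{A}} = \ECx{\pi_m(\Gamma)}{\varphi_m(A)}$, and the canonical comparison is the identity.

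It then remains to check the long list of equations demanding that $(\varphi, \widetilde{\varphi})$ strictly commute with each piece of the structure ($\CPi, \CLam, \CSig, \CPair, \CModify, \CMkBox, \COpen, \CBool, \CTrue, \CFalse, \CRec, \CId, \CRefl, \CIdRec$). Each of these is a triviality of the same flavour: the glued interpretation of every connective was defined componentwise, with the syntactic slot being filled in by the corresponding syntactic operation and a semantic slot carrying the predicate or witness. Projecting away the semantic slot therefore reproduces exactly the syntactic operation. For the Boolean eliminator and the modal eliminator, this amounts to unfolding the case analysis (for $\CRec$) and the chosen lift (for $\COpen$) and verifying that their syntactic components are $\BoolRec{\Syn{C}}{\Syn{c_0}}{\Syn{c_1}}{\Syn{M}}$ and $\Open{\Syn{M_0}}{\Syn{M_1}}<\nu>[\mu]$ respectively---which is the case by construction.

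The only mildly delicate point, and the one I would expect to demand the most care, is the interaction of $\pi$ with the modal structure: both $\CModify$ and $\COpen$ involve the polynomial/lifting machinery of \S\ref{sec:semantics:standard-modify}, and one has to confirm that the transported lift $\COpen[\nu]^\mu$ in the glued slice category has its syntactic component given literally by the syntactic $\mathsf{let}_\nu$. This reduces, via the adjunction $\sum_Z \Adjoint Z^*$ and the naturality of transposition, to the observation that the pullback defining the generic modal terms object $M$ in the glued model was taken componentwise, so its projection to syntax is the corresponding syntactic pullback. Once this is verified, assembling everything into a morphism of models is immediate, and the theorem follows.
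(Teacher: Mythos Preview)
Your proposal is correct and is exactly what the paper has in mind: the paper states this theorem with no proof beyond the remark that it is ``immediate by inspection,'' and your argument spells out precisely why---every piece of structure in the glued model was defined with its syntactic component given literally by the corresponding syntactic operation, so projecting onto that component recovers the syntactic model on the nose. Your care with the modal eliminator and the slice-category machinery is more than the paper itself supplies (it defers those details to a technical report), but the reasoning is sound and matches the intended construction.
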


Thus, assuming $\EqCx{\LockCx{\Emp}}{\Emp}$ it follows that

\begin{cor}
  \label{cor:canonicity:witness}
  For any closed term $\IsTm[]{M}{A}$, there is a witness for $\Sem{\Interp{A}}(M)$.
\end{cor}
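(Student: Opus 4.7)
The plan is to invoke initiality of the syntactic model (Remark~\ref{rem:semantics:initiality}) together with Theorem~\ref{thm:canonicity:projection}. Since $\mathbb{S}[-]$ is the initial model of \MTT{} over $\Mode$, there exists a unique homomorphism $\Interp{-} : \mathbb{S}[-] \to \InterpMode{-}$ into the glued model. Post-composing with $\pi$ produces an endomorphism $\pi \circ \Interp{-} : \mathbb{S}[-] \to \mathbb{S}[-]$ which, again by initiality, must be the identity on the syntax.

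First I would spell out what this identity entails at each sort. Given a closed well-typed term $M$ of type $A$ at mode $m$, the interpretation $\Interp{M}$ lives in $\CTm{m}(\Interp{\Emp})$, so it unfolds into a tuple $(\Syn{A}', \Sem{A}', \Syn{M}', \Sem{M}')$. The equation $\pi \circ \Interp{-} = \mathrm{id}$, applied to both $A$ and $M$, forces $\Syn{A}' = A$ and $\Syn{M}' = M$ on the nose. Because $\Interp{-}$ is a strict morphism of models, it commutes with $\El{m}$, so the type component of $\Interp{M}$ must coincide with $\Interp{A}$, which gives $\Sem{A}' = \Sem{\Interp{A}}$.

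Next I would read off the witness. The remaining component $\Sem{M}'$ is a dependent function which, for each closing substitution $\Syn{\gamma} : \Emp \to \Emp$ and each $\Sem{\gamma} \in \Sem{\Interp{\Emp}}(\Syn{\gamma})$, delivers an element of $\Sem{A}'(\Syn{\gamma}, \Sem{\gamma}, \Sb{M}{\Syn{\gamma}})$. Instantiating with $\Syn{\gamma} = \ISb$ together with the canonical element inhabiting $\Sem{\Interp{\Emp}}(\ISb)$---which exists because $\Interp{\Emp}$ is terminal in the comma category $\InterpMode{m}$---and using that $\Sb{M}{\ISb} = M$, produces a witness for $\Sem{\Interp{A}}(M)$, as required.

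There is no real obstacle to this step: all of the effort lay in constructing the glued model and establishing Theorem~\ref{thm:canonicity:projection}. Granted those, the corollary is a direct, one-line consequence of the uniqueness of homomorphisms out of the initial model.
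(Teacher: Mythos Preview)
Your proposal is correct and follows essentially the same approach as the paper: invoke initiality together with Theorem~\ref{thm:canonicity:projection} to conclude $\pi \circ \Interp{-} = \mathrm{id}$, whence the syntactic component of $\Interp{M}$ is $M$ and its semantic component furnishes the witness. The paper compresses all of this into a single sentence, whereas you have (correctly) spelled out the instantiation at $\ISb$ and the terminal glued context; but the underlying argument is identical.
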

\begin{proof}
  By Theorem~\ref{thm:canonicity:projection} and initiality we must have $\pi(\Interp{M}) = M$, and
  so $\Sem{\Interp{M}}$ is a witness.
\end{proof}
\begin{thm}[Closed Term Canonicity]
  \label{thm:canonicity:canonicity}
  If $\IsTm[\Emp]{M}{A}$ is a closed term, then
  \begin{itemize}
  \item If $A = \Bool$ then $\EqTm[\Emp]{M}{\True}{\Bool}$ or
    $\EqTm[\Emp]{M}{\False}{\Bool}$.
  \item If $A = \Id{A_0}{N_0}{N_1}$ then $\EqTm[\Emp]{N_0}{N_1}{A_0}$ and
    $\EqTm[\Emp]{M}{\Refl{N_0}}{\Id{A_0}{N_0}{N_1}}$.
  \item If $A = \Modify[\nu]{A_0}$ then there is an $\IsTm[\Emp]{N}{A_0}<n>$ such that
    $\EqTm[\Emp]{M}{\MkBox[\nu]{N}}{\Modify[\nu]{A_0}}$.
  \end{itemize}
\end{thm}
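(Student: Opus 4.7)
The plan is to invoke Corollary~\ref{cor:canonicity:witness}, which supplies a witness $\Sem{\Interp{M}}$ of the predicate $\Sem{\Interp{A}}$ holding at $M$ along the unique closing substitution $\ISb : \Emp \to \Emp$, and then read off each of the three cases by inspecting the definition of $\Sem{\Interp{A}}$ at the corresponding type former.

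For $A = \Bool$ the work has already been done in the construction of the glued model: the gluing interpretation of $\Bool$ at a closed substitution $\Syn{\gamma}$ and a term $\Syn{M}$ is the disjoint sum $(\Sb{\Syn{M}}{\Syn{\gamma}} = \True) + (\Sb{\Syn{M}}{\Syn{\gamma}} = \False)$. Specialising to $\Syn{\gamma} = \ISb$ and $\Syn{M} = M$, the witness is a choice of one of these two equations, yielding exactly the stated dichotomy.

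For the $\Id$ and $\Modify$ cases, I would first record the gluing interpretations of these connectives in the glued model, which were elided from the sketch but follow precisely the same recipe as $\Bool$: at a type $\Id{A_0}{N_0}{N_1}$ the predicate $\Sem{\Interp{\Id{A_0}{N_0}{N_1}}}(\Syn{\gamma},\Sem{\gamma},\Syn{M})$ records a pair of proofs of $\EqTm{N_0}{N_1}{A_0}$ and $\EqTm{M}{\Refl{N_0}}{\Id{A_0}{N_0}{N_1}}$, while at $\Modify[\nu]{A_0}$ the predicate asserts the existence of a term $\IsTm[\LockCx{\Emp}<\nu>]{N}{A_0}<n>$ together with the equation $\EqTm{M}{\MkBox[\nu]{N}}{\Modify[\nu]{A_0}}$ (together, in each case, with a section of the corresponding refined predicate on $A_0$, which is irrelevant for the closed canonicity statement). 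As with $\Bool$, each predicate is equipped with the appropriate left lifting structure of \S\ref{sec:semantics:connectives} so that the eliminators are interpreted. Applying the corollary at the identity substitution and unpacking then yields the last two clauses directly.

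The main obstacle is not the final bookkeeping but the construction of these gluing interpretations for $\Id$ and $\Modify$: one must verify that the predicates above actually support the required left lifting structures for $\IdRec$ and $\Open{-}{-}<\nu>[\mu]$, and that they are natural in the glued category $\InterpMode{m}$. The simplifying assumption $\LockCx{\Emp} = \Emp$, together with the identity \eqref{eq:canonicity:terminal-key}, plays an essential role in the modal case: it is what makes the witness $N$ typeable in the genuine empty context at mode $n$, and what ensures that the key substitutions appearing in the naturality squares for $\Sem{\Interp{\Modify[\nu]{A_0}}}$ collapse to identities. Once these interpretations are checked — the verifications being analogous to, but somewhat more intricate than, those for $\Bool$ — the theorem follows immediately by unpacking the witness produced by Corollary~\ref{cor:canonicity:witness}.
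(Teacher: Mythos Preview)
Your proposal is correct and follows exactly the paper's approach: the paper's proof is the single line ``Immediate by Corollary~\ref{cor:canonicity:witness} and the definition of the semantic predicates at $\Bool$, $\Id{A_0}{N_0}{N_1}$, and $\Modify[\nu]{A_0}$ respectively,'' and you have simply (and accurately) unpacked what that entails, including correctly anticipating the shape of the elided predicates and the role of the assumption $\LockCx{\Emp} = \Emp$.
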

\begin{proof}
  Immediate by Corollary~\ref{cor:canonicity:witness} and the definition of the semantic predicates
  at $\Bool$, $\Id{A_0}{N_0}{N_1}$, and $\Modify[\nu]{A_0}$ respectively.
\end{proof}


\section{Dependent Right Adjoints}
\label{sec:dra}

Over the past couple of years the structure of a \emph{dependent right adjoint (DRA)} has arisen as
a natural notion of dependent modality in Martin-L\"of type theory. In this section we will study
the relationship between \MTT{} modalities and DRAs in detail. After reviewing the definition of a
DRA, we will prove that a suitably functorial collection of DRAs induces a model of \MTT{}. As
mentioned before, this implies that \MTT{} modalities are weaker than DRAs. Following that, we will
investigate sufficient conditions for extending an ordinary right adjoint to a DRA.

\subsection{Dependent right adjoints in natural models}
\label{sec:dra:dra}

A dependent right adjoint\footnote{DRAs were introduced by \cite{clouston:dra:2018} as
  endomodalities, but we generalise them to multiple modes.}
is an adaptation of the notion of adjunction to the dependent setting: instead of acting
on objects of the context category, the `right adjoint' only acts on types and terms.

Given a pair of natural models $(\DD, \El{\DD})$ and  $(\CC, \El{\CC})$, a DRA from the second to the
first comprises a functor $L : \DD \to \CC$ between the underlying context categories, as well as a
pullback diagram of the following shape in $\PSH{\DD}$:
\[
  \begin{tikzpicture}[node distance=2.5cm, on grid]
    \node[pullback] (ModTm) {$\Pre{L}{\CTm{\CC}}$};
    \node (Tm) [right = 4cm of ModTm] {$\CTm{\DD}$};
    \node (ModTy) [below = of ModTm] {$\Pre{L}{\CTy{\CC}}$};
    \node (Ty) [below = of Tm] {$\CTy{\DD}$};
    \path[->] (ModTm) edge node [left] {$\Pre{L}{\El{\CC}}$} (ModTy);
    \path[->] (ModTm) edge node [above] {$r$} (Tm);
    \path[->] (ModTy) edge node [below] {$R$} (Ty);
    \path[->] (Tm) edge node [right] {$\El{\DD}$} (Ty);
  \end{tikzpicture}
\]
$R$ is the action on types, and $r$ is the action on terms. Note that, while the `left adjoint' $L$
acts on context categories, the `right adjoint' $(R, r)$ only acts on types and terms. The fact the
square is a pullback amounts to requiring a multimode generalization of the definition given by
\cite{clouston:dra:2018}. Intuitively, for each term $\Gamma \vdash M : R(A)$, the pullback square
gives a unique term $L(\Gamma) \vdash N : A$ such that $\Gamma \vdash M = r(N) : R(A)$. If we wish
the modality to preserve the size of types, we must also require a $R'$ such that
\[
  \begin{tikzpicture}[node distance=2.5cm, on grid]
    \node (SF) {$\Pre{L}{\CSTy{\CC}}$};
    \node (S) [right = 4cm of ModTm] {$\CSTy{\DD}$};
    \node (TF) [below = 2cm of ModTm] {$\Pre{L}{\CTy{\CC}}$};
    \node (T) [below = 2cm of S] {$\CTy{\DD}$};
    \path[->] (SF) edge node [left] {} (TF);
    \path[->, densely dotted] (SF) edge node [above] {$R'$} (S);
    \path[->] (TF) edge node [below] {$R$} (T);
    \path[->] (S) edge node [right] {$\CLift$} (T);
  \end{tikzpicture}
\]

\subsection{DRAs as Models of \MTT{}}
\label{sec:dra:mtt-models}

We will now show that DRAs can be used to construct models of \MTT{}. As a consequence, \MTT{}
modalities are slightly weaker than DRAs.

\begin{thm}
  \label{thm:dra:dra}
  Suppose that we have
  \begin{itemize}
    \item for each $m \in \Mode$ a natural model $(\InterpMode{m},
      \Mor[\El{m}]{\CTm{m}}{\CTy{m}})_{m \in \Mode}$ of \MLTT{};
    \item for each modality $\mu : \Hom[\Mode]{m}{n}$ a size-preserving DRA $(\InterpModal{\mu},
      \CModify, \CMkBox)$ from $(\InterpMode{m}, \Mor[\El{m}]{\CTm{m}}{\CTy{m}})$ to
      $(\InterpMode{n}, \Mor[\El{n}]{\CTm{n}}{\CTy{n}})$;
    \item for each 2-cell $\alpha : \mu \To \nu$ in $\Mode$ a natural transformation
  $\InterpKey{\alpha} : \InterpModal{\nu} \To \InterpModal{\mu}$.
  \end{itemize}
  Moreover, suppose that the above choices are 2-functorial. Then this data can be assembled into a
  model of \MTT{}, where each each mode $m$ is interpreted by $(\InterpMode{m},
  \Mor[\El{m}]{\CTm{m}}{\CTy{m}})$.
\end{thm}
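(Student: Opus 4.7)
The plan is to assemble the three ingredients of Definition~\ref{def:semantics:semantics} separately. First, the assumed 2-functoriality immediately furnishes a modal context structure $\Coop{\Mode} \to \CAT_1$ in the sense of Definition~\ref{def:semantics:context-structure}: send $m \mapsto \InterpMode{m}$, $\mu \mapsto \InterpModal{\mu}$, and $\alpha \mapsto \InterpKey{\alpha}$. Each $\InterpMode{m}$ has a terminal object since it underlies a natural model of \MLTT{}.

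Next, I would verify that the given family $(\El{m})_{m \in \Mode}$ constitutes a modal natural model on this context structure (Definition~\ref{def:semantics:modal-natural-model}). Each $\El{m}$ is representable by hypothesis; the DRA square exhibits $\Pre{\InterpModal{\mu}}{\El{n}}$ as a pullback of $\El{m}$ along $\CModify$. Since representability of a natural transformation is stable under pullback along arbitrary morphisms of presheaves, $\Pre{\InterpModal{\mu}}{\El{n}}$ is itself representable, which is exactly what is required.

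For the connectives, the dependent product, dependent sum, identity, Boolean, and universe structure at each mode descend directly from the hypothesis that each $(\InterpMode{m}, \El{m})$ is a natural model of \MLTT{}. The only non-trivial case is the modal connective. The formation/introduction square \eqref{diag:form-intro-modal} is, verbatim, the DRA pullback, so we may take $\CModify$ and $\CMkBox$ to be the DRA's $R$ and $r$. For the left lifting structure \eqref{eq:semantics:modal-elimination-structure}, observe that the generic modal term object $M$ of \eqref{diag:pullback-generic-modal} is then already a pullback, so the canonical mediating map $m : \Pre{\InterpModal{\nu}}{\CTm{o}} \to M$ is an isomorphism. The functor $\Pre{\InterpModal{\mu}}{-}$ preserves this isomorphism, and any lifting problem against an iso admits a unique --- and hence natural --- filler. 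This yields $\COpen[\nu]^\mu$, with the associated $\beta$-equation holding on the nose by the uniqueness of the lift.

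The main obstacle is the verification of the remaining strict compatibility equations: the substitution laws such as $\Sb{\Modify[\mu]{A}}{\delta} = \Modify[\mu]{\Sb{A}{\LockSb{\delta}}}$ together with the analogous laws for $\MkBox[\mu]{-}$ and for the keys, as well as the naturality of $\InterpKey{\alpha}$ with respect to $\CModify$ and $\CMkBox$. These follow mechanically from the naturality of $\CModify$, $\CMkBox$, and $\InterpKey{\alpha}$ as morphisms of presheaves, combined with the 2-functoriality of $\InterpModal{-}$, but the bookkeeping across the numerous equations is the most tedious part of the verification. The size-preserving assumption on each DRA supplies the factorisation square for the modality through the universe of small types, so the constructed model is automatically closed under the universe-level modal type former.
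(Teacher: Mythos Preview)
Your treatment of the context structure, the modal natural model (via pullback stability of representability), and the modal elimination rule is correct and matches the paper's argument essentially verbatim: the DRA square being a pullback forces the mediating map $m$ to be an isomorphism, and lifts against an isomorphism exist uniquely.

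There is, however, one genuine gap. You write that ``the dependent product \ldots\ structure at each mode descend[s] directly from the hypothesis,'' and that ``the only non-trivial case is the modal connective.'' This is not so: the $\Pi$-types of \MTT{} are \emph{modal} $\Pi$-types, whose domain lives in the locked context $\LockCx{\Gamma}$. The semantic requirement is a pullback square involving the polynomial $\Poly{\Pre{\InterpModal{\mu}}{\El{n}}}{-}$, not the ordinary $\Poly{\El{m}}{-}$ supplied by the \MLTT{} hypothesis. The paper closes this gap by observing that your own choice of modal context extension, $\ECx{\Gamma}{A} \defeq \CECx{\Gamma}{\CModify(A)}$, exhibits the DRA pullback as a morphism of polynomials $\Poly{\Pre{\InterpModal{\mu}}{\El{n}}}{-} \Rightarrow \Poly{\El{m}}{-}$ with cartesian naturality squares; pasting such a square against the ordinary $\Pi$-pullback then yields the required modal one. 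This step is not difficult, but it is not automatic, and it genuinely uses the DRA pullback property a second time (beyond the modal elimination rule). Without it, you have no interpretation of the modal $\Pi$-former at all.
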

\begin{proof}
  Define a 2-functor $\Coop{\Mode} \to \CAT$ by $m \mapsto \InterpMode{m}$, $\mu \mapsto
  \InterpModal{\mu}$, and $\alpha \mapsto \InterpKey{\alpha}$. We must show how to define context extension, and how to interpret the connectives. As before, we only show the modal cases, the others being straightforward.

  \begin{description}
    \item[Modal Context Extension]
    For each type $\Mor[\YoEm{A}]{\Yo{\Gamma}}{\Pre{\InterpModal{\mu}}{\CTy{n}}}$ we need a pullback
    \[
      \begin{tikzpicture}[node distance=2.5cm, on grid]
        \node[pullback] (Gex) {$\Yo{\Gamma'}$};
        \node (Tm) [right = 3.5cm of ModTm] {$\Pre{\InterpModal{\mu}}{\CTm{n}}$};
        \node (G) [below = 2cm of Gex] {$\Yo{\Gamma}$};
        \node (Ty) [below = 2cm of Tm] {$\Pre{\InterpModal{\mu}}{\CTy{n}}$};
        \path[->] (Gex) edge node [left] {$\Yo{p'}$} (G);
        \path[->] (Gex) edge node [above] {$\YoEm{q'}$} (Tm);
        \path[->] (G) edge node [below] {$\YoEm{A}$} (Ty);
        \path[->] (Tm) edge node [right] {$\Pre{\InterpModal{\mu}}{\El{n}}$} (Ty);
      \end{tikzpicture}
    \]
    Write $\YoEm{\CModify(A)} \defeq \CModify \circ \YoEm{A}$. $\El{m}$ is a natural model, so form
    the pullback square for $\Gamma' \defeq \CECx{\Gamma}{\CModify(A)}$. Pasting this with the DRA
    pullback for $\CModify$ forms
    \begin{equation}
      \begin{tikzpicture}[node distance=2.5cm, on grid, baseline=(current  bounding  box.center)]
        \node[pullback] (Gex) {$\Yo{\CECx{\Gamma}{\CModify(A)}}$};
        \node (G) [below = 2cm of Gex] {$\Yo{\Gamma}$};
        \node[pullback] (Tm) [right = 3.5cm of ModTm] {$\Pre{\InterpModal{\mu}}{\CTm{n}}$};
        \node (Ty) [below = 2cm of Tm] {$\Pre{\InterpModal{\mu}}{\CTy{n}}$};
        \node (TmM) [right = 3.5cm of Tm] {$\CTm{m}$};
        \node (TyM) [below = 2cm of TmM] {$\CTy{m}$};
        \path[->] (Gex) edge node [left] {$\Yo{\CWk}$} (G);
        \path[->] (Gex) edge[bend left] node [above] {} (TmM);
        \path[densely dotted, ->] (Gex) edge node [above] {$\Transpose{\YoEm{\CVar}}$} (Tm);
        \path[->] (G) edge node [below] {$\YoEm{A}$} (Ty);
        \path[->] (Tm) edge node [right] {$\Pre{\InterpModal{\mu}}{\El{n}}$} (Ty);
        \path[->] (Tm) edge node [above] {$\CMkBox$} (TmM);
        \path[->] (Ty) edge node [below] {$\CModify$} (TyM);
        \path[->] (TmM) edge node [right] {$\El{m}$} (TyM);
      \end{tikzpicture}
      \label{diag:dra-to-modal}
    \end{equation}
    As the outer square commutes, we can fill in the dotted arrow. By the pullback lemma, the square
    on the left is a pullback too. Letting $\ECx{\Gamma}{A} \defeq \CECx{\Gamma}{\CModify(A)}$
    proves that $\left(\El{m}\right)_{m \in \Mode}$ is a modal natural model.

  \item[Modal Types]

    This is the heart of the proof. First, we need a commuting square
    \begin{equation}
      \begin{tikzpicture}[node distance=2.5cm, on grid, baseline=(current  bounding  box.center)]
        \node[pullback] (ModTm) {$\Pre{\InterpModal{\mu}}{\CTm{n}}$};
        \node (Tm) [right = 4cm of ModTm] {$\CTm{m}$};
        \node (ModTy) [below = 1.5cm of ModTm] {$\Pre{\InterpModal{\mu}}{\CTy{n}}$};
        \node (Ty) [below = 1.5cm of Tm] {$\CTy{m}$};
        \path[->] (ModTm) edge node [left] {} (ModTy);
        \path[->] (ModTm) edge node [above] {$\CMkBox$} (Tm);
        \path[->] (ModTy) edge node [below] {$\CModify$} (Ty);
        \path[->] (Tm) edge node [right] {$\El{m}$} (Ty);
      \end{tikzpicture}
      \label{diag:DRA-pullback-modal}
    \end{equation}
    Such a square is given as part of a DRA by definition, and is in fact a pullback!

    To model the elimination rule, recall the definition of the object $M$ used in Section~\ref{sec:semantics:standard-modify}:
    \[
      \begin{tikzpicture}[node distance=2.5cm, on grid]
        \node (ModTm) {$\Pre{\InterpModal{\mu \circ \nu}}{\CTm{o}}$};
        \node[pullback] (M) [below right = 2cm and 2cm of ModTm] {$\Pre{\InterpModal{\mu}}{M}$};
        \node (Tm) [right = 4cm of M] {$\Pre{\InterpModal{\mu}}{\CTm{n}}$};
        \node (ModTy) [below = 2cm of M] {$\Pre{\InterpModal{\mu \circ \nu}}{\CTy{o}}$};
        \node (Ty) [below = 2cm of Tm] {$\Pre{\InterpModal{\mu}}{\CTy{n}}$};
        \path[->, densely dashed] (ModTm) edge node [upright desc] {$\Pre{\InterpModal{\mu}}{m}$} (M);
        \path[->] (M) edge node {} (Tm);
        \path[->] (M) edge  node {} (ModTy);
        \path[->] (ModTm) edge[bend right] node [left] {$\Pre{\InterpModal{\mu \circ \nu}}{\El{o}}$} (ModTy);
        \path[->] (ModTm) edge[bend left] node [above] {$\Pre{\InterpModal{\mu}}{\CMkBox[\nu]}$} (Tm);
        \path[->] (ModTy) edge node [below] {$\Pre{\InterpModal{\mu}}{\CModify[\nu]}$} (Ty);
        \path[->] (Tm) edge node [right] {$\Pre{\InterpModal{\mu}}{\El{n}}$} (Ty);
      \end{tikzpicture}
    \]
    As $\Pre{\InterpModal{\mu}}$ preserves pullbacks, the outer square is a pullback too. Hence
    $\Pre{\InterpModal{\mu}}{m}$ must be an isomorphism. The elimination rule requires a
    left-lifting structure:
    \[
      \vdash \COpen[\nu]^\mu :
        \parens*{\Pre{\InterpModal{\mu}}{m}}
          \pitchfork
        \Pre*{(\Pre{\InterpModal{\mu \circ \nu}}{\CTy{o}})}{\El{m}[-]}
    \]
    Using the inverse of $\Pre{\InterpModal{\mu}}{m}$ we can construct this by
    \[
      \COpen[\nu]^\mu \defeq \lambda C.\ \lambda c.\ c \circ \Pre*{\InterpModal{\mu}}{m^{-1}}
    \]
  \item[$\prod$ Structure] \leavevmode


    Equipping each $\Mor[\El{m}]{\CTm{m}}{\CTy{m}}$ with a modal $\prod$ structure is relatively
    straightforward to do in the internal language; intuitively, the reason is the isomorphism
    \[
      \ReIdx{\parens{\Pre{\InterpModal{\mu}}{\El{n}}}}{A}
        \cong
      \ReIdx{\El{m}}{\CModify{A}}
    \]
    which is derived from the fact $\ECx{\Gamma}{A} \defeq \CECx{\Gamma}{\CModify{A}}$ (where
    the first dot is the defined context extension, and the second dot is given by the natural
    model). However, we can also prove it in a more abstract way: we paste together the two pullback
    squares
    \[
      \begin{tikzpicture}[node distance=2.5cm, on grid]
        \node[pullback] (PiTm) {$\Poly{\Pre{\InterpModal{\mu}}{\El{n}}}{\CTm{m}}$};
        \node (PiTy) [below = 2cm of PiTm] {$\Poly{\Pre{\InterpModal{\mu}}{\El{n}}}{\CTy{m}}$};
        \node[pullback] (PiTm2) [right = 4cm of PiTm] {$\Poly{\El{m}}{\CTm{m}}$};
        \node (PiTy2) [below = 2cm of PiTm2] {$\Poly{\El{m}}{\CTy{m}}$};
        \node (Tm) [right = 4cm of PiTm2] {$\CTm{m}$};
        \node (Ty) [below = 2cm of Tm] {$\CTy{m}$};
        \path[->] (Tm) edge node[right] {$\El{m}$} (Ty);
        \path[->] (PiTm) edge node[above] {$\phi_{\CTm{m}}$} (PiTm2);
        \path[->] (PiTy) edge node[below] {$\phi_{\CTy{m}}$} (PiTy2);
        \path[->] (PiTm) edge node[left] {$\Poly{\Pre{\InterpModal{\mu}}{\El{n}}}{\El{m}}$} (PiTy);
        \path[->] (PiTm2) edge node[left] {$\Poly{\El{m}}{\El{m}}$} (PiTy2);
        \path[->] (PiTm2) edge node[above] {$\CLam$} (Tm);
        \path[->] (PiTy2) edge node[below] {$\CPi$} (Ty);
      \end{tikzpicture}
    \]
    The square on the right is the pullback that interprets $\CPi$ in the natural model $\El{m}$.
    The square on the left is a naturality square of the natural transformation
    \[
      \phi :
        \Poly{\Pre{\InterpModal{\mu}}{\El{n}}}{-}
          \Rightarrow
        \Poly{\El{m}}{-}
    \]
    which exists because the pullback square \eqref{diag:DRA-pullback-modal} defines a morphism of
    polynomials. Moreover, the naturality squares of $\phi$ are cartesian: see \cite[\S\S
    1.2.16--1.2.18]{newstead:2018}.
    \qedhere
  \end{description}
\end{proof}

This theorem is a particularly flexible tool, as many modalities naturally form DRAs, and it is
easier to check the DRA conditions than \MTT{} model conditions as summarized in Definition
\ref{def:semantics:semantics}. As a first example of this flexibility we show that it leads to an
almost immediate proof of consistency.

\begin{cor}
  \label{cor:dra:consistency}
  No matter what the mode theory is, there is no term $\IsTm[\Emp]{M}{\Id{\Bool}{\True}{\False}}$.
  In other words, \MTT{} is consistent.
\end{cor}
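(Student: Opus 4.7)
The strategy is to invoke Theorem~\ref{thm:dra:dra} to convert any consistent natural model of $\MLTT$ into a model of $\MTT$ over the given mode theory $\Mode$, then appeal to initiality of the syntactic model (Remark~\ref{rem:semantics:initiality}) to transfer consistency back to the syntax. The hypothesis of Theorem~\ref{thm:dra:dra} is engineered so that a trivial, mode-theory-agnostic choice already satisfies everything: this is what makes a one-line reduction to $\MLTT$-consistency possible.

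Concretely, I would fix any natural model $(\CC, \El{})$ of $\MLTT$ with $\Pi$, $\Sigma$, $\Id$, $\Bool$ and a universe — e.g.\ the standard set-theoretic model — and, crucially, in which $\Id{\Bool}{\True}{\False}$ has no closed inhabitant. Then define, for every mode $m \in \Mode$, $\InterpMode{m} \defeq \CC$ and $(\El{m}) \defeq (\El{})$. For every morphism $\mu$ of $\Mode$ set $\InterpModal{\mu} \defeq \mathrm{id}_{\CC}$, with DRA action $(\CModify, \CMkBox) \defeq (\mathrm{id}_{\CTy{}}, \mathrm{id}_{\CTm{}})$; the commuting square collapses to $\El{}$ against itself, which is trivially a pullback and is size-preserving. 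For every 2-cell $\alpha$ set $\InterpKey{\alpha}$ to be the identity natural transformation. This assignment is strictly 2-functorial in $\Mode$ because every composite of identities is an identity, so Theorem~\ref{thm:dra:dra} assembles it into a model of $\MTT$ over $\Mode$.

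By initiality, the syntactic model $(\SynCat{m})_{m \in \Mode}$ maps uniquely into this semantic model; in particular any closed term $\IsTm[\Emp]{M}{\Id{\Bool}{\True}{\False}}$ is sent to a global element of $\Interp{\Id{\Bool}{\True}{\False}}$ in $\CC$. Since locks, modalities and 2-cells all act as identities, the interpretation of $\Id{\Bool}{\True}{\False}$ in mode $m$ is precisely its $\MLTT$ interpretation in $\CC$, and by the choice of $\CC$ this type has no global element. Hence no such $M$ exists.

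The only genuine content to verify is that the chosen $(\CC, \El{})$ has no closed inhabitant of $\Id{\Bool}{\True}{\False}$; this is the one step not supplied by Theorem~\ref{thm:dra:dra} itself, and I would simply cite a standard semantic consistency argument for $\MLTT$ (set-theoretic, simplicial-set, or groupoid model). No new obstacle arises from the presence of modalities or multiple modes, since they have all been rendered inert by the trivial choice of DRA data.
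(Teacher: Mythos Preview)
Your proposal is correct and follows essentially the same approach as the paper: interpret every mode by a fixed model of \MLTT{}, every modality by the identity functor (which is trivially a DRA), and every 2-cell by the identity transformation, then invoke Theorem~\ref{thm:dra:dra} and reduce to the consistency of \MLTT{}. The paper's proof is slightly terser but otherwise identical in content, including the final appeal to an external consistency result for \MLTT{}.
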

\begin{proof}
  Suppose that we have a model of \MLTT{} with one universe in some category $\CC$. We may construct
  a functor $\Coop{\Mode} \to \CAT$ by sending every mode to $\CC$, and everything else to the
  identity. This is stricty 2-functorial, and each identity functor is a DRA. Hence, by
  Theorem~\ref{thm:dra:dra} there is a model of \MTT{} in which each mode is interpreted by
  $\CC$. Therefore, if a term $M : \Id{\Bool}{\True}{\False}$ were definable in \MTT{}, we would
  have a term of that type in every model of \MLTT{}. But \MLTT{} itself is consistent: see
  \cite{coquand:2018} for a short proof.
\end{proof}

\subsection{DRAs from right adjoints}
\label{sec:dra:dras-from-right-adjoints}

Having established that a series of models of \MLTT{} related by DRAs can be used to interpret
\MTT{}, we now turn to the problem of constructing those DRAs themselves. We shall prove a lemma
that allows us to lift any well-behaved right adjoint to a DRA. Versions of this result have
appeared before, both in the paper on DRAs \cite[Lemma 17]{clouston:dra:2018}, and in a technical
report By the third author \cite[Prop.~2.1.4]{nuyts:tech-report:2018}.

In Section~\ref{sec:semantics:morphism} we discussed the notion of a \emph{strict} morphism of natural
models. Using the same notation we define the following weaker notion.

\begin{defi}
  A \emph{weak morphism of natural models} $(\CC, \CTy{c}) \to (\DD, \CTy{d})$ consists of a functor
  $F : \CC \to \DD$, and a commuting square
  \[
    \begin{tikzpicture}[on grid, node distance = 2.5cm]
      \node (TmC) {$\CTm{c}$};
      \node (TmD) [right = of TmC] {$\Pre{F}{\CTm{d}}$};
      \node (TyC) [below = 2cm of TmC] {$\CTy{c}$};
      \node (TyD) [below = 2cm of TmD] {$\Pre{F}{\CTy{d}}$};
      \path[->] (TmC) edge node [left] {$\El{c}$} (TyC);
      \path[->] (TmC) edge node [above] {$\widetilde{\varphi}$} (TmD);
      \path[->] (TyC) edge node [below] {$\varphi$} (TyD);
      \path[->] (TmD) edge node [right] {$\Pre{F}{\El{d}}$} (TyD);
    \end{tikzpicture}
  \]
  such that $F(1) = 1$ and the canonical morphism $F(\CECx{\Gamma}{A}) \to
  \CECx{F\Gamma}{\varphi(A)}$ is an isomorphism. We say that this morphism \emph{preserves size}
  whenever there is a commuting square
  \[
    \begin{tikzpicture}[on grid, node distance = 2.5cm]
      \node (STyC) {$\CSTy{c}$};
      \node (STyD) [right = of STyC] {$\Pre{F}{\CSTy{d}}$};
      \node (TyC) [below = 2cm of STyC] {$\CTy{c}$};
      \node (TyD) [below = 2cm of STyD] {$\Pre{F}{\CTy{d}}$};
      \path[->] (STyC) edge node [left] {$\CLift$} (TyC);
      \path[->] (STyC) edge node [above] {$\CSTy{\varphi}$} (STyD);
      \path[->] (TyC) edge node [below] {$\varphi$} (TyD);
      \path[->] (STyD) edge node [right] {$\Pre{F}{\CLift}$} (TyD);
    \end{tikzpicture}
  \]
\end{defi}

This kind of morphism can also be found in the thesis of \cite[\S\S 2.3.9]{newstead:2018}. We are
interested in it because it captures exactly the necessary good behaviour which is required to
extend a right adjoint to act on types and terms.

\begin{lem}
  \label{lem:constructing-dra:lifting-adjunction}
  Suppose that $(\CC, \El{\CC})$ and $(\DD, \El{\DD})$ are natural models, and that $L \Adjoint R$
  is an adjunction between $\CC$ and $\DD$. If the right adjoint $R : \CC \to \DD$ extends to a weak
  morphism of natural models then it gives rise to a dependent right adjoint. Moreover, the
  resulting DRA is size-preserving whenever $R$ is.
\end{lem}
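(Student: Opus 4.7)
The plan is to construct the DRA pullback square using the unit $\eta : 1_\DD \To RL$ of the adjunction $L \Adjoint R$, and then to verify the pullback property by reducing, pointwise, to the hom-set bijection of the adjunction. First, I apply the functor $\Pre{L}$ to the weak-morphism square of $R$, obtaining a commuting square whose horizontal arrows land in $\Pre{L}\Pre{R}(\CTm{\DD}) = \Pre{RL}(\CTm{\DD})$ and $\Pre{L}\Pre{R}(\CTy{\DD}) = \Pre{RL}(\CTy{\DD})$. The unit $\eta$ induces a natural transformation $\Pre{\eta} : \Pre{RL} \To 1_{\PSH{\DD}}$; post-composing with its components at $\CTm{\DD}$ and $\CTy{\DD}$ produces the desired arrows $r$ and $R$ of the DRA. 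Explicitly, for $A \in \CTy{\CC}(Ld)$ the induced type is $R(A) \defeq \Sb{\varphi(A)}{\eta_d}$ in $\CTy{\DD}(d)$, and analogously for terms via $\widetilde{\varphi}$. Commutativity of the resulting square is automatic from the functoriality of $\Pre{L}$ applied to the weak-morphism square and the naturality of $\Pre{\eta}$ at $\El{\DD}$.

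To show this square is a pullback I work pointwise at each $d \in \DD$. The condition reduces to the claim that, for every $A \in \CTy{\CC}(Ld)$, the action of $\widetilde{\varphi}$ followed by reindexing along $\eta_d$ induces a bijection $\CTm{\CC}(Ld, A) \cong \CTm{\DD}(d, R(A))$. I present both sides as sections of a weakening display map. A term of $R(A)$ over $d$ is a section of $\CWk : \CECx{d}{R(A)} \to d$. Since $R(A) = \Sb{\varphi(A)}{\eta_d}$, standard stability of context extension exhibits $\CECx{d}{R(A)}$ as the pullback of $\CECx{RLd}{\varphi(A)}$ along $\eta_d$. The weak-morphism hypothesis supplies an isomorphism $R(\CECx{Ld}{A}) \cong \CECx{RLd}{\varphi(A)}$ over $RLd$, so a section $d \to \CECx{d}{R(A)}$ is equivalently a morphism $f : d \to R(\CECx{Ld}{A})$ satisfying $R(\CWk) \circ f = \eta_d$.

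The main obstacle, where every piece of structure is brought to bear at once, is translating this condition across the adjoint transpose $\Sub{\DD}(d, R(-)) \cong \Sub{\CC}(Ld, -)$. Under it $f$ corresponds to $\hat{f} : Ld \to \CECx{Ld}{A}$, and the requirement $R(\CWk) \circ f = \eta_d$ becomes $\CWk \circ \hat{f} = \widehat{\eta_d} = \epsilon_{Ld} \circ L(\eta_d)$. The triangle identity of the adjunction collapses the right-hand side to $1_{Ld}$, so $\hat{f}$ is exactly a section of $\CWk$---that is, a term of type $A$ over $Ld$ in $\CC$. Together with the manifest naturality of every step in $d$, this yields the desired pullback square.

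For size preservation, the same construction applied to the size-preservation square of the weak morphism produces $R' : \Pre{L}(\CSTy{\CC}) \to \CSTy{\DD}$; its compatibility with $\CLift$ is inherited from the commutativity of that square by the functoriality of $\Pre{L}$ and naturality of $\Pre{\eta}$.
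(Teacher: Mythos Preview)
Your proof is correct and follows essentially the same route as the paper: construct the DRA square by applying $\Pre{L}$ to the weak-morphism square and post-composing with $\Pre{\eta}$, then verify the pullback via the weak-morphism isomorphism $R(\CECx{Ld}{A}) \cong \CECx{RLd}{\varphi(A)}$, the adjoint transpose, and a triangle identity. The only stylistic difference is that the paper writes down the inverse explicitly---forming $k \defeq \epsilon_{\CECx{L\Delta}{A}} \circ L(\nu_{\Delta,A} \circ \ESb{\eta_\Delta}{M})$ and extracting $N \defeq \Sb{\CVar}{k}$---whereas you package the same computation as a chain of bijections between sections; the content is identical.
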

\begin{proof}
  We first fix some notation: we write $\eta : \textsf{Id} \Rightarrow RL$ for the unit of the
  adjunction $L \Adjoint R$. Moreover, we assume a commuting square
  \begin{equation}
    \begin{tikzpicture}[on grid, node distance = 2.5cm, baseline=(current  bounding  box.center)]
      \node (TmC) {$\CTm{\CC}$};
      \node (TmD) [right = of TmC] {$\Pre{R}{\CTm{\DD}}$};
      \node (TyC) [below = 2cm of TmC] {$\CTy{\CC}$};
      \node (TyD) [below = 2cm of TmD] {$\Pre{R}{\CTy{\DD}}$};
      \path[->] (TmC) edge node [left] {$\El{\CC}$} (TyC);
      \path[->] (TmC) edge node [above] {$\textsf{r}$} (TmD);
      \path[->] (TyC) edge node [below] {$\textsf{R}$} (TyD);
      \path[->] (TmD) edge node [right] {$\Pre{R}{\El{\DD}}$} (TyD);
    \end{tikzpicture}
    \label{diag:constructing-dra:lifting-adjunction:weakcwf}
  \end{equation}
  that witnesses the weak natural model morphism structure of $R$, and write
  \[
    \nu_{\Gamma, A} :
      \CECx{R\Gamma}{\textsf{R}(A)}
        \xrightarrow{\cong}
      R(\CECx{\Gamma}{A})
  \]
  for the canonical isomorphism corresponding to $\YoEm{A} : \Yo{\Gamma} \Rightarrow \CTy{\CC}$.
  
  We construct the DRA by first applying the weak morphism $\textsc{R}$ to a dependent type over a
  context of the form $L(\Delta)$, and then pulling that back along the unit of the adjunction.
  Diagrammatically, we define the square
  \[
    \begin{tikzpicture}[node distance=2.5cm, on grid]
      \node (ModTm) {$\Pre{L}{\CTm{\CC}}$};
      \node (ModTy) [below = 2cm of ModTm] {$\Pre{L}{\CTy{\CC}}$};
      \node (LRTm) [right = 2.5cm of ModTm] {$\Pre{L}\Pre{R}{\CTm{\DD}}$};
      \node (LRTy) [below = 2cm of LRTm] {$\Pre{L}\Pre{R}{\CTy{\DD}}$};
      \node (Tm) [right = 2.5cm of LRTm] {$\CTm{\DD}$};
      \node (Ty) [below = 2cm of Tm] {$\CTy{\DD}$};
      \path[->] (ModTm) edge node [left] {$\Pre{L}{\El{\CC}}$} (ModTy);
      \path[->] (ModTm) edge node [above] {$\Pre{L}{\textsf{r}}$} (LRTm);
      \path[->] (ModTy) edge node [below] {$\Pre{L}{\textsf{R}}$} (LRTy);
      \path[->] (Tm) edge node [right] {$\El{\DD}$} (Ty);
      \path[->] (LRTm) edge node [upright desc] {$\Pre{L}\Pre{R}{\El{\DD}}$} (LRTy);
      \path[->] (LRTm) edge node [above] {$\Pre{\eta}{}_{\CTm{\DD}}$} (Tm);
      \path[->] (LRTy) edge node [below] {$\Pre{\eta}{}_{\CTy{\DD}}$} (Ty);
    \end{tikzpicture}
  \]
  The left part is the image of
  \eqref{diag:constructing-dra:lifting-adjunction:weakcwf} under the $\Pre{L}$ functor, and the
  right part is a naturality square for the natural transformation $\Pre{\eta} : \Pre{L}\Pre{R} =
  \Pre{(RL)} \Rightarrow \textsf{Id}$ induced by the unit.

  To show that this is a DRA we must show that this is a pullback, and it suffices to do so on the representables.  Assume
  we have $\YoEm{A} : \Yo{\Delta} \Rightarrow \Pre{L}\CTy{\CC}$ and a $\YoEm{M} : \Yo{\Delta}
  \Rightarrow \CTm{\DD}$ such that the diagram commutes. Switching to type-theoretic notation, this
  amounts to a type $L(\Delta) \vdash A\ \mathsf{type}$---which gives rise to a type $R(L(\Delta))
  \vdash \textsf{R}(A)\ \mathsf{type}$ by applying $\textsf{R}$---and a term $\Delta \vdash M :
  \Sb{\textsf{R}(A)}{\eta_\Delta}$. The universal property of the pullback dictates that we must
  show the existence of a unique term $L(\Delta) \vdash N : A$ such that
  \begin{equation}
    RL(\Delta) \vdash
    \Sb{\textsf{r}(N)}{\eta_\Delta}
    =
    M : \Sb{\textsf{R}(A)}{\eta_\Delta}
    \label{eq:constructing-dra:lifting-adjunction:weakcwf}
  \end{equation}
  First, observe that we can form the substitution $\ESb{\eta_\Delta}{M} : \Delta \to
  \CECx{RL\Delta}{\textsf{R}(A)}$. We can then postcompose the isomorphism $\nu_{\Delta, A}$ to
  obtain a morphism of type $\Delta \to R(\CECx{L\Delta}{A})$. To this we can apply $L$ and
  postcompose the counit $\epsilon_{\CECx{L\Delta}{A}}$ to obtain a substitution
  \[
    k \defeq \epsilon_{\CECx{L\Delta}{A}} \circ L(\nu_{\Delta, A} \circ \ESb{\eta_\Delta}{M})
             : L\Delta \to \CECx{L\Delta}{A}
  \]
  Using naturality of the counit and the equations satisfied by the canonical isomorphism
  $\nu_{\Delta, A}$, it is easy to show that $\CWk \circ k = \ISb : L\Delta \to L\Delta$, and hence
  that we can extract a term
  \[
    L(\Delta) \vdash N \defeq \Sb{\CVar}{k} : A
  \]
  Using naturality of $\textsf{r}(-)$, naturality of the unit, and one of the triangle identities,
  we can calculate that this term satisfies equation
  \eqref{eq:constructing-dra:lifting-adjunction:weakcwf}. Finally, we can prove this choice is
  unique by calculating that any such $N$ necessarily satisfies $k = \CECx{\ISb}{N}$, and hence that
  $\Sb{\CVar}{k} = N$.

  It is routine to show that this is size-preserving, using the fact that $\textsf{R}$ preserves
  size.
\end{proof}

The converse is not in general true: a dependent right adjoint need not extend to a functor on the
category of contexts. Nevertheless, it does whenever the category of contexts is \emph{democratic}
\cite{clairambault:2014}, i.e. if every context is isomorphic to extending the empty context by
some type: see \cite[\S 4.1]{clouston:dra:2018} for a proof.


\section{Presheaf Models}
\label{sec:presheaves}

It is well-known that the category $\PSH{C}$ of presheaves over any small category $\CC$ is a model
of Martin-L\"of type theory. A functor $\mu : \CC \to \DD$ induces by \emph{precomposition} a
functor
\[
  \Pre{\mu} : \PSH{\DD} \to \PSH{\CC}
\]
between categories of presheaves. This functor has a right adjoint
\[
  \RKan{\mu} : \PSH{\CC} \to \PSH{\DD}
\]
given by \emph{right Kan extension} \cite[\S X.3]{maclane:1978} \cite[\S 9.6]{awodey:2010} \cite[\S
6]{riehl:2016}. We show that an appropriately functorial version of this structure can be
bootstrapped into a model of \MTT{}, where the modalities are right adjoints to this precomposition
functor. More concretely, starting with a small 2-category $\II$, and a functor
\[
  J : \II \to \CAT
\]
we will construct a model of \MTT{} where each mode corresponds to the category $\PSH{J(i)}$, and the modalities are the functors $\RKan{J(f)}$, for each $f \in \Hom[\II]{i_0}{i_1}$.

\paragraph{Context structure.} We define a strict 2-functor $\Interp{-} : \Coop{\II} \to \CAT$ by
\begin{alignat*}{2}
  &i\                &&\longmapsto
    \InterpMode{i} \defeq \PSH{J(i)} \\
  &f : i \to j\      &&\longmapsto
    \InterpModal{f} \defeq \Pre{J(f)} : \PSH{J(j)} \to \PSH{J(i)} \\
  &\alpha : f \To g\ &&\longmapsto
    \InterpKey{\alpha} \defeq \Pre{J(\alpha)} : \Pre{J(g)} \To \Pre{J(f)}
\end{alignat*}
The variance is correct: recall that precomposition is a strict 2-functor
\[
  \Pre{(-)} : \Coop{\CAT} \to \CAT
\]
which maps a functor $f : \CC \to \DD$ to $\Pre{f} : \PSH{\DD} \to \PSH{\CC}$, and a natural
transformation $\alpha : f \To g$ to $\Pre{\alpha} : \Pre{g} \To \Pre{f}$, given by
$\Pre{\alpha}_{P, c} \defeq P(\alpha_c) : P(g(c)) \to P(f(c))$. 2-functoriality is immediate, as for
example $\Pre{J(f)} \circ \Pre{J(g)} = \Pre{(J(g) \circ J(f))} = \Pre{J(g \circ f)}$.

\paragraph{Modal natural models.}

To interpret the `mode-local' structure we must construct a modal natural model in each
$\InterpMode{i}$. It is well-known that every presheaf topos $\PSH{\CC}$ gives rise to a rich model
of \MLTT{}: see e.g. \cite[\S 4.1]{hofmann:1997} or \cite{coquand:2013}.

\emph{Contexts} are interpreted as objects of the presheaf category $\PSH{\CC}$. \emph{Types} are
presheaves $\PSH{\EL{\Gamma}}$ over the category of elements $\EL{\Gamma}$ of a context $\Gamma :
\PSH{\CC}$.\footnote{There is an equivalence $\PSH{\EL{\Gamma}} \Equiv \SLICE{\PSH{\CC}}{\Gamma}$
which shows that types are families $P \To \Gamma$ in the slice category. However, using the latter
definition would lead to strictness issues.} We define the action of a substitution $\sigma : \Delta
\To \Gamma$ on a type $A : \PSH{\EL{\Gamma}}$ by
\[
  A\brackets{\sigma} \defeq
    \Op{(\EL{\Delta})} \xrightarrow{\Op{(\EL{\sigma})}} \Op{(\EL{\Gamma})} \xrightarrow{A} \SET
\]
This is functorial because $\EL{-} : \PSH{\CC} \to \CAT$ and $\Op{-} : \CAT \to \CAT$ are.

A \emph{term} of type $A$ is a global section of $A$, i.e. a morphism
$\Hom[\PSH{\EL{\Gamma}}]{1}{A}$. We define the action of a substitution $\sigma : \Delta \To \Gamma$
on a term $M : \Hom{1}{A}$ by whiskering:
\[
  M\brackets{\sigma} \defeq
    M \ast \Op{(\EL{\sigma})} :
      1 \circ \Op{(\EL{\sigma})} \To A \circ \Op{(\EL{\sigma})} = A\brackets{\sigma}
\]
As $1 \circ \Op{\EL{\sigma}} = 1$, this has the right type. It is functorial because whiskering is.

\begin{rem}[Size Issues]
  One cannot be too careful with size issues when considering presheaf models. In
  Section~\ref{def:semantics:representable-nat} we demanded that the category of contexts be \emph{small},
  so that we can then formulate a large category of models. $\PSH{\CC}$ is certainly not small. We
  can mend this by assuming a Grothendieck universe $\VV$ large enough to contain $\CC$ in the
  ambient set theory, and re-defining $\PSH{\CC}$ to consist of the presheaves
  $P : \Op{\CC} \to \VV$ with \emph{small fibers}. As $\VV$ is closed under all set-theoretic
  operations, this is still a model, and $\PSH{\CC}$ is small.

  To interpret universes we need to know that the fibers of types in $\PSH{\EL{\CC}}$ are even
  \emph{smaller}. Thus, we further assume a second, inner Grothendieck universe $\VV' \subset \VV$.
  To a type theorist, this is just the standard technique of `bumping' a universe level.
\end{rem}

\paragraph{Connectives.}

Presheaf models support dependent sums and products, and extensional identity types (and therefore
intensional identity types): see \cite[\S 4.2]{hofmann:1997}. On the premise that the underlying set
theory has a set-theoretic universe, they also support a universe, through a construction of
\cite{hofmann-streicher:1997}. See also \cite{coquand:2013}.

\paragraph{Modalities.}

It remains to show that each $\InterpModal{f} \defeq \Pre{J(f)} : \PSH{J(j)} \to \PSH{J(i)}$ has a
corresponding modality acting on terms and types. We will do so by using previous results. By
Theorem \ref{thm:dra:dra}, it suffices to construct a \emph{dependent right adjoint} to
$\Pre{J(f)}$. But recall that each lock functor $\Pre{J(f)}$ already has an (ordinary) right
adjoint, viz.\ $\RKan{J(f)}$. Thus, by Lemma \ref{lem:constructing-dra:lifting-adjunction} it will
suffice to show that the action of this right adjoint extends to types and terms. We then have DRAs,
and hence a model of $\MTT{}$.

The following result has previously been shown in a tech report by the third author~\cite[Prop.
2.2.9]{nuyts:tech-report:2018}. We reproduce it here for the sake of completeness.

\begin{lem}
  \label{lem:constructing-dra:direct-image}
  The right adjoint to precomposition $\RKan{\mu} : \PSH{\CC} \to \PSH{\DD}$ induces a DRA for any
  $\mu : \CC \to \DD$. Moreover, $\RKan{\mu}$ is size-preserving for any Grothendieck universe.
\end{lem}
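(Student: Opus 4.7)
The plan is to invoke Lemma~\ref{lem:constructing-dra:lifting-adjunction} on the adjunction $\Pre{\mu} \Adjoint \RKan{\mu}$: it then suffices to show that $\RKan{\mu}$ extends to a weak morphism of natural models $(\PSH{\CC}, \El{\CC}) \to (\PSH{\DD}, \El{\DD})$, and that this extension is size-preserving. Reducing in this way frees us from verifying the pullback condition of a DRA from scratch; the adjunction machinery in the lemma handles that.

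For the action on types and terms, I would exploit the slice-category perspective on types in a presheaf model. A type $A$ over $\Gamma \in \PSH{\CC}$ can be viewed as a display map $\CECx{\Gamma}{A} \to \Gamma$ in $\PSH{\CC}$. Since $\RKan{\mu}$ is a right adjoint and therefore preserves all limits, it sends this map to $\RKan{\mu}(\CECx{\Gamma}{A}) \to \RKan{\mu}(\Gamma)$ in $\PSH{\DD}$, and I define $\RKan{\mu}(A)$ to be the resulting type (working with the standard presheaf-of-elements presentation). The action on terms follows analogously, by sending a section $\Gamma \to \CECx{\Gamma}{A}$ to its image under $\RKan{\mu}$. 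Naturality of this assignment in the ambient context is inherited from the functoriality of $\RKan{\mu}$.

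The heart of the proof is showing that the canonical comparison morphism $\RKan{\mu}(\CECx{\Gamma}{A}) \to \CECx{\RKan{\mu}\Gamma}{\RKan{\mu}(A)}$ is an isomorphism. This is precisely a consequence of $\RKan{\mu}$ preserving limits: context extension in a presheaf natural model is built from pullbacks (and from the total-object construction on discrete fibrations), and by construction $\RKan{\mu}(A)$ was chosen so that its extension is $\RKan{\mu}(\CECx{\Gamma}{A})$ on the nose (up to the relevant structural isomorphism between the two equivalent presentations of types). Preservation of the terminal context is automatic because $\RKan{\mu}$ is a right adjoint and hence preserves the terminal object.

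For size preservation, I would use the explicit pointwise formula $\RKan{\mu}(P)(d) = \lim_{(d \to \mu c)} P(c)$, which expresses $\RKan{\mu}$ as a small limit (the indexing comma category $(d \downarrow \mu)$ is small because $\CC$ and $\DD$ are). If $P$ takes values in a Grothendieck universe $\VV'$, then so does $\RKan{\mu}(P)$, since Grothendieck universes are closed under small limits. Applying this fibrewise to the presheaf of elements of a small type gives that $\RKan{\mu}$ sends small types to small types, verifying the size-preservation clause. The main obstacle I anticipate is bookkeeping around the two equivalent presentations of types (as presheaves on the category of elements versus as objects of the slice) and ensuring that the chosen canonical comparison really is an isomorphism at the level of natural-model data; everything else is driven by the formal property that $\RKan{\mu}$ preserves limits.
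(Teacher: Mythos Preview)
Your overall strategy---reduce to Lemma~\ref{lem:constructing-dra:lifting-adjunction} and check that $\RKan{\mu}$ is a weak, size-preserving morphism of natural models---is exactly what the paper does. The divergence is in how you propose to define the action on types and terms.

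You go through the slice presentation: view $A$ as a display map $\CECx{\Gamma}{A} \to \Gamma$, apply $\RKan{\mu}$, then convert back to a type over $\RKan{\mu}\Gamma$. The paper deliberately avoids this route; its footnote introducing the presheaf model already warns that ``using the latter definition would lead to strictness issues.'' The equivalence $\PSH{\EL{\Gamma}} \simeq \SLICE{\PSH{\CC}}{\Gamma}$ is only an equivalence, so round-tripping through it does not automatically produce a natural transformation $\CTy{\CC} \Rightarrow \Pre{(\RKan{\mu})}\CTy{\DD}$ that is strictly compatible with reindexing. In particular, your claim that ``by construction $\RKan{\mu}(A)$ was chosen so that its extension is $\RKan{\mu}(\CECx{\Gamma}{A})$ on the nose'' is too optimistic: converting a display map back to a type and then forming the context extension returns something canonically isomorphic to the original domain, not equal, and the required naturality equation $(\RKan{\mu}A)\brackets{\RKan{\mu}\gamma} = \RKan{\mu}(A\brackets{\gamma})$ is not obviously strict under this scheme. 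The ``bookkeeping'' you identify as the main obstacle is therefore not peripheral---it is the entire content of the argument.

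The paper instead writes down an explicit pointwise formula for $\RKan{\mu}A \in \PSH{\EL{\RKan{\mu}\Gamma}}$, namely $(\RKan{\mu}A)(D,a) = \Hom{1}{A\brackets{\Transpose{\YoEm{a}}}}$ (global sections of the reindexing along the adjoint transpose), and then checks the presheaf action, strict naturality in substitution, and preservation of context extension by direct calculation. Your limit-preservation intuition is morally why these calculations succeed, but it does not replace them. Your size argument is fine in spirit, though note it must be applied to the fibres of the induced action on types, not merely to $\RKan{\mu}$ as a functor on presheaves; in the paper's formula this is immediate because those fibres are hom-sets in a small presheaf category.
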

\begin{proof}
  We use Lemma \ref{lem:constructing-dra:lifting-adjunction} once more. The action of $\RKan{\mu}$ on types and terms is given by
  \begin{align*}
    \RKan{\mu}{A} \in \PSH{\EL{\RKan{\mu}{\Gamma}}} &=
    (D \in \DD, a \in \RKan{\mu}{\Gamma}(D)) \mapsto \Hom[\PSH{\EL{\Pre*{\mu}{\Yo{D}}}}]{1}{A\brackets*{\Transpose{\YoEm{a}}}}\\
    \RKan{\mu}{M} \in \Hom{1}{\RKan{\mu}{A}} &= (D \in \DD, a \in \RKan{\mu}{\Gamma}(D)) \mapsto \Pre{(\EL{\Transpose{\YoEm{a}}})}{M}
  \end{align*}
  Both of these actions are well-typed. For types, as $a \in \RKan{\mu}{\Gamma}(D)$ we have
  $\YoEm{a} : \Yo{D} \To \RKan{\mu}{\Gamma}$, so by transposition $\Transpose{\YoEm{a}} :
  \Pre*{\mu}{\Yo{D}} \To \Gamma$. For terms, notice that $\EL{\Transpose{\YoEm{a}}} :
  \EL{\Pre*{\mu}{\Yo{D}}} \to \EL{\Gamma}$, recall that $A\brackets*{\sigma} \defeq A \circ
  \EL{\sigma}$, and that precomposition preserves the terminal object on-the-nose.

  \paragraph{The presheaf action.}
  The action of $\RKan{\mu}{A}$ is subtle: it is given by the functor
  \[
    \Pre{(\EL{\Pre{\mu}{\Yo{f}}})} :
      \PSH{\EL{\Pre{\mu}{\Yo{D}}}} \to \PSH{(\EL{\Pre{\mu}{\Yo{D'}}}}
  \]
  for each $f : \Hom[\DD]{D'}{D}$. In more detail, given $f : \Hom[\DD]{D'}{D}$, $a \in
  \RKan{\mu}{\Gamma}(D)$, $A \in \PSH{\EL{\Gamma}}$, and $x \in \RKan{\mu}{A}(D, a) \defeq
  \Hom{1}{A\brackets*{\Transpose{\YoEm{a}}}}$, we define $x \cdot f \in \RKan{\mu}{A}(D', a \cdot
  f)$ by
  \[
    x \cdot f \defeq \Pre*{(\EL{\Pre{\mu}{\Yo{f}}})}{x}
      : \Hom[\PSH{\EL{\Pre*{\mu}{\Yo{D'}}}}]
        {\Pre{(\EL{\Pre{\mu}{\Yo{f}}})}{1}}
        {\Pre*{(\EL{\Pre{\mu}{\Yo{f}}})}{A\brackets*{\Transpose {\YoEm{a}}}}}
  \]
  This is of the right type; reindexing preserves the terminal,
  $\Pre{(\EL{\Pre{\mu}{\Yo{f}}})}{1} = 1$. Moreover,
  \[
      \Transpose{\YoEm{a}} \circ \Pre{\mu}{\Yo{f}}
    = \Transpose{\YoEm{a} \circ \Yo{f}}
    = \Transpose{\YoEm{a \cdot f}}
  \]
  by naturality of the adjunction and of Yoneda. Using this calculation, we see that
  \[
    \Pre*{(\EL{\Pre{\mu}{\Yo{f}}})}{A\brackets*{\Transpose {\YoEm{a}}}}
    \defeq A \circ \EL{\Transpose {\YoEm{a}}} \circ \EL{\Pre{\mu}{\Yo{f}}}
    = A \circ \EL{\Transpose{\YoEm{a \cdot f}}}
    \defeq A\brackets*{\Transpose{\YoEm{a \cdot f}}}
  \]
  Hence $x \cdot f \in \RKan{\mu}{A}(D', a \cdot f)$. This assignment is functorial because
  $\EL{-}$, $\Pre{(-)}$ and $\Yo{-}$ are.

  \paragraph{Naturality.}
  We must show that both of these definitions are natural with respect to substitution, i.e. that
  $(\RKan{\mu}{A})\brackets*{\RKan{\mu}{\gamma}} = \RKan*{\mu}{A\brackets*{\gamma}}$, and similarly
  for terms.

  For types, suppose we are given $\gamma : \Delta \to \Gamma$ and $A \in \PSH{\EL{\Gamma}}$.
  Carefully unfolding both sides of the desired equation, for any $D \in \DD$ and $a \in
  \RKan{\mu}\Delta(D)$ we must show that
  \[
    \Hom[\PSH{\EL{\Pre{\mu}{\Yo{D}}}}]{1}{A\brackets*{\Transpose{\YoEm{\RKan{\mu}\gamma_D(a)}}}}
    =
    \Hom[\PSH{\EL{\Pre{\mu}{\Yo{D}}}}]{1}{A\brackets{\gamma}\brackets*{\Transpose{\YoEm{a}}}}
  \]
  But, by naturality of both the adjunction and Yoneda:
  \[
    \gamma \circ \Transpose{\YoEm{a}}
    = \Transpose{\RKan{\mu}\gamma \circ \YoEm{a}}
    = \Transpose{\YoEm{\RKan{\mu}\gamma_D(a)}}
  \]
  Hence the two sets are the same. The calculation for terms is of a similar ilk.

  \paragraph{Preservation of context extension.}
  We would like to show that the canonical morphism
  \[
    \Mor[\angles{\RKan{\mu}{\CWk}, \RKan{\mu}{\CVar}}]{\RKan*{\mu}{\CECx{\Gamma}{A}}}{\CECx{\RKan{\mu}{\Gamma}}{\RKan{\mu}{A}}}
  \]
  is invertible. Consider an element $e : \Yo{D} \To \RKan*{\mu}{\CECx{\Gamma}{A}}$. We can
  transpose along the adjunction $\Pre{\mu} \dashv \RKan{\mu}$ and decompose it to obtain a
  substitution and a term
  \begin{align*}
    e_0 &: \Pre{\mu}{\Yo{D}} \To \Gamma
      &
    e_1 &: \Hom[\PSH{\EL{\Pre{\mu}{\Yo{D}}}}]{1}{A\brackets*{e_0}}
  \end{align*}
  We can thus write $e = \Transpose{\angles{e_0, e_1}}$. Thus, we can use naturality of the
  adjunction and of substitution to compute the action of $\angles{\RKan{\mu}{\CWk},
  \RKan{\mu}{\CVar}}$ on this $e$:
  \[
    \angles{\RKan{\mu}{\CWk}, \RKan{\mu}{\CVar}} \circ e
    = \angles{\RKan{\mu}{\CWk} \circ e, (\RKan{\mu}{\CVar})\brackets*{e}}\\
    = \angles{\Transpose{\CWk \circ \angles{e_0,e_1}}, \CVar\brackets{\angles{e_0,e_1}}}\\
    = \angles{\Transpose{e}_0, e_1}
  \]
  We can then specify an inverse on generalized elements by $\angles{\gamma, M} \mapsto
  \Transpose{\angles{\Transpose{\gamma}, M}}$.

  Size preservation is immediate: if $A$ is small then so are its reindexings and the collections of
  points at each slice.
\end{proof}

\subsection{The other adjunction}

We have so far concentrated on the adjunction $\Pre{\mu} \dashv \RKan{\mu}$ that arises through right Kan extension. Nevertheless, precomposition also has a left adjoint $\LKan{\mu}$ arising from \emph{left Kan extension}. Might we also be able to interpret the lock functors by this left adjoint $\LKan{\mu}$, and lift precomposition $\Pre{\mu}$ to a modality instead?

It is in fact relatively easy to show that $\Pre{\mu}$ extends to a dependent right adjoint. However, the
left Kan extensions $\LKan{\mu}$ cannot be assembled into a modal context structure. The reason is
that context structures are strict 2-functors, but left Kan extensions do not compose strictly: we
only have an isomorphism $\LKan{F} \circ \LKan{G} \cong \LKan{(G \circ F)}$. We have proven a
strictification theorem that straightens these issues, but that is beyond the scope of this paper.


\section{Guarded Recursion}
\label{sec:guarded-recursion}

We now show how \MTT{} can be applied to a well-known modal situation: guarded recursion. By
instantiating \MTT{} with a carefully chosen mode theory and axiomitizing certain operations
specific to guarded recursion (\ie{} L{\"o}b induction), we obtain a calculus for guarded recursion
simpler than prior hand-crafted calculi. We demonstrate the practicality of this guarded variant of
\MTT{} by reproducing some examples from prior work on guarded recursion~\cite{bizjak:2016}.

The key idea of guarded recursion~\cite{nakano:2000} is to use a modality $\Later$, usually called
\emph{later}, to mark the types of data that may be used only if some `computational progress' (e.g.
a tick of a clock) has taken place, thereby enforcing productivity at the level of types. The later
modality is usually equipped with three basic operations:
\begin{mathpar}
  \Next : A \to \Later A \and
  (\ZApp{}{}{}) : \Later (A \to B) \to \Later A \to \Later B \and
  \Lob : (\Later A \to A) \to A
\end{mathpar}
The first two make $\Later$ into an \emph{applicative functor}~\cite{mcbride:2008}. The third,
which is commonly known as L{\"o}b induction, is a \emph{guarded fixed point operator}
\cite{milius:2013}: it enables us to make definitions by \emph{provably productive} recursion.

$\Later$ also applies to the universe, so one can define data types by guarded recursion. The
classic example is the \emph{guarded stream type} $\Str_A \cong A \times \Later \Str_A$, with
constructor
\[
 \Cons_A : A \times \Later \Str_A \cong \Str_A
\]
The presence of the modality enforces the requirement that the head of the stream is available
immediately, but the tail may only be accessed after some productive work has taken place. This
allows us to \eg~construct an infinite stream of ones:
\[
  \mathsf{inf\_stream\_of\_ones} \defeq \LobV{s}{\Cons(1, s)}
\]
Unlike the ordinary type of streams, $\Str_A$ does not behave like a coinductive type: we may only
define \emph{causal} operations, which excludes useful functions (e.g. the $\Tail$ function). In
order to regain coinductive behaviour, \cite{clouston:2015} introduced the \emph{always}
modality~$\Box$, an idempotent comonad for which
\begin{equation*}
  \Box \Later A \Equiv \Box A. \TagEq[$\ast$]\label{eq:guarded:force}
\end{equation*}
Combining $\Box$ and $\Later$ in the same system has proven tricky. Previous work has used
\emph{delayed substitutions}~\cite{bizjak:2016}, or replaced $\Box$ with \emph{clock
quantification}~\cite{atkey:2013,moegelberg:2014,bizjak:2015,bahr:2017}. Neither solution is
entirely satisfactory: the former poses serious implementation and usability issues, and the latter
does not enjoy the conceptual simplicity of a single modality. We will show that \MTT{} enables us
to effortlessly combine the two modalities whilst satisfying \eqref{eq:guarded:force}.

To encode guarded recursion inside \MTT{}, we must
\begin{enumerate}
\item construct a mode theory that induces an applicative functor $\Later$ and an idempotent comonad
  $\Box$ satisfying \eqref{eq:guarded:force},
\item construct the \emph{intended model} of $\MTT$ with this mode theory, \ie{} a model where
  these modalities are interpreted in the standard way~\cite{birkedal:2012}, and
\item include L\"ob induction as an axiom, and use it to reason about guarded streams.
\end{enumerate}

\subsection{A guarded mode theory}

\begin{figure}
  \[
    \begin{tikzpicture}[on grid, node distance = 2.5cm]
      \node (T) {$t$};
      \node (S) [right = of T] {$s$};
      \path[->] (T) edge[looseness = 15, out = 210, in = 150] node[left] {$\ell$} (T);
      \path[->] (S) edge[bend left] node[below] {$\delta$} (T);
      \path[->] (T) edge[bend left] node[above] {$\gamma$} (S);
      \node (E) [right = 2.5cm of S]
      {$\begin{aligned}
          \delta \circ \gamma &\le 1 & 1 &= \gamma \circ \delta\\
          1 &\le \ell & \gamma &= \gamma \circ \ell
        \end{aligned}$};
    \end{tikzpicture}
  \]
  \caption{The `adjoint bowling pin' $\ModeGuarded$: a mode theory for guarded recursion.}
  \label{fig:applications:guarded-mode-theory}
\end{figure}

We define $\ModeGuarded$ to be the mode theory generated by the graph and equations of Fig.
\ref{fig:applications:guarded-mode-theory}. We require that $\ModeGuarded$ be poset-enriched,
i.e.~that there be at most one 2-cell between a pair of modalities $\mu, \nu$, which we denote by
$\mu \le \nu$ when it exists. Consequently, we need not state any coherence equations between
2-cells.

Unlike prior guarded type theories, $\ModeGuarded$ has \emph{two modes}. We will think of elements
of $s$ as being \emph{constant types and terms}, while types in $t$ may \emph{vary over time}.
Observe that we can thence construct a composite modality $b \defeq \delta \circ \gamma$. Moreover,
this modality is idempotent, for $b \circ b = \delta \circ \gamma \circ \delta \circ \gamma = \delta
\circ \gamma = b$. We will prove in Section~\ref{sec:guarded-recursion:internally} that
\begin{lem}
  $\Modify[b]$ is an idempotent comonad and $\Modify[\ell]$ is an applicative functor.
\end{lem}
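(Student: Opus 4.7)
The plan is to reduce both parts of the lemma to the generic combinators of Section~\ref{sec:programming-in-mtt:combinators}, invoked with carefully chosen 2-cells of $\ModeGuarded$. A crucial simplification throughout is that $\ModeGuarded$ is poset-enriched: any two parallel 2-cells are equal, so all 2-cell coherence equations (including the interchange laws of the implicit 2-category) hold automatically and need not be checked separately.

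For $\Modify[\ell]$, the 2-cell $1 \le \ell$ together with $\Triv*{-}$ yields a unit $\Next : A \to \Modify[\ell]{A}$ as the composite $\Coe{}{1}{\ell}{} \circ \Triv*{-}$; the applicative action is the Kripke combinator $\ZApp{-}{-}{\ell}$ from Section~\ref{sec:programming-in-mtt:combinators}. The applicative functor laws (identity, homomorphism, interchange, composition) then reduce to propositional equalities, proved by the technique illustrated at the end of Section~\ref{sec:programming-in-mtt:combinators}: pattern-match on each $\Modify[\ell]$-typed argument with \ruleref{tm/modal-elim}, $\beta$-reduce, and conclude with $\Refl{}$.

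For $\Modify[b]$, a one-line calculation using $\gamma \circ \delta = 1$ shows $b \circ b = \delta \circ \gamma \circ \delta \circ \gamma = \delta \circ \gamma = b$, placing us precisely in the setting of the idempotent comonad mode theory $\IComonadMode$ of Section~\ref{sec:programming-in-mtt:comonads}. Taking $\epsilon : b \Rightarrow 1$ to be the 2-cell given by $\delta \circ \gamma \le 1$, I will define the counit $\Unbox : \Modify[b]{A} \to A$ and the comultiplication $\Dup : \Modify[b]{A} \to \Modify[b]{\Modify[b]{A}}$ verbatim as in Section~\ref{sec:programming-in-mtt:comonads}, using the idempotence equation to type $\Dup$ as $\MComp*{-}{b}{b}$. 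The comonad laws then follow from the interchange-style identities satisfied by $\mathbf{coe}$ and $\mathbf{comp}$, exactly as for $\IComonadMode$.

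Finally, idempotence of the comonad amounts to showing that $\Dup$ is a type-theoretic equivalence. Its candidate inverse is $\MComp{-}{b}{b} : \Modify[b]{\Modify[b]{A}} \to \Modify[b \circ b]{A} = \Modify[b]{A}$, and the two composites are propositionally equal to the identity by the same pattern-match-then-$\Refl{}$ argument used for $\MComp \circ \MComp*$ in Section~\ref{sec:programming-in-mtt:combinators}. The main piece of bookkeeping will be ensuring that nested locks such as $\LockCx{\LockCx{\Gamma}}<b>$ and $\LockCx{\Gamma}<b \circ b>$ are silently identified via \ruleref{cx/compose}, but this is entirely mechanical given the algebraic syntax of Section~\ref{sec:algebraic-mtt}.
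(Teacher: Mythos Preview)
Your proposal is correct and follows essentially the same approach as the paper: both reduce the result to the generic modal combinators of \S\ref{sec:programming-in-mtt:combinators} and the idempotent-comonad recipe of \S\ref{sec:programming-in-mtt:comonads}, instantiated at $b$ and $\ell$ respectively, with the poset-enrichment of $\ModeGuarded$ discharging all 2-cell coherences. Your account is if anything slightly more explicit than the paper's, which merely states the comonad laws and remarks that they are obtained ``by unfolding and modal induction on $x : \Box A$,'' and which asserts the applicative structure of $\Later$ by pointing to $\ZApp{-}{-}{\ell}$ without spelling out the laws.
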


\subsection{Decomposing the standard model}

The above mode theory arises from a careful and informative decomposition of the standard model of
guarded recursion, namely the \emph{topos of trees} $\PSH{\omega}$, along with the later and always endomodalities.

The topos of trees consists of presheaves over the natural numbers, seen as a poset with the usual
order. An element $x_n \in X(n)$ of a presheaf $X : \PSH{\omega}$ represents an element computed
after $n$ steps of computation. The restriction maps $r_n : X(n+1) \to X(n)$ trim an element
computed after $n+1$ steps to its form at the preceding moment in time. The canonical example is
given by $X(n) \defeq \{\text{streams of length $n$}\}$, where $r_n$ deletes the last element of a
stream of length $n+1$. The later and always endomodalities are given by delaying the computation by
one step, and by taking global sections (total elements):
\begin{align*}
  (\Later X)(n) &\defeq {
    \begin{cases}
      \{\ast\} &\text{if $n = 0$} \\
      X(n-1)   &\text{if $n > 0$}
    \end{cases}
  }  &
  (\Box X)(n) &\defeq \Hom[\PSH{\omega}]{1}{X}
\end{align*}
To arrive at the mode theory above, one must notice that the comonad $\Box$ results in a
\emph{constant} presheaf, namely one which consists of the same set at each time. We can thus
decompose it into the adjunction
\begin{equation}
  \begin{tikzpicture}[on grid, node distance = 2.5cm, baseline=(current  bounding  box.center)]
    \node (T) {$\PSH{\omega}$};
    \node (S) [right = of T] {$\SET$};
    \node (Z) [right = 1.3cm of T, rotate=+90] {$\Adjoint$};
    \path[->] (T) edge[looseness = 10, out = 210, in = 150] node[left] {$\Later$} (T);
    \path[->] (T) edge[bend left] node[above] {$\GSec$} (S);
    \path[->] (S) edge[bend left] node[below] {$\Disc$} (T);
  \end{tikzpicture}
  \label{diag:guarded:model}
\end{equation}
$\GSec$ maps $X : \PSH{\omega}$ to the set of its global sections $\Hom[\PSH{\omega}]{1}{X}$, and
$\Delta$ maps a set $S$ to the constant presheaf $(\Delta S)(n) \defeq S$. It is well-known that
$\Delta \Adjoint \Gamma$, and `always' is given by the induced comonad $\Box \defeq \Disc \circ
\GSec$. This explains the provenance of the two modes in Figure
\ref{fig:applications:guarded-mode-theory}: $s$ stands for \emph{sets}, and $t$ for \emph{timed
sets}, i.e. presheaves over $\omega$.

We want to bootstrap \eqref{diag:guarded:model} into a model of \MTT{}. We will do so by leveraging
an impressive sequence of facts:
\begin{itemize}
  \item Both categories in \eqref{diag:guarded:model} are presheaf categories, and hence models of
  \MLTT{}: see Section~\ref{sec:presheaves}.
  \item Every functor in \eqref{diag:guarded:model} is a right adjoint.
  \item The corresponding left adjoints are introduced by precomposition, and hence can easily  be
  arranged into a modal context structure for the mode theory $\ModeGuarded$ as per
  Section~\ref{def:semantics:context-structure}.
  \item Hence, by uniqueness of adjoints the functors in \eqref{diag:guarded:model} are induced by
  right Kan extension. Consequently, they can be bootstrapped into dependent right adjoints, by
  Lemma \ref{lem:constructing-dra:direct-image}.
  \item Therefore, by Theorem \ref{thm:dra:dra}, this data yields a model of \MTT{} with mode theory $\ModeGuarded$.
\end{itemize}
Let us elaborate on this chain of reasoning.
First, we identify the category $\SET$ and the category $\PSH{1}$ of presheaves over the terminal
category. Second, we construct the two left adjoints. As $\omega$ has an initial object $0$, we
obtain a left adjoint to the discrete functor $\Disc$, given by
\[
  \Pi_0(X) \defeq X(0)
\]
It is easy to see that $\Pi_0 \Adjoint \Disc$: by naturality at the unique morphism $0 \leq n$ we
see that any $\alpha : X \To \Disc S$ is fully determined by the component $\alpha_0 : X(0) \to S$.
Furthermore, recall from the work of \cite{birkedal:2012} that the later modality $\Later$ has a
left adjoint $\Earlier : \PSH{\omega} \to \PSH{\omega}$ (pronounced `earlier'), given by
\[
  (\Earlier X)(n) \defeq X(n+1)
\]
It remains to show that the three left adjoints---$\Pi_0$, $\Delta$, and $\Earlier$---are given by
precomposition. We define three monotone functions between the posets $1 \defeq \{\ast\}$ and
$\omega$:
\begin{alignat*}{19}
    K_0         :\ &1\      &&\to\ &&\omega\ \qquad
  & {!}_\omega\ :\ &&\omega\ &&\to\ &&1\      \qquad
  & l\ :\     &&\omega\ &&\to\ &&\omega \\
     &\ast\ &&\mapsto\ &&0\    \qquad
  & &&n\    &&\mapsto\ &&\ast\ \qquad
  & &&n\    &&\mapsto\ &&n+1
\end{alignat*}
Identifying $\SET$ with $\PSH{1}$, we see that
\begin{align*}
  \Pi_0  &= \Pre{K_0}{} : \PSH{\omega} \to \SET &
  \Delta &= \Pre{{!}_\omega}{} : \SET \to \PSH{\omega} &
  \Earlier &= \Pre{l}{} : \PSH{\omega} \to \PSH{\omega}
\end{align*}
Moreover, we trivially have the following pointwise equations and inequalities:
\begin{align*}
  \text{id}_\omega &\leq l
  &
  K_0 \circ {!}_\omega &\leq \text{id}_\omega
  &
  \text{id}_1 &= {!}_\omega \circ K_0
  &
  {!}_\omega &= {!}_\omega \circ l
\end{align*}
Seeing posets as categories, pointwise inequalities are simply natural transformations between
monotone maps. By feeding them into the strict 2-functor $\Pre{(-)}{} : \Coop{\CAT} \to \CAT$, we
are able to define a strict 2-functor $\InterpModal{-} : \Coop{\ModeGuarded} \to \CAT$ which maps
\begin{alignat*}{5}
  &\gamma\ &&: t \to s\quad &&\longmapsto\quad &&\InterpModal{\gamma} = \Delta\ &&:\ \SET \to \PSH{\omega} \\
  &\delta\ &&: s \to t\quad &&\longmapsto\quad &&\InterpModal{\delta} = \Pi_0\  &&:\ \PSH{\omega} \to \SET \\
  &\ell\   &&: t \to t\quad &&\longmapsto\quad &&\InterpModal{\ell} = \Earlier\ &&:\ \PSH{\omega} \to \PSH{\omega}
\end{alignat*}
This fully specifies the modal context structure, which consists of left adjoints. Each of these
left adjoints is given by precomposition. Thus, the unique corresponding right adjoint is given by
right Kan extension (see Section~\ref{sec:presheaves}). Hence, by Lemma
\ref{lem:constructing-dra:direct-image} and Theorem \ref{thm:dra:dra},
\begin{thm}
  There is a model of \MTT{} with mode theory $\ModeGuarded$, interpreting $s$ as $\SET$ and $t$ as
  $\PSH{\omega}$. Furthermore, this model interprets $\delta$ by the dependent right adjoint arising
  from $\CComp \Adjoint \Disc$, $\gamma$ by $\Disc \Adjoint \GSec$, and $\ell$ by $\Earlier \Adjoint
  \Later$.
\end{thm}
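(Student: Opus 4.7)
The plan is to follow the chain of reasoning the authors have already sketched in the bullet points preceding the theorem, turning each bullet into a verified step that plugs into Theorem~\ref{thm:dra:dra}. First I would make explicit the monotone maps $K_0 : 1 \to \omega$, ${!}_\omega : \omega \to 1$, and $l : \omega \to \omega$, view them as functors between small categories, and observe by a pointwise computation that they satisfy the four (in)equalities $\text{id}_\omega \le l$, $K_0 \circ {!}_\omega \le \text{id}_\omega$, $\text{id}_1 = {!}_\omega \circ K_0$, and ${!}_\omega = {!}_\omega \circ l$. These are precisely the defining relations of $\ModeGuarded$ once one identifies $t \mapsto \omega$ and $s \mapsto 1$, with $\ell \mapsto l$, $\delta \mapsto K_0$ (note the contravariance) and $\gamma \mapsto {!}_\omega$.

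Next, I would feed this data into the strict 2-functor $\Pre{(-)} : \Coop{\CAT} \to \CAT$ of Section~\ref{sec:presheaves} to obtain a strict 2-functor $\Interp{-} : \Coop{\ModeGuarded} \to \CAT$ sending $s \mapsto \SET \equiv \PSH{1}$, $t \mapsto \PSH{\omega}$, $\gamma \mapsto \Disc$, $\delta \mapsto \Pi_0$, $\ell \mapsto \Earlier$. Functoriality on 1-cells and strictness on 2-cells follow automatically from the fact that $\Pre{(-)}$ is a strict 2-functor; the only thing that has to be checked is that the inequalities above, viewed as natural transformations between monotone maps in the 2-category of posets, go to natural transformations between the corresponding precomposition functors, which they do by definition. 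This yields the required modal context structure of Definition~\ref{def:semantics:context-structure}, with the correct variance.

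Each precomposition functor $\Pre{f}$ above has an ordinary right adjoint: $\Pre{K_0}{} = \Pi_0 \Adjoint \Disc$, $\Pre{{!}_\omega}{} = \Disc \Adjoint \GSec$, and $\Pre{l}{} = \Earlier \Adjoint \Later$. By Lemma~\ref{lem:constructing-dra:direct-image}, each right adjoint $\RKan{f}$ extends to a size-preserving dependent right adjoint on the natural model of Martin-L\"of type theory carried by the presheaf category. Moreover the assignment $\mu \mapsto \RKan{J(\mu)}$ is 2-functorial up to the canonical isomorphisms between iterated right Kan extensions — and in the present situation these isomorphisms are identities, because all the composites in question are themselves precomposition by a monotone map and right Kan extension is unique, giving strict 2-functoriality of the DRA assignment. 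The 2-cell data $\InterpKey{\alpha}$ is obtained by transposing the 2-cells $\Pre{J(\alpha)}$ across the adjunctions.

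Having all three ingredients — the context structure, the natural models at each mode, and the DRAs for each generating morphism, all assembled 2-functorially — I would then invoke Theorem~\ref{thm:dra:dra} to conclude the existence of a model of \MTT{} over $\ModeGuarded$ with the stated interpretations of $s$, $t$, $\gamma$, $\delta$, and $\ell$. The main obstacle I foresee is the careful verification that the 2-functor $\Interp{-}$ is genuinely \emph{strict}: right Kan extensions compose only up to canonical isomorphism in general, so one must either appeal to the fact that $\omega$ is a poset (making all diagrams of 2-cells commute automatically) and that the relevant composites of precomposition functors land on the nose, or alternatively invoke an explicit strictification argument as the authors hint at in the closing paragraph of Section~\ref{sec:presheaves}. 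Once strictness is settled, everything else is routine bookkeeping.
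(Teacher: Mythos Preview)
Your proposal is correct and follows essentially the same route as the paper: identify the three monotone maps, verify the generating (in)equalities of $\ModeGuarded$, apply the strict 2-functor $\Pre{(-)}$ to obtain the modal context structure, and then invoke Lemma~\ref{lem:constructing-dra:direct-image} followed by Theorem~\ref{thm:dra:dra}. One clarification: your closing worry about strict 2-functoriality of the assignment $\mu \mapsto \RKan{J(\mu)}$ is unnecessary, since Theorem~\ref{thm:dra:dra} only requires the \emph{lock} functors $\InterpModal{\mu} = \Pre{J(\mu)}$ and the keys $\InterpKey{\alpha}$ to assemble into a strict 2-functor $\Coop{\ModeGuarded} \to \CAT$, and precomposition is strictly 2-functorial on the nose; the DRAs themselves (the right Kan extensions) need not compose strictly.
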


\begin{rem}
  This mode theory is a poset-enriched category. As a result, the key substitutions are unique: for
  any $\mu, \nu$ there is at most one substitution $\IsSb[\LockSb{\Gamma}]{\Key{\nu \le
  \mu}{\Gamma}}{\LockSb{\Gamma}<\nu>}$. This property means that we can elide them without
  ambiguity. However, this may sometimes make type-checking on pen-and-paper difficult, so we employ
  a simplified notation: we will write $A^{\nu \le \mu}$ or $M^{\nu \le \mu}$ for the application of
  the unique key substitution $\nu \le \mu$ in context $\LockSb{\Gamma}$. For instance, given a type
  $\LockCx{\Gamma}<1> = \IsTy{A}[1]<t>$ we can form the type $\IsTy[\LockCx{\Gamma}<\ell>]{A^{1 \le
  \ell}}[1]<t>$, and hence the type $\IsTy{\Modify[\ell]{A^{1 \le \ell}}}[1]<t>$.
\end{rem}

\subsection{Guarded recursion, internally}
\label{sec:guarded-recursion:internally}

Given the model that we constructed above, we feel perfectly justified in defining the following
shorthands within \MTT{}:
\begin{align*}
  \Box A &\defeq \Modify[b]{A} &
  \Later A &\defeq \Modify[\ell]{A} &
  \GSec A &\defeq \Modify[\gamma]{A} &
  \Disc A &\defeq \Modify[\delta]{A}
\end{align*}
where $b \defeq \delta \circ \gamma$. The aim of this section is to show that $\MTT{}$ equipped with $\ModeGuarded$ and these shorthands can be used to reason about guarded recursion. In particular, we will show that this is strict improvement on previous solutions, by establishing that
\begin{enumerate}
  \item When restricted to mode $s$, the type theory is simply standard Martin-L{\"o}f Type Theory.
  \item The modalities on mode $t$ give rise to the standard modalities and operations of
  Guarded Type Theory~\cite{bizjak:2016} inside the type theory.
\end{enumerate}

First, we wish to show that if we restrict ourselves to endomodalities $\mu \in \Hom{s}{s}$ from
sets to sets, the type theory is just \MLTT{}. Looking at Fig.
\ref{fig:applications:guarded-mode-theory} as a finite state machine, we see that all loops on $s$
are of the form $\gamma \circ \ell^n \circ \delta$, and the equations of $\ModeGuarded$ allow us to
prove that they are all equal to the identity $1_s$. It follows that $\Modify{A} \simeq A$. Finally,
as there is no non-trivial 2-cell $1_s \To 1_s$ the variable rule reduces to
\[
  \inferrule{
    \mu \in \Hom{s}{s}\\
    \IsCx{\Gamma}<s>\\
    \IsTy{A}<s>\\
    (\DeclVar{x}{A}<\mu>) \in \Gamma
  }{
    \IsTm{x}{A}<s>
  }
\]
which is essentially the usual variable rule of \MLTT{}.

Second, we use the combinators of Section~\ref{sec:programming-in-mtt:comonads} to prove that $\Box$ is an
idempotent comonad.
\[
  \arraycolsep=1.4pt
  \begin{array}{lcllcl}
    \Dup_A    &:&  \Box A \xrightarrow{\simeq} \Box \Box A \qquad\qquad &\Unbox_A &:& \Box A \to A^{b \le 1} \\
    \Dup_A(x) &\defeq& \MComp*{x}{b}{b}   \qquad\qquad &\Unbox_A(x) &\defeq& \Triv*{\Coe{x}{b}{1}}
  \end{array}
\]
Recall the $K$ operator $\ZApp{-}{-}{b} : \Box(A \to B) \to \Box A \to \Box B$ for the modality $b$,
which was defined in Section~\ref{sec:programming-in-mtt:combinators}. Writing $ \Always(M) \defeq
\MkBox[b]{M}$, the claim that $\Box$ is an internal idempotent comonad amounts to defining terms of
the following types.
\begin{align}
  (x : \Box A) &\to \Id{\Box A}{x}{\ZApp{\Always(\Unbox)}{\Dup(x)}{b}}
  \label{eq:guarded-recursion:comonad1}\\
  (x : \Box A) &\to \Id{\Box A}{x}{\Unbox(\Dup(x))}
  \label{eq:guarded-recursion:comonad2}\\
  (x : \Box A) &\to \Id{\Box \Box \Box A}{\Dup(\Dup(x))}{\ZApp{\Always(\Dup)}{\Dup(x)}{b}}
  \label{eq:guarded-recursion:comonad3}
\end{align}
These can be constructed by unfolding and modal induction on $x : \Box A$.

The $K$ operator $\ZApp{-}{-}{\ell} : \Later(A \to B) \to \Later A \to \Later B$ for the modality
$\ell$ almost proves that $\Later$ is an applicative functor. It remains to show that $\Later$ is
pointed:
\[
  \arraycolsep=1.4pt
  \begin{array}{lcl}
    \Next_A    &:&      A \to \Later A \\
    \Next_A(x) &\defeq& \Coe{\Triv{x}}{1}{\ell}
  \end{array}
\]

Next, we show the defining equivalence \eqref{eq:guarded:force}. We calculate that $b \circ \ell
\defeq \delta \circ \gamma \circ \ell = \delta \circ \gamma \defeq b$, and hence that the equivalence is a
corollary of a combinator given in Section~\ref{sec:programming-in-mtt:combinators}:
\[
  \arraycolsep=1.4pt
  \begin{array}{lcl}
    \Force_A(x)    &:&   \Box \Later A \xrightarrow{\simeq} \Box A \\
    \Force_A(x) &\defeq& \MComp*{x}{b}{\ell}
  \end{array}
\]
As a sanity check, we can compute that the following composite is the identity:
\[
  \begin{tikzpicture}[on grid, node distance = 5cm]
    \node (A) {$\Box A$};
    \node (B) [right = of A] {$\Box \Later A$};
    \node (C) [right = of B] {$\Box A$};
    \path[->] (A) edge node[above] {$\ZApp{\Always(\Next)}{-}{}$} (B);
    \path[->] (B) edge node[above] {$\Force$} (C);
  \end{tikzpicture}
\]
The calculation is as follows:
\begin{align*}
  &\MComp{\ZApp{\MkBox[b]{\Coe{\Triv{-}}{1}{\ell}}}{x}{}}{b}{\ell}
  &&\text{by induction, suppose $x = \MkBox[b]{y}$}\\
  &= \MComp{\ZApp{\MkBox[b]{\Coe{\Triv{-}}{1}{\ell}}}{\MkBox[b]{y}}{}}{b}{\ell} &&\\
  &= \MComp{\MkBox[b]{\Coe{\Triv{y}}{1}{\ell}}}{b}{\ell}  &&\\
  &= \MComp{\MkBox[b]{\MkBox[\ell]{y}}}{b}{\ell} &&\\
  &= \MkBox[b]{y} &&\text{as $b \circ \ell = b$}\\
  &= x
\end{align*}

The only thing that remains is to add \emph{L{\"o}b induction}. This is a \emph{modality-specific
operation} that cannot be expressed in the mode theory, so we must add it as an axiom:
\begin{mathpar}
  \inferrule{
    \IsCx{\Gamma}<t>\\
    \IsTy{A}[1]<t>
  }{
    \IsTm{\Lob}{(\Later A^{1 \le \ell} \to A) \to A}<t>
  }
  \and
  \inferrule{
    \IsCx{\Gamma}<t>\\
    \IsTy{A}[1]<t>\\
    \IsTm{M}{\Later A^{1 \le \ell} \to A}<t>
  }{
    \EqTm{\Lob(M)}{M(\Next(\Lob(M)))}{(\Later A^{1 \le \ell} \to A) \to A}<t>
  }
\end{mathpar}
Notice that these rules are only added in mode $t$, as they only admit an interpretation in
$\PSH{\omega}$ \cite[\S 2]{birkedal:2012}. Unfortunately, these ad-hoc additions mean that the
canonicity theorem of Section~\ref{sec:canonicity} no longer applies.

\subsection{Reasoning about Streams}
\label{sec:guarded-recursion:streams}

We now put \MTT{} to work: we will use it to reason about infinite streams defined by guarded
recursion. We will demonstrate that the rules and axioms given in
Section~\ref{sec:guarded-recursion:internally} suffice to carry out coinductive constructions. In
particular, we will reproduce an example of \cite{bizjak:2016}: we will show that $\ZipWith(f)$ on
a coinductive stream is commutative whenever $f$ itself is.

In order to simplify our working, we will swap the intensional equality type $\Id{A}{M}{N}$ with an
\emph{extensional identity type} $\Eq{A}{M}{N}$. This has the same introduction rule, but its
elimination is replaced by the usual \emph{equality reflection rule}
\[
  \inferrule{
    \IsTm{P}{\Eq{A}{M_0}{M_1}}
  }{
    \EqTm{M_0}{M_1}{A}
  }
\]
This is straightforwardly interpreted in the model, as both modes are mapped to presheaf toposes.
The switch to extensional equality is not strictly necessary: we could carry out the following
calculations with intensional identity, at the price of significantly more verbose terms. Moreover,
the need for the function extensionality axiom would arise. However, adding L{\"o}b induction has
already ensured that type-checking is undecidable, so nothing of value is lost by making the switch
to extensional type theory for these examples.

We begin with a simple reasoning principle. Eliding $(-)^{1 \leq \ell}$ annotations:
\begin{lem} \label{lem:guarded:later-eq}
  $(A : \Uni)(x, y : \Dec{A}) \to \Later\Eq{\Dec{A}}{x}{y} \to \Eq{\Later \Dec{A}}{\Next(x)}{\Next(y)}$
\end{lem}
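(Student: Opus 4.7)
The plan is to eliminate $p$ by modal \textsf{let} and then close the goal with $\mathsf{refl}$, leveraging equality reflection (we are in the extensional setting of this subsection). Concretely, I propose the term
\[
  \lambda A\,x\,y\,p.\;\Open{p}[q]{\Refl{\Next(x)}}<\ell>[1]
\]
and argue that it typechecks.

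To apply the modal elimination rule to $p : \Later\Eq{\Dec{A}}{x}{y}$, I take $\mu \defeq \ell$, $\nu \defeq 1$, and let the motive $B$ be the constant type $\Eq{\Later\Dec{A}}{\Next(x)}{\Next(y)}$. The branch must inhabit this same type in the ambient context extended by $q : \Eq{\Dec{A}}{x}{y}$ with annotation $\nu \circ \mu = \ell$. I claim that $\Refl{\Next(x)}$ works: what remains is to exhibit the judgmental equality $\Next(x) = \Next(y)$ in that extended context.

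Unfolding $\Next$ via the $\beta$-rule for the modal eliminator yields $\Next(x) = \MkBox[\ell]{x^{1\le\ell}}$ and $\Next(y) = \MkBox[\ell]{y^{1\le\ell}}$. By the congruence rule for $\MkBox[\ell]{-}$, it suffices to show $x^{1\le\ell} = y^{1\le\ell}$ in the \emph{locked} context $\ldots,\, q :_\ell \Eq{\Dec{A}}{x}{y},\, \Lock_\ell$. In that locked context the variable rule permits the occurrence $q^{1_\ell}$, since the lock to the right of $q$ composes to $\ell$ and matches $q$'s annotation; this witnesses $\Eq{\Dec{A}^{1\le\ell}}{x^{1\le\ell}}{y^{1\le\ell}}$, and equality reflection then delivers the sought judgmental equality.

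The only real work is modal bookkeeping: one must verify that the ambient variables $A$, $x$, $y$ (all annotated by $1$) are accessible under $\Lock_\ell$ via the unique 2-cell $1 \le \ell$ of $\ModeGuarded$, that $q$ itself is accessible via $1_\ell$, and that $\Coe{\Triv*{x}}{1}{\ell}$ $\beta$-reduces on the nose to $\MkBox[\ell]{x^{1\le\ell}}$. All of this follows by direct inspection of the rules in Figure~\ref{fig:mtt-gat-terms} together with the uniqueness of keys in the poset-enriched mode theory, so the lemma costs nothing beyond a careful unfolding.
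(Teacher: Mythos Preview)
Your proposal is correct and follows essentially the same approach as the paper: both use modal elimination on $p$ to obtain an $\ell$-annotated equality hypothesis, reduce the goal to a judgmental equality under $\Lock_\ell$ via congruence of $\MkBox[\ell]{-}$, and then invoke the variable rule plus equality reflection. Your version is simply more explicit, spelling out the term, the $\beta$-unfolding of $\Next$, and the modal bookkeeping that the paper's prose argument leaves implicit.
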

\begin{proof}
  Suppose $x , y : \Dec{A}$ and $p : \Later\Eq{\Dec{A}}{x}{y}$; to show $\Next(x) = \Next(y) :
  \Later \Dec{A}$. By congruence and the elimination rule for the modality, it suffices to prove $x
  = y : \Dec{A}$ in the locked context $\LockCxV{A : \Uni, x : \Dec{A} , y : \Dec{A},
  \DeclVar{p}{\Eq{\Dec{A}}{x}{y}}<\ell>}<\ell>$. But by the variable rule we have $p :
  \Eq{\Dec{A}}{x}{y}$ in this context, and hence $x = y : \Dec{A}$.
\end{proof}

This can be used to prove \emph{internally} that guarded fixed points are unique.
\begin{thm}\
  \label{thm:guarded-recursion:unique-fixed-points}
  $\Lob(M)$ is the unique guarded fixed point of $M : \Later \Dec{A} \to \Dec{A}$,
  \ie{}
  \[
    (A : \Uni)(x : \Dec{A}) \to \Eq{\Dec{A}}{M(\Next(x))}{x} \to \Eq{\Dec{A}}{\Lob(M)}{x}
  \]
\end{thm}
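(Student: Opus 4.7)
The plan is to prove the statement by L\"ob induction on the goal itself. Fix $A$, $M$, $x$ and an assumption $q : \Eq{\Dec{A}}{M(\Next(x))}{x}$; I want to produce an inhabitant of $P \defeq \Eq{\Dec{A}}{\Lob(M)}{x}$. The strategy is to exhibit a function $\phi : \Later P^{1\le\ell} \to P$ and then return $\Lob(\phi) : P$.

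The key step is constructing $\phi$. Given $p : \Later P^{1\le\ell}$ (intuitively, a proof that the two fixed points coincide \emph{later}), I apply Lemma~\ref{lem:guarded:later-eq} to $p$ at the type $\Dec{A}$ with $\Lob(M)$ and $x$ as the endpoints, which yields $\Eq{\Later\Dec{A}}{\Next(\Lob(M))}{\Next(x)}$. Under equality reflection this gives $\Next(\Lob(M)) = \Next(x)$ definitionally, so congruence upgrades this to $M(\Next(\Lob(M))) = M(\Next(x))$. Chaining through the assumption $q$ yields $M(\Next(\Lob(M))) = x$, and combining with the defining equation for $\Lob$ itself, namely $\Lob(M) = M(\Next(\Lob(M)))$, produces an inhabitant of $P$, which is what $\phi$ must return.

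Once $\phi$ is in hand, invoking L\"ob induction one more time gives $\Lob(\phi) : P$, completing the proof. I do not expect any serious obstacle: the argument is essentially the standard unicity-of-fixed-points proof from guarded type theory, and all the ingredients (L\"ob with its propositional $\beta$-rule, $\Next$, and Lemma~\ref{lem:guarded:later-eq}) are already available in the extensional variant of \MTT{} with $\ModeGuarded$. The only point requiring a little care is bookkeeping of the $(-)^{1\le\ell}$ annotations on the types under $\Later$; but since the mode theory is poset-enriched these are forced and can be elided as in the preceding subsection, so the proof term reads cleanly as $\Lob\bigl(\lambda p.\, \mathsf{trans}(\Lob_\beta,\, \mathsf{cong}_M(\mathsf{later\text{-}eq}\ p),\, q)\bigr)$ once one unfolds the sketch above.
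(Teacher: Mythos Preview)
Your proposal is correct and follows essentially the same approach as the paper: L\"ob induction, Lemma~\ref{lem:guarded:later-eq} to turn a later equality into an equality of $\Next$s, congruence of $M$, the unfolding rule for $\Lob$, and the hypothesis $q$. The only difference is cosmetic: the paper applies L\"ob to the whole $\Pi$-type $(x:\Dec{A})\to\Eq{}{M(\Next(x))}{x}\to\Eq{}{\Lob(M)}{x}$ and then instantiates the delayed hypothesis at $\Next(x)$ and $\Next(q)$ via $\circledast$, whereas you fix $x$ and $q$ first and L\"ob-induct directly on $P$; since the inductive hypothesis is only ever needed at the same $x$ and $q$, both orderings work and yield the same chain of equalities.
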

\begin{proof}
  Suppose $A : \Uni$; to show $(x : \Dec{A}) \to \Eq{\Dec{A}}{M(\Next(x))}{x} \to
  \Eq{\Dec{A}}{\Lob(M)}{x}$ by L\"ob induction. Thus, assume that
  \[
    f : \Later ((x : \Dec{A}) \to \Eq{\Dec{A}}{M(\Next(x))}{x} \to \Eq{\Dec{A}}{\Lob(M)}{x})
  \]
  If $x : A$ and $p : \Eq{\Dec{A}}{M(\Next(x))}{x}$, we calculate that
  \begin{align*}
    \Lob(M)
    &= M(\Next(\Lob(M))) && \text{unfolding rule for $\Lob$}\\
    &= M(\Next(x)) && \text{by Lemma \ref{lem:guarded:later-eq} on $\ZApp{\ZApp{f}{\Next(x)}{}}{\Next(p)}{} : \Later \Eq{\Dec{A}}{\Lob(M)}{x}$}\\
    &= x && \text{by $p$}
  \end{align*}
  Thus, this type is inhabited by the term $\Lam[A]{\Lob(\Lam[f]{\Lam[x]{\Lam[p]{\Refl{x}}}})}$.
\end{proof}

We can also use the L{\"o}b operator on the universe to form \emph{guarded recursive types}. For
example, streams can be defined by\footnote{We denote modalities and their counterparts on the
universe by the same notation. For example, we may write $\Delta A$ to mean the type
$\IsTy{\Modify[\delta]{A}}[1]$ whenever $\IsTy[\LockCxV{\Gamma}<\delta>]{A}[1]$, but also to mean
the term $\IsTm{\Enc{\Modify[\delta]{\Dec{A}}}}{\Uni}$ whenever
$\IsTm[\LockCxV{\Gamma}<\delta>]{A}{\Uni}$.}
\begin{alignat*}{3}
  &\Str    &&:\      &&\Uni \to \Uni \Mute{{}\mathop{@} s}\\
  &\Str(A) &&\defeq\ &&\GSec(\Lob(\Lam[X]{\Disc A \times \Open{X}[Y]{\Later Y}<\ell>[1]}))
\end{alignat*}

$\Str$ maps a constant set, \ie{} a type $A \Mute{{}\mathop{@} s}$, to the type of streams over $A$,
which is again a constant set. This is done by first defining a timed set
\[
  \IsTm[\LockCxV{\DeclVar{A}{\Uni}<1>}<\gamma>]{
    \Str'(A) \defeq \Lob(\Lam[X]{\Disc A \times \Open{X}[Y]{\Later Y}<\ell>[1]})
  }{\Uni}<t>
\]
$\Str'(A)$ is defined by L\"ob induction: assuming $\DeclVar{X}{\Later \Uni^{1 \leq \ell}}<1>$ we
must define an element of the timed universe. This is given as the product of
\begin{itemize}
  \item the set $A \Mute{{}\mathop{@} s}$, considered as a constant-everywhere timed set $\Delta A
  \Mute{{}\mathop{@} t}$;
  \item a guarded recursive call, which represents the rest of the stream.
\end{itemize}
Recalling that $\Uni^{1 \leq \ell} = \Uni$, the second component is given by modal elimination.
Nevertheless, it is not immediate that the first component type-checks: we must show that
\[
  \IsTm[\LockCxV{\LockCxV{\DeclVar{A}{\Uni}<1>}<\gamma>}<\delta>]{A}{\Uni}<s>
\]
But $\gamma \circ \delta = 1$, so the context is equal to $\LockCxV{\DeclVar{A}{\Uni}<1>}<1>$ and we
can use $A$. Unfolding the guarded fixed point, we have that
\[
  \Str'(A) = \Disc{A} \times \Later \Str'(A) : U \Mute{{}\mathop{@} t}
\]
We apply $\Gamma$ to `totalize' this into the constant set $\Str(A) \Mute{{}\mathop{@} s}$ of
guarded streams.

Even though not immediately obvious, there is a serious advantage in expressing this definition in a
way that spans two modes. In previous work~\cite{bizjak:2016} the stream type $\Str(A)$ was
coinductive only if $A$ was provably a `constant set,' \ie{} if $A \Equiv \Box A$. Theorems about
streams had to carry around a proof of this equivalence. In our case, defining $\Str(A)$ at the mode
$s$ of constant sets automatically ensures that. Hence, $\Str(A)$ is equivalent to the familiar
definition, but we no longer need to propagate proofs of constancy.

$\Str(A)$ supports the following operations:
\[
  \arraycolsep=1.4pt
  \begin{array}{lcl}
    \Cons &:& (A : \Uni) \to \Dec{A} \to \Dec{\Str(A)} \to \Dec{\Str(A)} \Mute{{}\mathop{@} s}\\
    \Cons_A(h, t) &\defeq& \Open{t}[t']{\MkBox[\gamma]{\Pair{\MkBox[\delta]{h}}{\Next(t')}}}<\gamma>[1]\\[0.3cm]
    \Head &:& (A : \Uni) \to \Dec{\Str(A)} \to \Dec{A} \Mute{{}\mathop{@} s} \\
    \Head_A(s) &\defeq& \Open{s}[s']{\Triv*{\MComp{\MkBox[\gamma]{\Proj[0]{s'}}}{\gamma}{\delta}}}<\gamma>[1]\\[0.3cm]
    \Tail &:& (A : \Uni) \to \Dec{\Str(A)} \to \Dec{\Str(A)} \Mute{{}\mathop{@} s} \\
    \Tail_A(s) &\defeq& \Open{s}[s']{\MComp{\MkBox[\gamma]{\Proj[1]{s'}}}{\gamma}{\ell}}<\gamma>[1]
  \end{array}
\]
Those familiar with prior work on guarded streams may be surprised by the type of $\Tail$. The
expected definition would be
\[
  \Tail_A(s) \stackrel{?}{=} \Open{s}[s']{\MkBox[\gamma]{\Proj[1]{s'}}}<\gamma>[1]
\]
This term has type $\Dec{\Str(A)} \to \GSec(\Later \Dec{\Str'(A)})$. However, in our case the
$\GSec$ modality is sufficiently strong to ``absorb'' this extra $\Later$: the equality $\gamma
\circ \ell = \gamma$ induces an equivalence $\GSec \circ \Later \simeq \GSec$, which we use to
obtain the version given above. This small difference is crucial: it will internally make $\Str(A)$
into a final coalgebra!

\begin{lem}
  These operations satisfy the expected $\beta$ and $\eta$ laws, \ie{}
  \begin{enumerate}
  \item $(h : \Dec{A})(t : \Dec{\Str(A)}) \to \Eq{\Dec{A}}{\Head_A(\Cons_A(h, t))}{h} \Mute{{}\mathop{@} s}$
  \item $(h : \Dec{A})(t : \Dec{\Str(A)}) \to \Eq{\Dec{\Str(A)}}{\Tail_A(\Cons_A(h, t))}{t} \Mute{{}\mathop{@} s}$
  \item $(h : \Dec{A})(t : \Dec{\Str(A)}) \to \Eq{\Dec{\Str(A)}}{s}{\Cons_A(\Head_A(s), \Tail_A(s))}\Mute{{}\mathop{@} s}$
  \end{enumerate}
\end{lem}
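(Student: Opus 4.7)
The overall strategy is to prove each equation by first performing modal induction on the relevant stream argument, reducing to the case in which it is of the form $\MkBox[\gamma]{-}$, and then computing using the $\beta$-rule \ruleref{tm/modal-beta} for modal types together with the $\beta$ and $\eta$ laws for dependent pairs. Throughout, the mode-theoretic equations $\gamma \circ \delta = 1_s$ and $\gamma \circ \ell = \gamma$, together with the fact that $\ModeGuarded$ is poset-enriched (so any two parallel 2-cells coincide), will collapse nested modal brackets produced by $\MComp$, $\MComp*$, $\Triv$, $\Coe$, and $\Next$ back to the forms we need.

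For (1), I would perform modal induction on $t : \Modify[\gamma]{\Str'(A)}$ to reduce to $t = \MkBox[\gamma]{t^*}$, with $t^*$ bound at annotation $\gamma$. By \ruleref{tm/modal-beta}, $\Cons_A(h, \MkBox[\gamma]{t^*})$ reduces to $\MkBox[\gamma]{\Pair{\MkBox[\delta]{h}}{\Next(t^*)}}$. Applying $\Head_A$ and using \ruleref{tm/modal-beta} again and the first projection $\beta$-law then leaves $\Triv{\MComp{\MkBox[\gamma]{\MkBox[\delta]{h}}}{\gamma}{\delta}}$. Unfolding $\MComp$ by its two modal $\beta$-reductions gives $\MkBox[\gamma \circ \delta]{h}$; invoking $\gamma \circ \delta = 1_s$ yields $\MkBox[1]{h}$, and one further unfolding of $\Triv$ delivers $h$.

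For (2), the same modal induction on $t$ reduces to $t = \MkBox[\gamma]{t^*}$. Then $\Tail_A(\Cons_A(h, \MkBox[\gamma]{t^*}))$ $\beta$-reduces to $\MComp{\MkBox[\gamma]{\Next(t^*)}}{\gamma}{\ell}$. Unfolding $\Next(t^*) = \Coe{\Triv*{t^*}}{1}{\ell}$ rewrites this to $\MkBox[\gamma]{\MkBox[\ell]{t^{*\,1\le\ell}}}$; unfolding $\MComp$ then gives $\MkBox[\gamma \circ \ell]{t^{*\,1 \le \ell}}$, and the equation $\gamma \circ \ell = \gamma$ together with uniqueness of 2-cells in a poset-enriched mode theory identifies this with $\MkBox[\gamma]{t^*} = t$.

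For (3), I would perform modal induction on $s : \Modify[\gamma]{\Str'(A)}$ to reduce to $s = \MkBox[\gamma]{s^*}$. Unfolding the guarded fixed point gives $s^* : \Modify[\delta]{A} \times \Later \Str'(A)$, so the $\eta$-law for pairs yields $s^* = \Pair{\Proj[0]{s^*}}{\Proj[1]{s^*}}$. Computing the right-hand side, the inner $\MkBox[\delta]{h}$ inside $\Cons$ recombines with the $\Head$ reduction of (1) into $\Proj[0]{s^*}$ (via $\Modify[\gamma \circ \delta]{A} = \Modify[1]{A}$ and $\Triv$), while the $\Next$ inside $\Cons$ recombines with the $\Tail$ reduction of (2) into $\Proj[1]{s^*}$ (via $\gamma \circ \ell = \gamma$). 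The main obstacle is exactly this last step: one must track the coercions introduced by $\Triv$, $\Triv*$, $\Coe$, and $\MComp/\MComp*$ and verify that their composites are definitionally the identity on $\Modify[1]{A}$ and $\Modify[\gamma]{\Str'(A)}$ respectively. This hinges entirely on the strict equations $\gamma \circ \delta = 1_s$ and $\gamma \circ \ell = \gamma$ in $\ModeGuarded$ and on the uniqueness of the involved 2-cells, so once those are carefully unwound, the $\eta$-law follows by pair $\eta$ and $\beta$ for modal types alone.
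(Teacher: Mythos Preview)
Your plan matches the paper's approach: the paper only spells out (2), performing modal induction to write $t = \MkBox[\gamma]{t'}$ and then computing via the modal $\beta$-rule, the pair $\beta$-rule, and the equation $\gamma \circ \ell = \gamma$, exactly as you do. Your proofs of (1) and (2) are correct and, if anything, more explicit than the paper's (you unfold $\Next$ and $\MComp$ where the paper just asserts the result).

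There is one imprecision in your sketch of (3). After writing $s = \MkBox[\gamma]{s^*}$ you claim that ``pair $\eta$ and $\beta$ for modal types alone'' finish the job once the coercions are unwound. That is not quite right: the terms $\Head_A(\MkBox[\gamma]{s^*})$ and $\Tail_A(\MkBox[\gamma]{s^*})$ get stuck at inner modal eliminators applied to $\Proj[0]{s^*}$ and $\Proj[1]{s^*}$, which are not syntactically of the form $\MkBox[\delta]{-}$ or $\MkBox[\ell]{-}$. Consequently the outer $\mathsf{let}$ in $\Cons_A$ does not $\beta$-reduce either. You therefore need two \emph{further} modal inductions, writing $\Proj[0]{s^*} = \MkBox[\delta]{a}$ and $\Proj[1]{s^*} = \MkBox[\ell]{r}$, before any computation fires. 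After that the argument proceeds exactly as you envisage: $\Head_A(s)$ computes to $a$, $\Tail_A(s)$ computes to $\MkBox[\gamma]{r}$, and $\Cons_A(a,\MkBox[\gamma]{r}) = \MkBox[\gamma]{\Pair{\MkBox[\delta]{a}}{\Next(r)}}$; then $\Next(r) = \MkBox[\ell]{r}$ because the key $\Key{1\le\ell}{\LockCx{\Gamma}<\gamma>}$ is the identity (poset-enrichment plus $\gamma\circ\ell=\gamma$), and pair $\eta$ closes the gap. So the obstacle is not merely coercion bookkeeping but the need for nested modal inductions on the components of $s^*$; once you add those, your outline is complete.
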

\begin{proof}
  We prove (2), the other two being similar. If $h : \Dec{A}$ and $t : \Dec{\Str(A)}$, note that $\Dec{\Str(A)}$ is a type of the form $\Gamma(-)$, and calculate that
  \begin{align*}
    &\ \Tail_A(\Cons_A(h, t)) && \\
    &= \Tail_A(\Cons_A(h, \MkBox[\gamma]{t'}))
      && \text{write $t = \MkBox[\gamma]{t'}$ by modal induction}\\
    &= \Tail_A(\MkBox[\gamma]{\Pair{\MkBox[\delta]{h}}{\Next(t')}}) && \\
    &= \MComp{\MkBox[\gamma]{\Proj[1]{\Pair{\MkBox[\delta]{h}}{\Next(t')}}}}{\gamma}{\ell} &&\\
    &= \MComp{\MkBox[\gamma]{\Next(t')}}{\gamma}{\ell} && \\
    &= \MkBox[\gamma]{t'}
      &&\text{as $\gamma \circ \ell = \gamma$}\\
    &= t && \qedhere
  \end{align*}
\end{proof}

\begin{thm} \label{thm:guarded:final-coalgebra}
  $\Str(A)$ is the final coalgebra for $\Lam[X]{\Dec{A} \times X} : \Uni \to \Uni \Mute{{}\mathop{@} s}$.
\end{thm}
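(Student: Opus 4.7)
The plan is to equip $\Str(A)$ with the coalgebra $\alpha_A \defeq \Lam[s]{\Pair{\Head_A(s)}{\Tail_A(s)}} : \Dec{\Str(A)} \to \Dec{A} \times \Dec{\Str(A)}$ and show that, for every $X : \Uni$ and every coalgebra $c : \Dec{X} \to \Dec{A} \times \Dec{X}$ at mode $s$, there is a unique coalgebra morphism $f_c : \Dec{X} \to \Dec{\Str(A)}$.

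For existence, I first construct, by L\"ob induction at mode $t$ in the $\gamma$-locked ambient context, an auxiliary $g : \Disc{X} \to \Dec{\Str'(A)}$. Inside the L\"ob, with hypothesis $\phi' : \Later(\Disc{X} \to \Dec{\Str'(A)})$ and input $y : \Disc{X}$, I open $y = \MkBox[\delta]{x}$ via the modal eliminator; in the resulting $\delta$-locked extension the composite lock collapses via $\gamma \circ \delta = 1_s$, so both $x$ and the original $c$ become accessible at mode $s$ and one can form $\Pair{a}{x'} \defeq c(x)$. I then return $\Pair{\MkBox[\delta]{a}}{\ZApp{\phi'}{\Next(\MkBox[\delta]{x'})}{\ell}}$, which inhabits $\Disc{A} \times \Later \Dec{\Str'(A)} \Equiv \Dec{\Str'(A)}$. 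Back at mode $s$ I set $f_c(x) \defeq \MkBox[\gamma]{g(\MkBox[\delta]{x})}$, again well-typed because $\gamma\circ\delta = 1_s$.

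Next I verify that $f_c$ is a coalgebra morphism. Unfolding the L\"ob equation yields an explicit pair for $g(\MkBox[\delta]{x})$; the modal $\beta$-rule then reduces $\Head_A(f_c(x))$ to an expression of the form $\MComp{\MkBox[\gamma]{\MkBox[\delta]{\Proj[0]{c(x)}}}}{\gamma}{\delta}$, which collapses to $\Proj[0]{c(x)}$ via $\gamma\circ\delta = 1_s$. Symmetrically $\Tail_A(f_c(x))$ reduces to an expression involving $\MComp{-}{\gamma}{\ell}$ that collapses via $\gamma = \gamma\circ\ell$ to $f_c(\Proj[1]{c(x)})$. Both calculations are routine.

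Uniqueness is the main obstacle. Given another coalgebra morphism $h : \Dec{X} \to \Dec{\Str(A)}$, I must show $(x : \Dec{X}) \to \Eq{\Dec{\Str(A)}}{f_c(x)}{h(x)}$. My plan is to reduce this to uniqueness of guarded fixed points (Theorem \ref{thm:guarded-recursion:unique-fixed-points}): transpose $h$ to an auxiliary $h^\flat : \Disc{X} \to \Dec{\Str'(A)}$ at mode $t$ in the $\gamma$-locked context, and show that the coalgebra-morphism equations for $h$ force $h^\flat$ to satisfy exactly the same guarded fixed-point equation as $g$; then $g = h^\flat$ by Theorem \ref{thm:guarded-recursion:unique-fixed-points}, and transporting back along $\MkBox[\gamma]$ yields $f_c = h$. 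The delicate step is translating the mode-$s$ coalgebra-morphism equations for $h$ into a genuine L\"ob-style fixed-point equation for $h^\flat$; this will rely on Lemma \ref{lem:guarded:later-eq} to internalize equalities under $\Next$, and on careful bookkeeping of the modal annotations through the transpose.
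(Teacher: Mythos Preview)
Your proposal is correct and tracks the paper's proof closely for existence and the coalgebra-morphism verification: both construct the auxiliary map at mode $t$ by L\"ob induction, wrap it with $\MkBox[\gamma]{-(\MkBox[\delta]{-})}$, and compute $\Head$ and $\Tail$ using the equations $\gamma\circ\delta = 1$ and $\gamma\circ\ell = \gamma$.

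The only genuine difference is in the uniqueness step. The paper does \emph{not} appeal to Theorem~\ref{thm:guarded-recursion:unique-fixed-points}; instead it transposes the competing coalgebra morphism to $\hat g : \Disc B \to \Dec{\Str'(A)}$ and proves $\hat g = f'$ directly by a fresh L\"ob induction, reasoning componentwise and invoking a small auxiliary lemma relating $\Next(\Coe{\MkBox[\delta]{\Tail(s)}}{\delta\circ\gamma}{1})$ to $\Proj[1]{\Coe{\MkBox[\delta]{s}}{\delta\circ\gamma}{1}}$. Your route---show that $h^\flat$ satisfies the same guarded fixed-point equation as $g$ and then invoke uniqueness of guarded fixed points---also works and is arguably cleaner, since it reuses a theorem already on the shelf; the computational content is the same (you will still need that auxiliary lemma to verify the tail component of the fixed-point equation from the mode-$s$ coalgebra law for $h$). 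Note that Lemma~\ref{lem:guarded:later-eq} is not actually needed for your fixed-point verification itself; it is used inside the proof of Theorem~\ref{thm:guarded-recursion:unique-fixed-points}, so you inherit it only indirectly.
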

\begin{proof}
  Given $A : \Uni$ we define a coalgebra $\Uncons : \Str(A) \to (\Dec{A} \times \Str(A)) \Mute{{}\mathop{@} s}$ by
  \[
    \Uncons(s) \defeq \Pair{\Head_A(s)}{\Tail_A(s)}
  \]
  To show finality, suppose $c : B \to \Dec{A} \times B \Mute{{}\mathop{@} s}$ is another
  coalgebra. We define a function $f : B \to \Str(A) \Mute{{}\mathop{@} s}$ by
  \[
    \arraycolsep=1.4pt
    \begin{array}{lcl}
      f' &:& \Disc B \to \Dec{\Str'(A)} \Mute{{}\mathop{@} t}\\
      f' &\defeq& \Lob(\Lam[f'', x]{\Open{x}[x']{\Pair{h}{t}}<\delta>[1]})\\
         &&\text{where }       h = \MkBox[\delta]{\Proj[0]{c(x')}}\\
         &&\text{and}\quad\;\, t = {\ZApp{f''}{\Next(\MkBox[\delta]{\Proj[1]{c(x')}})}{\ell}}\\[0.3cm]
      f &:& B \to \Dec{\Str(A)} \Mute{{} \mathop{@} s}\\
      f(x) &\defeq& \MkBox[\gamma]{f'(\MkBox[\delta]{x})}
    \end{array}
  \]
  This is a morphism of coalgebras: for any $x : B$ we have
  \begin{align*}
    \Uncons(f(x))
    &= \Pair{\Head_A(f(x))}{\Tail_A(f(x))}\\
    &= \Pair{\Proj[0]{c(x)}}{\Tail_A(f(x))}\\
    &= \Pair{\Proj[0]{c(x)}}{\MComp{\MkBox[\gamma]{\Proj[1]{f'(x)}}}{\gamma}{\ell}}\\
    &= \Pair{\Proj[0]{c(x)}}{\MComp{\MkBox[\gamma]{\ZApp{\Next(f')}{\Next(\MkBox[\delta]{\Proj[1]{c(x)}})}{\ell}}}{\gamma}{\ell}}\\
    &= \Pair{\Proj[0]{c(x)}}{f(\Proj[1]{c(x)})}
  \end{align*}
  Finally, we must show that $f$ is the unique coalgebra morphism. Suppose we are given
  $g : B \to \Dec{\Str(A)} \Mute{{}\mathop{@} s}$ which also satisfies
  $\Uncons(g(x)) = \Pair{\Proj[0]{c(x)}}{g(\Proj[1]{c(x)})}$. We `shift' this definition to timed sets, by defining
  \[
    \arraycolsep=1.4pt
    \begin{array}{lcl}
    \hat{g}    &:& \Disc{B} \to \Dec{\Str'(A)} \Mute{{}\mathop{@} t} \\
    \hat{g}(x) &\defeq& \Coe{\ZApp{\MkBox[\delta]{g} }{x}{\delta}}{\delta \circ \gamma}{1}
    \end{array}
  \]
  It suffices to show that $\hat{g} = f' \Mute{{}\mathop{@} t}$, and we do so by L\"ob induction and
  function extensionality. Assume $p : \Later \Eq{}{\hat{g}}{f'}$, and $x : \Disc B$. To prove
  $\hat{g}(x) = f'(x) : \Disc B \times \Later \Str'(A)$ it suffices to show componentwise equality.
  By modal induction write $x = \MkBox[\delta]{y}$ for $y : B$.

  First, we have that $\Proj[0]{f'(\MkBox[\delta]{y})} = \MkBox[\delta]{\Proj[0]{c(y)}}$ by the definition of $f'$. On the other hand, we have that
  \[
    \Proj[0]{\hat{g}(\MkBox[\delta]{y})}
    = \Proj[0]{\Coe{\MkBox[\delta]{g(y)}}{\delta \circ \gamma}{1}}
    = \Proj[0]{g_x} : \Disc B \Mute{{}\mathop{@} t}
  \]
  where we have used modal induction to write $g(y) = \MkBox[\gamma]{g_x}$. That  $g$ is a coalgebra
  morphism implies that $\Head(g(y)) = \Proj[0]{c(y)}$. If we now use modal induction to write
  $\Proj[0]{g_x} = \MkBox[\delta]{b}$ for $b : B$ and unfold the definition of $\Head$, we obtain $b
  = \Proj[0]{c(y)}$, so $\Proj[0]{g_x} = \MkBox[\delta]{b} = \MkBox[\delta]{\Proj[0]{c(y)}}$, which shows
  that the two first components are equal.

  For the second component, we compute that
  \begin{align*}
    &\Proj[1]{f'(\MkBox[\delta]{y})} & \\
    &= \ZApp{\Next(f')}{\Next(\MkBox[\delta]{\Proj[1]{c(y)}}))}{\ell} & \\
    &= \Next(f'(\MkBox[\delta]{\Proj[1]{c(y)}})) & \\
    &= \Next(\hat{g}(\MkBox[\delta]{\Proj[1]{c(y)}}))
      & \text{using $p$ through Lemma \ref{lem:guarded:later-eq}}\\
    &= \Next(\Coe{\MkBox[\delta]{g(\Proj[1]{c(x)})}}{\delta \circ \gamma}{1}) & \\
    &= \Next(\Coe{\MkBox[\delta]{\Tail(g(y))}}{\delta \circ \gamma}{1})
      & \text{as $g$ is a coalgebra morphism} \\
    &= \Proj[1]{\Coe{\MkBox[\delta]{g(y)}}{\delta \circ \gamma}{1}}
      & \text{lemma} \\
    &= \Proj[1]{\hat{g}(x)}
  \end{align*}
  The lemma referred to above is the fact that for any $s : \Str(A)$ it is the case that
  \[
    \Next(\Coe{\MkBox[\delta]{\Tail(s)}}{\delta \circ \gamma}{1})
    = \Proj[1]{\Coe{\MkBox[\delta]{s}}{\delta \circ \gamma}{1}}
  \]
  which can be shown by a series of modal inductions.
\end{proof}

We conclude this section by showing how to use these mechanisms in order to prove properties of
coinductive programs. Specifically, we will replicate a proof from \cite{bizjak:2016} which shows
that the \texttt{zipWith} operator on streams preserves commutativity. Let
\[
  \arraycolsep=1.4pt
  \begin{array}{lcl}
    \ZipWith' &:& \Disc (\Dec{A} \to \Dec{B} \to \Dec{C}) \to \Dec{\Str'(A)} \to \Dec{\Str'(B)} \to \Dec{\Str'(C)}\\
    \ZipWith'(f) &\defeq&%
    \Lob(%
      \Lam[r]{
        \Lam[x, y]{%
          \Pair{
            \ZApp{\ZApp{f}{\Proj[0]{x}}{\delta}}{\Proj[0]{y}}{\delta}
          }{
            \ZApp{\ZApp{r}{\Proj[1]{x}}{\ell}}{\Proj[1]{y}}{\ell}
          }
        }
      }
    )\\[0.3cm]
    \ZipWith &:& (\Dec{A} \to \Dec{B} \to \Dec{C}) \to \Dec{\Str(A)} \to \Dec{\Str(B)} \to \Dec{\Str(C)}\\
    \ZipWith(f) &\defeq& \Lam[x, y]{\ZApp{\ZApp{\MkBox[\gamma]{\ZipWith'(\MkBox[\delta]{f})}}{x}{\gamma}}{y}{\gamma}}
  \end{array}
\]
\begin{rem}
  Take note of a useful pattern for programming with guarded recursion, which is visible both here
  and in the proof of Theorem \ref{thm:guarded:final-coalgebra}. We first define an auxiliary
  function in mode $t$, which uses L{\"ob} induction. The main function itself is then just a thin
  wrapper which `corrects' that with the appropriate modalities and modal combinators.
\end{rem}

\begin{thm}
  If $f$ is commutative then $\ZipWith(f)$ is commutative. That is, given $A, B : \Uni$ and
  $f : \Dec{A} \to \Dec{A} \to \Dec{B}$ there is a term of the following type:
  \begin{align*}
    ((x, y &: \Dec{A}) \to \Eq{\Dec{B}}{f(x, y)}{f(y, x)}) \to{}\\
    &(s, t : \Dec{\Str(A)}) \to \Eq{\Dec{\Str(B)}}{\ZipWith(f, s, t)}{\ZipWith(f, t, s)}
  \end{align*}
\end{thm}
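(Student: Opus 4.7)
The plan is to reduce the mode-$s$ statement to an auxiliary commutativity lemma for $\ZipWith'$ living at mode $t$, and then transfer it back along the same $\MkBox[\gamma]{-}$ wrapper used to define $\ZipWith$ from $\ZipWith'$. Since the statement has pairs of universally quantified stream arguments, extensional identity and function extensionality let me work pointwise, so the real content is the equation on streams themselves.

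First I would prove the auxiliary claim at mode $t$: assuming a $\delta$-transported commutativity hypothesis $\MkBox[\delta]{\Lam[x,y]{p_{x,y}}} : \Disc((x,y : \Dec{A}) \to \Eq{\Dec{B}}{f(x,y)}{f(y,x)})$, one has $(x, y : \Dec{\Str'(A)}) \to \Eq{\Dec{\Str'(A)}}{\ZipWith'(\MkBox[\delta]{f}, x, y)}{\ZipWith'(\MkBox[\delta]{f}, y, x)}$. This is shown by L\"ob induction: given the hypothesis $r : \Later\, ((x, y : \Dec{\Str'(A)}) \to \Eq{}{\ZipWith'(f,x,y)}{\ZipWith'(f,y,x)})$ and arbitrary $x,y$, I unfold the defining equation of $\Lob$ and the $\beta$-rule for $\Str'(A)$ so that both sides become pairs whose first component is $\ZApp{\ZApp{\MkBox[\delta]{f}}{\Proj[0]{x}}{\delta}}{\Proj[0]{y}}{\delta}$ (respectively with $x,y$ swapped) and whose second component is $\ZApp{\ZApp{r}{\Proj[1]{x}}{\ell}}{\Proj[1]{y}}{\ell}$ (with $x,y$ swapped).

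Equality of the first components follows by modal induction: writing $\Proj[0]{x} = \MkBox[\delta]{a}$ and $\Proj[0]{y} = \MkBox[\delta]{b}$, the $\beta$-rule for the $\delta$-lifted $K$ combinator collapses both sides to $f(a,b)$ and $f(b,a)$, which are equal by the commutativity hypothesis $p_{a,b}$. Equality of the second components reduces to an equation between two $\Later$-typed terms; by Lemma~\ref{lem:guarded:later-eq} it suffices to prove the underlying equation under one $\Later$, and this is precisely the content of the inductive hypothesis $r$ applied to $\Proj[1]{x}, \Proj[1]{y}$ via the $K$ combinator for $\ell$. To finish, I lift the result back to mode $s$ by modal induction on $s, t : \Dec{\Str(A)}$, writing them as $\MkBox[\gamma]{s'}$ and $\MkBox[\gamma]{t'}$; applying the mode-$t$ lemma to $s', t'$ and using congruence under $\MkBox[\gamma]{-}$ and the $\gamma$-lifted $K$ combinator in the definition of $\ZipWith$ yields the required equality of $\ZipWith(f,s,t)$ and $\ZipWith(f,t,s)$.

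The main obstacle will be shuttling the commutativity hypothesis across modalities: it is given at mode $s$, whereas the L\"ob induction, the unfolding of $\ZipWith'$, and the use of Lemma~\ref{lem:guarded:later-eq} all happen at mode $t$ under a $\delta$-lock. Transporting the hypothesis along $\delta$ to make it usable where $\Proj[0]{x}, \Proj[0]{y}$ appear as $\Disc$-typed points, and then unboxing those points by modal induction so that $f$ may be applied, requires careful management of the $\delta$- and $\gamma$-annotated declarations; the equations $\gamma \circ \delta = 1$ and the comonadic structure of $b = \delta \circ \gamma$ will be essential for moving data between the two modes without obstruction.
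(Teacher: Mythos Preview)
Your approach is correct but takes a genuinely different route from the paper. The paper does not perform L\"ob induction directly on the commutativity statement for $\ZipWith'$. Instead, it rewrites the goal as $\Lob(F_0) = \Lob(F_1)$ for two functionals $F_0, F_1$ that differ only in the argument order, and then invokes the \emph{uniqueness of guarded fixed points} (Theorem~\ref{thm:guarded-recursion:unique-fixed-points}): it suffices to show that $\Lob(F_1)$ is a fixed point of $F_0$, i.e.\ $\Lob(F_1) = F_0(\Next(\Lob(F_1)))$, and \emph{this} equation is then established by L\"ob induction. Your route is more elementary in that it bypasses Theorem~\ref{thm:guarded-recursion:unique-fixed-points} entirely; the paper's route is more modular, exhibiting the result as an instance of a general fixed-point principle. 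The per-step reasoning (modal induction on the heads to apply commutativity of $f$, and Lemma~\ref{lem:guarded:later-eq} plus the inductive hypothesis on the tails) is essentially the same in both proofs.

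One minor point: when you write that after unfolding ``the second component is $\ZApp{\ZApp{r}{\Proj[1]{x}}{\ell}}{\Proj[1]{y}}{\ell}$'', you have overloaded $r$. In the definition of $\ZipWith'$ the bound recursion variable is also called $r$, but after unfolding $\Lob$ that variable becomes $\Next(\ZipWith'(\MkBox[\delta]{f}))$, not your L\"ob hypothesis. The second components are $\ZApp{\ZApp{\Next(\ZipWith'(\MkBox[\delta]{f}))}{\Proj[1]{x}}{\ell}}{\Proj[1]{y}}{\ell}$ and its swap; only \emph{after} modal induction on the tails (writing $\Proj[1]{x} = \MkBox[\ell]{x'}$, $\Proj[1]{y} = \MkBox[\ell]{y'}$) do these reduce to $\Next(\ZipWith'(\MkBox[\delta]{f}, x', y'))$ and its swap, at which point your L\"ob hypothesis $r$ combined with Lemma~\ref{lem:guarded:later-eq} finishes the job. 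This is a notational slip, not a gap.
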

\begin{proof}
  Suppose $e : (x, y : \Dec{A}) \to \Eq{\Dec{B}}{f(x, y)}{f(y, x)}$ and $s, t : \Dec{\Str(A)}$. We
  wish to show that $\ZipWith(f, s, t) = \ZipWith(f, t, s)$. By the definition of $\ZipWith$, it is sufficient to prove that for any $u, v : \Dec{\Str'(A)}$ we have
  \[
    \ZipWith'(\MkBox[\delta]{f}, u, v)
    = \ZipWith'(\MkBox[\delta]{f}, v, u)
  \]
  In turn, it suffices to show that
  \[
    \Lob(F_0) = \Lob(F_1)
  \]
  where
  \begin{align*}
    F_0 &\defeq \Lam[r]{\Lam[x, y]{
      \Pair{
        \ZApp{\ZApp{\MkBox[\delta]{f}}{\Proj[0]{x}}{\delta}}{\Proj[0]{y}}{\delta}
      }{
        \ZApp{\ZApp{r}{\Proj[1]{\Alert{x}}}{\ell}}{\Proj[1]{\Alert{y}}}{\ell}
      }
    }} \\
    F_1 &\defeq \Lam[r]{\Lam[x, y]{
      \Pair{
        \ZApp{\ZApp{\MkBox[\delta]{f}}{\Proj[0]{y}}{\delta}}{\Proj[0]{x}}{\delta}
      }{
        \ZApp{\ZApp{r}{\Proj[1]{\Alert{y}}}{\ell}}{\Proj[1]{\Alert{x}}}{\ell}
      }
    }}
  \end{align*}
  because then
  \[
    \ZipWith'(\MkBox[\delta]{f}, v, u)
    \defeq
    \Lob(F_0)(u, v)
    =
    \Lob(F_1)(u, v)
    =
    \ZipWith'(\MkBox[\delta]{f}, u, v)
  \]
  By Theorem~\ref{thm:guarded-recursion:unique-fixed-points} we know guarded fixed points are
  unique, so it suffices to show that
  \begin{equation}
    \Lob(F_1) = F_0(\Next(\Lob(F_1)))
    \label{eq:guarded-recursion:goal}
  \end{equation}
  We use L\"{o}b induction to construct a term of type $\Eq{}{\Lob(F_1)}{F_0(\Next(\Lob(F_1)))}$.
  \begin{align*}
    &F_0(\Next(\Lob(F_1))) \\
    &= \Lam[x,y]{\Pair{\ZApp{\ZApp{\MkBox[\delta]{f}}{\Proj[0]{x}}{\delta}}{\Proj[0]{y}}{\delta}}{\ZApp{\ZApp{\Next(\Lob(F_1))}{\Proj[1]{x}}{\ell}}{\Proj[1]{y}}{\ell}}}\\
    &\qquad\qquad \text{by induction let $\MkBox[\delta]{a} \defeq \Proj[0](x)$ and $\MkBox[\delta]{b} \defeq \Proj[0](y)$} \\
    &= \Lam[x,y]{\Pair{\MkBox[\delta]{f(a, b)}}{\ZApp{\ZApp{\Next(\Lob(F_1))}{\Proj[1]{x}}{\ell}}{\Proj[1]{y}}{\ell}}}\\
    &= \Lam[x,y]{\Pair{\MkBox[\delta]{f(b, a)}}{\ZApp{\ZApp{\Next(\Lob(F_1))}{\Proj[1]{x}}{\ell}}{\Proj[1]{y}}{\ell}}}\\
    &= \Lam[x,y]{\Pair{\MkBox[\delta]{f(b, a)}}{\ZApp{\ZApp{\Next(F_1(\Next(\Lob(F_1))))}{\Proj[1]{x}}{\ell}}{\Proj[1]{y}}{\ell}}}\\
    &\qquad\qquad \text{by induction let $\MkBox[\ell]{s} \defeq \Proj[1]{x}$ and $\MkBox[\ell]{t} \defeq \Proj[1]{y}$}\\
    &= \Lam[x,y]{\Pair{\MkBox[\delta]{f(b, a)}}{\Next(F_1(\Next(\Lob(F_1))(s, t))}}\\
    &= \Lam[x,y]{\Pair{\MkBox[\delta]{f(b, a)}}{\Next(F_0(\Next(\Lob(F_1))(t, s))}}\\
    &= \Lam[x,y]{\Pair{\MkBox[\delta]{f(b, a)}}{\ZApp{\ZApp{\Next(F_0(\Next(\Lob(F_1))))}{\Proj[1]{y}}{\ell}}{\Proj[1]{x}}{\ell}}}\\
      &\qquad\qquad \text{using the IH through Lemma \ref{lem:guarded:later-eq}}\\
    &= \Lam[x,y]{\Pair{\MkBox[\delta]{f(b, a)}}{\ZApp{\ZApp{\Next(\Lob(F_1))}{\Proj[1]{y}}{\ell}}{\Proj[1]{x}}{\ell}}}\\
    &= \Lam[x,y]{\Pair{\ZApp{\ZApp{\MkBox[\delta]{f}}{\Proj[0]{y}}{\delta}}{\Proj[0]{x}}{\delta}}{\ZApp{\ZApp{\Next(\Lob(F_1))}{\Proj[1]{y}}{\ell}}{\Proj[1]{x}}{\ell}}}\\
    &= \Lob(F_1) \qedhere
  \end{align*}
\end{proof}

\begin{rem}[Previous approaches]
  Using dependent type theories to reason about guarded recursion and coinductive types has been a
  problem for some time \cite{moegelberg:2014}. The technical device of \emph{clocks}, due
  to~\cite{atkey:2013}, was introduced to deal with productivity in a simply-typed setting. Clocks
  were then introduced to dependent types \cite{moegelberg:2014}, and later refined into the
  extensional guarded type theory \textbf{gDTT} of \cite{bizjak:2016}.
  
  In  \textbf{gDTT} the problem of `totalising' a type---which corresponds to reasoning by
  coinduction---was not handled through the `always' modality, but through clocks. In essence,
  \textbf{gDTT} does not come with a single $\Later$ modality, but rather with a collection of them,
  each one indexed by a clock name. There is a quantifier which allows clock names to be bound
  inside a particular type, and a crucial isomorphism:
  \[
    \forall \kappa.\ A \cong \forall \kappa.\ \Later^\kappa A
    \TagEq[$\ast$] \label{eq:guarded-recursion:clock-force-iso}
  \]
  \textbf{gDTT} presents several technical complications. The syntactic problems pertaining to
  delayed substitutions were resolved by the introduction of Clocked Type Theory (CloTT)
  \cite{bahr:2017}, which uses additional judgmental structure. It is conjectured that type-checking
  is decidable for CloTT. The complexity of using clocks also appears in the
  semantics of clocked type theory. CloTT is modelled in a collection of presheaf categories, with
  multiple functors navigating between them~\cite{mannaa:2018}.
  
  It was hoped that some of the complexity could be circumvented by replacing clocks with a
  modality. This led Clouston et al. to introduce the comonadic `always' modality $\Box$, which
  replaced the isomorphism \eqref{eq:guarded-recursion:clock-force-iso} with $\Box \Later A \cong
  \Box A$ \cite{clouston:2015}. The main advantage of using $\Box$ is that it can be interpreted in
  $\PSH{\omega}$, which is a much simpler model. On the other hand, the interactions between $\Box$
  and $\Later$ have proven difficult to capture in the syntax. In fact, the mere addition of $\Box$
  to a dependent type theory poses a significant technical challenge:
  see~\cite{bahr:2017,clouston:dra:2018,shulman:2018,gratzer:2019}. Despite this concentrated
  effort, there are still serious technical obstacles to adding $\Later$ to a type theory for
  $\Box$. \MTT{} is the first syntax to accomodate both $\Box$, $\Later$, and validate $\Box \Later
  A \cong \Box A$.
\end{rem}


\section{Internal Adjoints}
\label{sec:adjoints}

In many cases of interest, the need for a pair of \emph{adjoint modalities} arises: we would like a
pair of modalities $\mu : n \to m$ and $\nu : m \to n$ so that, in some sense,
\[
  \Modify[\nu]{-} \Adjoint \Modify{-}
\]
But what does it mean to have an adjunction between two modalities \emph{within} \MTT{}? Does it
correspond to an external adjunction? And do all known results from category theory apply? The only
thing that is certain is that this scenario is fundamental to modal type theory, as a number of
intended models can be elegantly presented through
adjunctions~\cite{schreiber:2014,nuyts:2018,shulman:2018}.

In this section we show that when \MTT{} is equipped with the \emph{walking adjunction} as a mode
theory, it becomes a useful syntax for reasoning about adjoint modalities. Of course, the adjoint
modalities themselves are not exactly adjoint functors: they are something slightly weaker than
DRAs, whose `left adjoints' constitute an adjunction. Nevertheless, we prove that the induced
modalities largely behave as expected: the unit and counit are internally definable; some limited
forms of internal transposition can be recovered; and left adjoints preserve colimits, as expressed
through \emph{crisp induction principles}.

\subsection{The walking adjunction}

As ever, we begin by freely defining a mode theory $\ModeAdj$. Its generators are two 1-cells $\nu :
m \to n$ and $\mu : n \to m$, and two 2-cells
\begin{align*}
  \eta &: 1_m \To \mu \circ \nu
  &
  \epsilon &: \nu \circ \mu \To 1_n
\end{align*}
subject to the triangle equations
\[
  \begin{tikzcd}
    \mu
      \arrow[r, Rightarrow, "{\eta \Whisker 1_\mu}"]
      \arrow[dr, equal]
    & \mu \circ \nu \circ \mu
      \arrow[d, Rightarrow, "{1_\mu \Whisker \epsilon}"]
    \\
    & \mu
  \end{tikzcd}
  \quad
  \begin{tikzcd}
    \nu
      \arrow[r, Rightarrow, "{1_\nu \Whisker \eta}"]
      \arrow[dr, equal]
    & \nu \circ \mu \circ \nu
      \arrow[d, Rightarrow, "{\epsilon \Whisker 1_\nu}"]
    \\
    & \nu
  \end{tikzcd}
\]
$\ModeAdj$ is sometimes called the \emph{walking adjunction} \cite[\S 5.1]{licata:2016}. It is the
\emph{classifying 2-category} for an adjunction: 2-functors $\Mor{\ModeAdj}{\CC}$ correspond
precisely to (2-categorical) adjunctions in $\CC$. The mode theory $\ModeAdj$ has a very curious
property: it is \emph{self-dual}, \ie~there is an equivalence $\Coop{\ModeAdj} \Equiv \ModeAdj$.
This equivalence sends the modes to each other, the adjoints to themselves and the 2-cells $\eta$
and $\epsilon$ again to each other.

\subsection{Models of adjoint modalities}

Recall that a modal context structure of a model of \MTT{} with mode theory $\ModeAdj$ is a strict
2-functor $\Interp{-} : \Coop{\ModeAdj} \to \CAT$. The self-duality of $\ModeAdj$ implies that such
a context structure consists of two categories and an adjunction between them. We immediately obtain
the following result.

\begin{cor}
  If $\CC$ and $\DD$ carry models of \MLTT{}, and there is a pair of dependent right adjoints
  between them whose `left adjoints' are themselves adjoint, then we can construct a model of \MTT{}
  with mode theory $\ModeAdj$.
\end{cor}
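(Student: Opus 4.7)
The plan is to leverage Theorem~\ref{thm:dra:dra}, which reduces the construction of an \MTT{} model to the assembly of a strict 2-functor $\Coop{\Mode} \to \CAT$ together with size-preserving DRAs for each 1-cell. Since $\ModeAdj$ is freely generated by $\mu, \nu, \eta, \epsilon$ subject to the triangle identities, to specify a strict 2-functor $\Interp{-} : \Coop{\ModeAdj} \to \CAT$ it suffices to provide images for these generators that satisfy the (opposite) triangle identities on the nose.

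First I would set $\InterpMode{m} \defeq \DD$ (the underlying category of one MLTT model) and $\InterpMode{n} \defeq \CC$ (the other), and take $\InterpModal{\mu} : \DD \to \CC$ and $\InterpModal{\nu} : \CC \to \DD$ to be the left adjoint functors of the two given DRAs. Reading off the variances of \S\ref{def:semantics:context-structure}, a 2-cell $\eta : 1_m \To \mu \circ \nu$ must be sent to a natural transformation $\InterpModal{\nu} \circ \InterpModal{\mu} \To 1_\DD$, and $\epsilon : \nu \circ \mu \To 1_n$ to $1_\CC \To \InterpModal{\mu} \circ \InterpModal{\nu}$. By hypothesis, the left adjoint functors themselves form an adjunction $\InterpModal{\mu} \Adjoint \InterpModal{\nu}$; its counit and unit give natural transformations of exactly these types, so I would take them as $\InterpKey{\eta}$ and $\InterpKey{\epsilon}$.

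Strict 2-functoriality then follows from two observations. Composition of 1-cells is sent to strict composition of functors by definition. The triangle identities in $\ModeAdj$ correspond, under the double contravariance, to the ordinary triangle identities for $\InterpModal{\mu} \Adjoint \InterpModal{\nu}$ in $\CAT$, which hold strictly for any adjunction. With this context structure in hand, Lemma~\ref{lem:constructing-dra:direct-image}-style reasoning is not required: we already have, by assumption, size-preserving DRAs $(\InterpModal{\mu}, \CModify_\mu, \CMkBox_\mu)$ and $(\InterpModal{\nu}, \CModify_\nu, \CMkBox_\nu)$ lifting these left adjoints. Applying Theorem~\ref{thm:dra:dra} assembles this data into the desired model of \MTT{} over $\ModeAdj$.

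The potentially subtle step is verifying that the putative 2-functor is \emph{strict} rather than merely pseudo: if one began instead with a pseudo 2-functor (e.g.\ because the given DRAs' left adjoints only satisfied the triangle equations up to isomorphism), one would need a strictification argument of the kind alluded to at the end of \S\ref{sec:presheaves}. Under the stated hypothesis, however, the triangle equations hold on the nose in $\CAT$, so no such strictification is needed and the corollary follows directly.
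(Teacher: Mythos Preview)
Your approach is essentially the paper's: assemble the two left-adjoint functors into a context structure via the universal property of $\ModeAdj$, then invoke Theorem~\ref{thm:dra:dra}. However, you have the direction of the adjunction reversed. You correctly computed that $\InterpKey{\eta} : \InterpModal{\nu} \circ \InterpModal{\mu} \To 1_{\DD}$ and $\InterpKey{\epsilon} : 1_{\CC} \To \InterpModal{\mu} \circ \InterpModal{\nu}$, but these are the counit and unit of an adjunction $\InterpModal{\nu} \Adjoint \InterpModal{\mu}$, not $\InterpModal{\mu} \Adjoint \InterpModal{\nu}$ as you wrote. (For $\InterpModal{\mu} \Adjoint \InterpModal{\nu}$ the counit would be $\InterpModal{\mu} \circ \InterpModal{\nu} \To 1_{\CC}$, which is the wrong shape.) The paper's one-line proof states the adjunction as $\InterpModal{\nu} \Adjoint \InterpModal{\mu}$, matching the types you derived. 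Once that slip is corrected, your argument is correct and coincides with the paper's.
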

\begin{proof}
  Write $\InterpModal{\nu} : \CC \to \DD$ and $\InterpModal{\mu} : \DD \to \CC$ for the functors
  given as part of the DRAs. The notation is then suggestive: $\InterpModal{\nu} \Adjoint
  \InterpModal{\mu}$, and Theorem~\ref{thm:dra:dra} applies.
\end{proof}

Conversely,

\begin{thm}
  Any model of $\ModeAdj$ must interpret $\InterpModal{\nu}$ and $\InterpModal{\mu}$ as adjoint
  functors. Moreover, if $\CModify[\mu]$ and $\CModify[\nu]$ are induced by lifting the adjunctions
  $\InterpModal{\mu} \Adjoint R_\mu$ and $\InterpModal{\nu} \Adjoint R_\nu$ to a dependent
  right adjoints (by Lemma~\ref{lem:constructing-dra:lifting-adjunction}), then $R_\nu \Adjoint
  R_\mu$.
\end{thm}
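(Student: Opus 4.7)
The plan is to exploit the universal property of $\ModeAdj$ as the walking adjunction, and then to invoke uniqueness of adjoints.

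For the first claim, I would begin by unfolding what a modal context structure for $\ModeAdj$ gives us. A strict 2-functor $\Interp{-} : \Coop{\ModeAdj} \to \CAT_1$ assigns to each generating 2-cell an honest natural transformation in $\CAT$. The key is to track the double variance reversal in $\Coop{\ModeAdj}$. Because 1-cells are reversed, $\InterpModal{\mu \circ \nu} = \InterpModal{\nu} \circ \InterpModal{\mu}$ and $\InterpModal{\nu \circ \mu} = \InterpModal{\mu} \circ \InterpModal{\nu}$. Because 2-cells are reversed, $\eta : 1_m \To \mu \circ \nu$ becomes $\InterpKey{\eta} : \InterpModal{\nu} \circ \InterpModal{\mu} \To \mathrm{Id}_{\InterpMode{m}}$ and $\epsilon : \nu \circ \mu \To 1_n$ becomes $\InterpKey{\epsilon} : \mathrm{Id}_{\InterpMode{n}} \To \InterpModal{\mu} \circ \InterpModal{\nu}$. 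These natural transformations are in exactly the positions required to be the counit and unit, respectively, of an adjunction $\InterpModal{\nu} \Adjoint \InterpModal{\mu}$.

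To verify the triangle identities, I would chase the two equations of $\ModeAdj$ through $\Interp{-}$. Since the 2-functor is strict, it preserves horizontal and vertical composition on the nose, and the interchange $\InterpKey{\alpha \Whisker \beta}$ can be unfolded in terms of whiskering of the associated natural transformations. Thus the $\ModeAdj$-triangles $(1_\mu \Whisker \epsilon) \circ (\eta \Whisker 1_\mu) = 1_\mu$ and $(\epsilon \Whisker 1_\nu) \circ (1_\nu \Whisker \eta) = 1_\nu$ translate directly to the corresponding triangle identities for $\InterpKey{\epsilon}$ and $\InterpKey{\eta}$ in $\CAT$. (Alternatively, one can cite the universal property: $\ModeAdj$ is the classifying 2-category of an adjunction, and $\ModeAdj$ is self-dual, so a 2-functor out of $\Coop{\ModeAdj}$ is literally the same data as an adjunction in $\CAT$.)

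For the second claim, assume $\CModify[\mu], \CModify[\nu]$ arise from adjunctions $\InterpModal{\mu} \Adjoint R_\mu$ and $\InterpModal{\nu} \Adjoint R_\nu$ lifted to DRAs via Lemma~\ref{lem:constructing-dra:lifting-adjunction}. The first part of the theorem yields $\InterpModal{\nu} \Adjoint \InterpModal{\mu}$. Hence $\InterpModal{\nu}$ has two right adjoints: $\InterpModal{\mu}$ and $R_\nu$. By uniqueness of adjoints up to natural isomorphism, there is a canonical $\InterpModal{\mu} \cong R_\nu$. Transporting the given adjunction $\InterpModal{\mu} \Adjoint R_\mu$ along this isomorphism yields $R_\nu \Adjoint R_\mu$, as desired.

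The main obstacle is purely bookkeeping: getting the variance conventions of $\Coop{\Mode}$ right, so that the counit of the adjunction really is $\InterpKey{\eta}$ (from the unit $\eta$ of $\ModeAdj$) and the unit really is $\InterpKey{\epsilon}$ (from the counit $\epsilon$ of $\ModeAdj$). Once the bookkeeping is done, both parts are formal consequences of the 2-functoriality of $\Interp{-}$ and of uniqueness of adjoints.
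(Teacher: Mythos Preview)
Your proposal is correct and follows essentially the same approach as the paper: the paper invokes the fact that 2-functors preserve adjunctions (equivalently, that $\ModeAdj$ is the walking adjunction and is self-dual) for the first claim, and uniqueness of adjoints to obtain $R_\nu \cong \InterpModal{\mu}$ and hence $R_\nu \Adjoint R_\mu$ for the second. Your version is simply more explicit about the $\Coop{}$ variance bookkeeping that the paper leaves implicit.
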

\begin{proof}
  Adjoint functors are precisely adjoint morphisms in the 2-category $\CAT$. As $\ModeAdj$ is the
  walking adjunction, and 2-functors preserve adjunctions, we have that $\InterpModal{\nu} \Adjoint
  \InterpModal{\mu}$.

  If $\InterpModal{\nu} \Adjoint R_\nu$,
  then by the uniqueness of adjoint pairs we must have that $R_\nu \cong \InterpModal{\mu}$.
  If moreover $\InterpModal{\mu} \Adjoint R_\mu$,
  then the previous isomorphism yields $R_\nu \Adjoint R_\mu$.
\end{proof}
The last situation in this lemma is sometimes known as an `adjunction of adjunctions' \cite[\S
5.1]{licata:2016}. In particular, the action of the right adjoint modality $\mu$
on contexts, viz.\ $\InterpModal{\mu}$, is in some sense internalized on types and terms by the action of the left adjoint modality $\nu$ on types and terms,
viz.\ $\Modify[\nu]{-}$.

\subsection{Recovering the adjunction internally}
\label{sec:adjoints:internalization}

The foregoing construction of a model interpreted the lock functors required by $\ModeAdj$ by an
adjunction. Consequently, substitutions $\Delta \to \LockCx{\Gamma}$ are in natural bijection with
substitutions $\LockCx{\Delta}<\nu> \to \Gamma$. We would like to strengthen this setting by
bootstrapping this adjunction into an \emph{internal adjunction}.

It is not immediately clear what an internal adjunction should be. However, we can construct an
appropriate definition by internalizing the unit and counit as functions. But that is not immediate
either: if $\IsTy{A}[1]$, the construction $\Modify[\mu]{\Modify[\nu]{A}}$ that we would na\"{i}vely
try as the codomain of the unit is ill-typed. This can be mended through key substitutions. Recall
that $\eta : 1_m \To \mu \circ \nu$. The corresponding key substitution at $\Gamma
\Mute{{}\mathop{@} m}$ is $\Key{\eta}{\Gamma} : \LockCxV{\Gamma}<\mu \circ \nu> \to \Gamma
\Mute{{}\mathop{@} m}$. We can use this to formally define the notation of
Section~\ref{sec:towards-mtt:multiple-modalities} by
\[
  A^\eta \defeq \Sb{A}{\Key{\eta}{\Gamma}}
\]
As substitutions can be eliminated (e.g. through a subset of the canonicity algorithm), this defines
an admissible operation from type $\IsTy{A}[1]$ to type $\IsTy[\LockCxV{\Gamma}<\mu \circ
\nu>]{A^\eta}[1]$. We can thus define the unit component at $\IsTy{A}[1]$ by
\[
  \arraycolsep=1.4pt
  \begin{array}{lcl}
  \Unit    &:&      A \to \Modify[\mu]{\Modify[\nu]{A^\eta}} \Mute{{}\mathop{@} m} \\
  \Unit(x) &\defeq& \MkBox[\mu]{\MkBox[\nu]{x^\eta}}
  \end{array}
\]
Dually, for any type $\IsTy[\LockCxV{\Gamma}<\nu \circ \mu>]{A}[1]<n>$ we can define the counit
component by
\[
  \arraycolsep=1.4pt
  \begin{array}{lcl}
  \Counit    &:& \Modify[\nu]{\Modify[\mu]{A}} \to A^\epsilon \Mute{{}\mathop{@} n} \\
  \Counit(x) &\defeq& \Open{x}[y_0]{\Open{y_0}[y_1]{y_1^\epsilon}<\mu>[\nu]}<\nu>[1]
  \end{array}
\]
We thus obtain the unit and counit internally, but the types of the components have to be adjusted
in the presence of dependence. Moreover, we can prove internal versions of the triangle equations;
they are given by modal induction:
\[
  \arraycolsep=1.4pt
  \begin{array}{lcl}
    \_ &:& (x : \Modify[\nu]{A}) \to \Id{\Modify[\nu]{A}}{x}{\Counit(\ZApp{\MkBox[\nu]{\Unit}}{x}{\nu})}\\
    \_ &\defeq& \Lam[x]{\Open{x}[y]{\Refl{\MkBox[\nu]{y}}}<\nu>[1]}\\[0.3cm]
    \_ &:& (x : \Modify[\mu]{A}) \to \Id{\Modify[\mu]{A}}{x}{\ZApp{\MkBox[\mu]{\Counit}}{\Unit(x)}{\mu}}\\
    \_ &\defeq& \Lam[x]{\Open{x}[y]{\Refl{\MkBox[\mu]{y}}}<\mu>[1]}
  \end{array}
\]
The most difficult part is proving that these terms are well-typed. For example, in the first
instance we must show that $\MkBox[\nu]{y} =
\Counit(\ZApp{\MkBox[\nu]{\Unit}}{\MkBox[\nu]{y}}{\nu})$ definitionally:
\begin{align*}
  \Counit(\ZApp{\MkBox[\nu]{\Unit}}{\MkBox[\nu]{y}}{\nu})
  &= \Counit(\MkBox[\nu]{\App{\Unit}{y}})\\
  &= \Counit(\MkBox[\nu]{\MkBox[\mu]{\MkBox[\nu]{y^\eta}}})\\
  &= \MkBox[\nu]{y^\eta}^\epsilon\\
  &= \MkBox[\nu]{(y^\eta)^{\epsilon \Whisker 1_\nu}}\\
  &= \MkBox[\nu]{y^{(\epsilon \Whisker 1_\nu) \circ (1_\nu \Whisker \eta)}}\\
  &= \MkBox[\nu]{y}
\end{align*}
Because we are using slightly informal syntax here, it is difficult to see that the steps that
introduce whiskering are correct. They become much more perspucious if we expand
$\MkBox[\nu]{y^\eta}^\epsilon$ into algebraic syntax, and use the last equation of
Fig.~\ref{fig:mtt-gat-eqsubst} twice to absorb locks:
\[
\Sb{\MkBox[\nu]{\Sb{y}{\Key{\eta}{\LockCx{\Gamma}<\nu>}}}}{\Key{\epsilon}{\Gamma}}
= \MkBox[\nu]{\Sb{y}{\Key{\eta}{\LockCx{\Gamma}<\nu>} \circ \LockSb{\Key{\epsilon}{\Gamma}}<\nu>}}
= \MkBox[\nu]{\Sb{y}{\Key{1_\nu \Whisker \eta}{\Gamma} \circ \Key{\epsilon \Whisker 1_\nu}{\Gamma}}}
= \MkBox[\nu]{\Sb{y}{\Key{1_\mu}{\Gamma}}}
\]

\subsection{Internal transposition}

The previous section offers a perfectly good internal representation of an external adjunction.
However, it is usually much more economical to present an adjunction by a natural isomorphism
\[
  \Hom{L(A)}{B} \cong \Hom{A}{R(B)}
\]
Unfortunately, this is not achievable in \MTT{} for a multitude of reasons. First, notice that
$\Modify[\nu]{A} \to B  \Mute{{}\mathop{@} n}$ and $A \to \Modify{B} \Mute{{}\mathop{@} m}$ are
types in different modes, so the putative type $(\Modify[\nu]{A} \to B) \Equiv (A \to \Modify{B})$
that would represent the isomorphism is ill-typed.

Second, even if the two modes coincide---so that $\nu, \mu$ are endomodalities---the aforementioned
type is a bit too strong for our purpose: it is inhabited by \emph{internal equivalences}, which are
stronger than bijections of hom-sets. Such equivalences correspond to isomorphisms $B^{L(A)} \cong
R(B)^A$ of exponential objects. In turn, these are equivalent to hom-set bijections only if the
involved functors are \emph{internal}, which is to say that we have functions
\begin{align*}
  (A \to B) &\to (\Modify[\nu]{A} \to \Modify[\nu]{B})
  &
  (A \to B) &\to (\Modify{A} \to \Modify{B})
\end{align*}
that compute the action of the modality $\Modify{-}$ on morphisms \emph{within}
\MTT{}.\footnote{Such functors are usually called \emph{enriched} (recall that cartesian closure is
  a self-enrichment).}

Third, even if we could internalize our modalities, we would be flying too close to the sun. As we
have $(1 \to A) \Equiv A$ for any $A$ and $\Modify[\xi]{1} \Equiv 1$ for any $\xi$, we may calculate
that
\[
  A\ \Equiv\ 1 \to A\ \Equiv\ \Modify[\nu]{1} \to A\ \Equiv\ 1 \to \Modify{A}\ \Equiv\ \Modify{A}
\]
Hence, $\Modify{-}$ must be the identity functor up to equivalence. This short argument, which is
due to \cite[Theorem 5.1]{licata:2018}, is a \emph{no-go theorem} that obstructs the
internalization of an adjunction whose left adjoint preserves terminal objects.

\cite{licata:2018} overcame this barrier by introducing the \emph{global sections
modality}~$\flat$. Terms of $\flat A$ represent \emph{global} elements of $A$: terms of $\flat(A \to
B)$ are in bijection with morphisms in $\Hom{A}{B}$. Thus, the previously problematic equivalence
holds under $\flat$.

We can rephrase this argument in our syntax. The key thing to notice is that the functor $\flat :
\PSH{\CC} \to \PSH{\CC}$ which maps a presheaf to the constant presheaf $\_ \mapsto
\Hom[\PSH{\CC}]{1}{P}$ of global sections is initial amongst functors that preserve the terminal
object. Thus, we postulate an initial modality: suppose that $n = m$, and that $\Hom{m}{m}$ is
equipped with an initial object, \ie~a 1-cell $\flat : m \to m$ and a unique 2-cell ${!} : \flat \To
\xi$ for all $\xi$. As a consequence, we are able to use variables $\DeclVar{x}{A}<\flat>$ in any
context. Assuming function extensionality, we have that
\begin{thm}
  \label{thm:programming-in-mtt:transposition}
  There is an equivalence $\Modify[\flat]{\Modify[\nu]{A^{!}} \to B} \Equiv \Modify[\flat]{A \to
  \Modify{B^{!}}}$.
\end{thm}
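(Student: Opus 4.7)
\medskip

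\noindent
The plan is to construct a pair of mutually inverse functions
\[
  \phi : \Modify[\flat]{\Modify[\nu]{A^{!}} \to B} \to \Modify[\flat]{A \to \Modify[\mu]{B^{!}}}
  \qquad
  \psi : \Modify[\flat]{A \to \Modify[\mu]{B^{!}}} \to \Modify[\flat]{\Modify[\nu]{A^{!}} \to B}
\]
using the unit and counit combinators of \S\ref{sec:adjoints:internalization}, and then prove that they are inverse up to propositional equality. The $\flat$ wrapper is what breaks the no-go obstruction recounted just above: it lets us bind the body $f$ as a \emph{crisp} (\ie{} $\flat$-modal) assumption via \ruleref{tm/modal-elim}, after which we can apply $f$ to arguments that live under other lock modalities, because the variable rule \ruleref{tm/var} can mediate through the unique 2-cell ${!} : \flat \To \xi$ supplied by the initiality of $\flat$ in $\Hom{m}{m}$.

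Concretely, I would define
\[
  \phi(\MkBox[\flat]{f}) \defeq \MkBox[\flat]{\Lam[a]{\MkBox[\mu]{(\App{f^{!}}{\MkBox[\nu]{a^{\eta}}})^{\epsilon}}}}
  \qquad
  \psi(\MkBox[\flat]{g}) \defeq \MkBox[\flat]{\Lam[x]{\Open{x}[y]{(\App{g^{!}}{y^{\eta}})^{\epsilon}}<\nu>[1]}}
\]
where each superscripted 2-cell is the unique composite of whiskerings of $\eta$, $\epsilon$, and ${!}$ that makes the resulting subterm well-typed. Since $\flat$ is initial among $\Hom{m}{m}$-endo\-morphisms, any two parallel 2-cells with source of the form $\flat \circ \xi$ are equal, which automatically resolves the bookkeeping: given a target modality, there is only one candidate 2-cell to insert, so we need not choose. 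Before anything else I would verify well-typedness by carefully unfolding the context data (which, under the outer $\MkBox[\flat]{-}$, is always of the form $\LockCx{\Gamma}<\flat \circ \xi>$ for some $\xi$) and checking that the derived 2-cells agree under the equations governing $\Key{-}{-}$ listed in Fig.~\ref{fig:mtt-gat-eqsubst}. This is the most technical step and is where I expect the main obstacle to lie, although the uniqueness afforded by initiality should keep it from becoming genuinely deep.

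Next, I would verify that $\psi \circ \phi$ and $\phi \circ \psi$ are pointwise propositionally equal to the identity. In each case the argument is a double modal induction followed by function extensionality: first strip the outer $\Modify[\flat]{-}$, then apply $\eta$-expansion for the function type, then induct on the bound $\Modify[\nu]{-}$ or $\Modify[\mu]{-}$ term to expose its underlying $\MkBox[\nu]{-}$ or $\MkBox[\mu]{-}$. After all of these reductions, $\beta$-reducing the application of $f$ or $g$ and chasing the $\beta$-rule \ruleref{tm/modal-beta} reduces the goal to an equation of the form $y^{\alpha_1} = y^{\alpha_2}$ where $\alpha_1, \alpha_2$ are composites of $\eta$, $\epsilon$, and ${!}$ with source $\flat \circ -$. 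The triangle identities already proved in \S\ref{sec:adjoints:internalization}, together with the uniqueness of 2-cells out of $\flat$, make both composites equal to an identity 2-cell, whence the equation holds by $\Refl{-}$. Assembling these pieces yields the desired equivalence.
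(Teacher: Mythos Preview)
Your overall strategy is exactly the paper's: build the two maps from the unit and counit of the internal adjunction, and verify they are mutually inverse by modal induction, function extensionality, and the triangle identities together with uniqueness of 2-cells out of $\flat$. Your $\phi$ is, once the combinators $\Unit$ and $\ZApp{-}{-}{\mu}$ are unfolded, the paper's forward map; the trailing $(-)^{\epsilon}$ is not needed there (the result of $f^{!}(\MkBox[\nu]{a^{\eta}})$ already has type $B^{!}$), but this is harmless since the ``unique 2-cell that makes it type-check'' is then the identity.

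The genuine problem is your $\psi$. After $\Open{x}[y]{\ldots}<\nu>[1]$ you are in a context of shape $\Gamma,\ g :_\flat (A \to \Modify[\mu]{B^{!}}),\ \Lock_\flat,\ y :_\nu A^{!}$, and you propose the body $(\App{g^{!}}{y^{?}})^{?}$. Two things fail. First, $y$ is not accessible here: its annotation is $\nu$, the locks to its right compose to $1$, and there is no 2-cell $\nu \To 1$ in the mode theory (initiality of $\flat$ supplies 2-cells \emph{out of} $\flat$, not out of $\nu$), so there is no superscript that makes $y^{?}$ well-typed as an argument to $g^{!}$. Second, even if you could form the application, its type would be $\Modify[\mu]{B^{!}}$ up to a key substitution, and a key substitution $(-)^{\alpha}$ can never strip a modal former: $(\Modify[\mu]{C})^{\alpha} = \Modify[\mu]{C^{\alpha \Whisker 1_\mu}}$. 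To reach $B$ you must perform a modal elimination on the $\mu$-box, which is precisely what $\Counit$ does. The paper therefore writes the backward map as
\[
  g \longmapsto \MkBox[\flat]{\Lam[x]{\Counit\bigl(\ZApp{\MkBox[\nu]{g^{!}}}{x}{\nu}\bigr)}},
\]
avoiding both issues at once: the $\ZApp{-}{-}{\nu}$ puts $g^{!}$ and the contents of $x$ under a common $\Lock_\nu$ (so the variable rule applies with the identity 2-cell), and $\Counit$ supplies the two modal eliminations needed to descend from $\Modify[\nu]{\Modify[\mu]{B^{!}}}$ to $B$. Once you replace your $\psi$ by this (or an unfolded equivalent with an explicit $\Open{-}[-]{\ldots}<\mu>[\nu]$), the rest of your plan goes through.
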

\begin{proof}
  The equivalence is given by the functions
  \[
  \arraycolsep=1.4pt
  \begin{array}{lclclcl}
    F & : & \Modify[\flat]{\Modify[\nu]{A^{!}} \to B} \to \Modify[\flat]{A \to
    \Modify{B^{!}}} \\
    F(f) &\defeq& \Open{f}[g]{\MkBox[\flat]{\Lam[x]{\ZApp{\MkBox[\mu]{g^!}}{\Unit(x)}{\mu}}}}<\flat>[1] \\[0.3cm]
    G &:& \Modify[\flat]{A \to \Modify{B^{!}}} \to \Modify[\nu]{A} \to\Modify[\flat]{\Modify[\nu]{A^{!}} \to B}\\
    G(g)&\defeq& \Open{g}[f]{\MkBox[\flat]{\Lam[x]{\Counit(\ZApp{\MkBox[\nu]{f^!}}{x}{\nu})}}}<\flat>[1]
  \end{array}
  \]
  These are well-typed because, by initiality of $\flat$, $A^\eta = (A^{!})^\eta = A^!$,
  $(B^!)^\epsilon = B^!$. By function extensionality and $\eta$ for modalities, they are mutually
  inverse.
\end{proof}

The closest we can get to defining internal transposition (without using an initial modality)
amounts to the following two functions.
\[
  \arraycolsep=1.4pt
  \begin{array}{lclclcl}
    \TranspAdj{\nu}{\mu} & : & \Modify[\mu]{\Modify[\nu]{A^\eta} \to B} \to A \to \Modify{B} \\
    \TranspAdj{\nu}{\mu} &\defeq& \Lam[f]{\Lam[x]{\ZApp{f}{\Unit(x)}{\mu}}} \\[0.3cm]
    \UntranspAdj{\nu}{\mu} &:& \Modify[\nu]{A \to \Modify{B}} \to \Modify[\nu]{A} \to B^\epsilon\\
    \UntranspAdj{\nu}{\mu} &\defeq& \Lam[f]{\Lam[x]{\Counit(\ZApp{f}{x}{\nu})}}
  \end{array}
\]
The first is an equivalence (again up to function extensionality), but neither have the expected
type. The first transposition $\TranspAdj{\nu}{\mu}$ is not without precedent: it is the internal
formulation of transposition for adjunctions between monoidal closed categories when the left
adjoint preserves monoidal products.

\subsection{Crisp induction}
\label{sec:adjoints:modal-eliminators}

Having internalized the definition of an adjunction, it is natural to ask whether standard facts
about adjoint functors carry over. In this section we prove an internal version of the fact that
left adjoints preserve colimits. Within type theory this result takes the form of \emph{crisp
  induction principles} for various types that arise from colimits.

As a first approximation to the notion of crisp induction, recall the rule for modal induction,
i.e. the elimination rule for modal types from Section~\ref{sec:towards-mtt}:
\[
  \inferrule{
    \mu : n \to m \\
    \nu : m \to o \\
    \IsTy[\ECxV{\Gamma}{x}{\Modify[\mu]{A}}<\Alert{\nu}>]{B}[1]<o> \\\\
    \IsTm[\LockCxV{\Gamma}<\Alert{\nu}>]{M_0}{\Modify[\mu]{A}}<m> \\
    \IsTm[\ECxV{\Gamma}{x}{A}<\Alert{\nu} \circ \mu>]{M_1}{\Sb{B}{\MkBox[\mu]{x}/x}}<o>
  }{
    \IsTm{\Open{M_0}[x]{M_1}<\mu>[\Alert{\nu}]}{\Sb{B}{M_0/x}}<o>
  }
\]
Notice that there is an ``extra'' modality parameterizing this rule, $\nu$, which modifies $M_0$ as
well as the data supplied to $M_1$. This extra generality is not frivolous: we can only define the
equivalence $\MComp{}{\mu}{\nu}$ of Section~\ref{sec:programming-in-mtt} because we can eliminate one
modality `under' another.

One might hope for a similar level of flexibility in all positive eliminators. However, the
elimination rule for booleans---stated here in its algebraic form of
Section~\ref{sec:algebraic-mtt}---does not allow it:
\[
  \inferrule{
    \IsCx{\Gamma}\\
    \IsTy[\ECx{\Gamma}{\Bool}<\Alert{1}>]{A}[1]\\
    \IsTm{M_t}{\Sb{A}{\ESb{\ISb}{\True}}}\\
    \IsTm{M_f}{\Sb{A}{\ESb{\ISb}{\False}}}\\
    \IsTm[\LockCx{\Gamma}<\Alert{1}>]{N}{\Bool}
  }{
    \IsTm{\BoolRec{A}{M_t}{M_f}{N}}{\Sb{A}{\ESb{\ISb}{N}}}
  }
\]
Were we to replace $1$ with an arbitrary modality, then this rule would state something considerably
stronger: not only would we have the expected elimination principle for $\Bool$, but all of our
modalities would \emph{preserve} $\Bool$. Semantically, this is nonsense: modalities intuitively
correspond to right adjoints, and therefore do not necessarily preserve colimits. For example, the
later $\Later$ modality of Section~\ref{sec:guarded-recursion} does not preserve booleans.

Yet, in some circumstances---\eg{} when a modality is a left adjoint---the stronger rule is valid.
This is the idea behind Shulman's crisp induction principles \cite[\S 5]{shulman:2018}: cohesive
type theory enables the proof of elimination principles for the coproducts and the identity type
under the left adjoint in the adjunction $\flat \Adjoint \sharp$. We will demonstrate that similar
principles are derivable within \MTT{} with mode theory $\ModeAdj$.

Fix a motive $\IsTy[\ECxV{\LockCxV{\Gamma}<\nu \circ \mu>}{b}{\Bool}<\nu>]{C}[1]<n>$. Crisp
induction is given by a term
\[
  \Gamma \vdash{} \CrispBoolRec_C :
  (\DeclVar{b}{\Bool}<\nu>) \to
  \Modify[\nu \circ \mu]{C(\True)} \to
  \Modify[\nu \circ \mu]{C(\False)} \to
  C^\epsilon(b)
  \Mute{{}\mathop{@} n}
\]
This is a well-formed type, as $\ECxV{\Gamma}{b}{\Bool}<\nu> =
\IsTy[\ECxV{\LockCxV{\Gamma}<1>}{b}{\Bool}<\nu>]{C^\epsilon}[1]$.

To obtain the crisp induction principle we first use the ordinary one at mode $n$, and apply a
number of modal combinators to bring it to mode $m$.
\[
  \arraycolsep=1.4pt
  \begin{array}{lcl}
    \LockCxV{\Gamma}<\nu> \vdash{} h & : & (b : \Bool) \to \Modify{C(\True)} \to \Modify{C(\False)} \to \Modify{C(b^\eta)} \Mute{{}\mathop{@} m}\\
    h(b, t, f) & \defeq & \BoolRec{b.\ \Modify{C(b^\eta)}}{t}{f}{b}\\[0.3cm]
    \Gamma \vdash{} \CrispBoolRec_C & : &
    (\DeclVar{b}{\Bool}<\nu>) \to
    \Modify[\nu \circ \mu]{C(\True)} \to
    \Modify[\nu \circ \mu]{C(\False)} \to
    C^\epsilon(b) \Mute{{}\mathop{@} m}\\
    \CrispBoolRec_C(b,t,f) & \defeq &
      \Counit(\ZApp{\ZApp{\MkBox[\nu]{h(b)}}{\MComp*{t}{\nu}{\mu}}{\nu}}{\MComp*{f}{\nu}{\mu}}{\nu})
  \end{array}
\]
The reasons why this term is well-typed is subtle. We have that
$\IsTm[\LockCxV{\ECxV{\LockCxV{\Gamma}<\nu>}{b}{\Bool}<1>}<\mu \circ \nu>]{b^\eta}{\Bool}<m>$, so
$\IsTy[\LockCxV{\ECxV{\LockCxV{\Gamma}<\nu>}{b}{\Bool}<1>}<\mu>]{C(b^\eta)}[1]<n>$ by the
application rule. Thus, $h$ is well-typed. It remains to show that $C(b^\eta)^\epsilon =
C^\epsilon(b)$, which intuitively follows from the triangle identities. We may show it by precisely
specifying what these operations mean in the algebraic syntax. First, we construct the substitutions
\begin{alignat*}{2}
  \sigma_0 &\defeq \ESb{\LockSb{\Wk}<\mu>}{\Sb{\Var{0}}{\Key{\eta}{\ECx{\LockCx{\Gamma}<\nu>}{\Bool}<1>}}}\
  &&: \LockCx{\ECx{\LockCx{\Gamma}<\nu>}{\Bool}<1>}<\mu> \to \ECx{\LockCx{\Gamma}<\nu \circ \mu>}{\Bool}<\nu> \Mute{{}\mathop{@} n}\\
  \sigma_1 &\defeq \LockSb{\ESb{\LockSb{\Wk}<\nu>}{\Var{0}}}\
  &&: \LockCx{\ECx{\Gamma}{\Bool}<\nu>}<\nu \circ \mu> \to \LockCx{\ECx{\LockCx{\Gamma}<\nu>}{\Bool}<1>}<\mu> \Mute{{}\mathop{@} n}\\
  \sigma_2 &\defeq \ESb{\parens*{\Key{\epsilon}{\Gamma} \circ \Wk}}{\Var{0}}\
  &&: \ECx{\Gamma}{\Bool}<\nu> \to \ECx{\LockCx{\Gamma}<\nu \circ \mu>}{\Bool}<\nu>
  \Mute{{}\mathop{@} n}
\end{alignat*}
We can then interpret $C(b^\eta)$ as the type
$\IsTy[\LockCx{\ECx{\LockCx{\Gamma}<\nu>}{\Bool}<1>}<\mu>]{\Sb{C}{\sigma_0}}[1]<n>$. Similarly,
$C(b^\eta)^\epsilon$ is the type
$\IsTy[\ECx{\Gamma}{\Bool}<\nu>]{\Sb{\Sb{\Sb{C}{\sigma_0}}{\sigma_1}}{\Key{\epsilon}{\ECx{\Gamma}{\Bool}<\nu>}}}[1]<n>$.
Finally, $C^\epsilon(b)$ is the type $\IsTy[\ECx{\Gamma}{\Bool}<\nu>]{\Sb{C}{\sigma_2}}[1]<n>$, so
it suffices to show that $\sigma_0 \circ \sigma_1 \circ \Key{\epsilon}{\ECx{\Gamma}{\Bool}<\nu>} =
\sigma_2$. This is a monstrous equation which is primarily structural. Moreover, $\eta$ occurs in
$\sigma_0$, and $\epsilon$ in the key that follows it, so one of the triangle equations must
somehow be implicated. Indeed, we can use one of the two equations along with the rules of
Section~\ref{sec:algebraic-mtt} to prove the desired result.

We can now prove that

\begin{thm}
  \label{thm:adjoints:preserve-bool}
  $\Modify[\nu]{\Bool} \Equiv \Bool$
\end{thm}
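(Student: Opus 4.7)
The plan is to exhibit mutually inverse functions $F \colon \Bool \to \Modify[\nu]{\Bool}$ and $G \colon \Modify[\nu]{\Bool} \to \Bool$ at mode $n$, and then prove the two round-trip identities propositionally. The map $F$ uses ordinary boolean recursion: $F(b) \defeq \BoolRec{\_.\Modify[\nu]{\Bool}}{\MkBox[\nu]{\True}}{\MkBox[\nu]{\False}}{b}$. The map $G$ is where crisp induction from \S\ref{sec:adjoints:modal-eliminators} does the essential work: first strip the outer modality using $\Open$, obtaining a $\nu$-declared boolean $b$, then invoke $\CrispBoolRec$ with the constant motive $\Bool$:
\[
  G(x) \defeq \Open{x}[b]{\CrispBoolRec_{\Bool}(b, \MkBox[\nu \circ \mu]{\True}, \MkBox[\nu \circ \mu]{\False})}<\nu>[1]
\]
Because the motive is constant, the $\epsilon$-whiskered target $C^\epsilon(b)$ is again $\Bool$, so $G(x)$ is indeed a term of type $\Bool$ at mode $n$.

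For the round trip $G \circ F$ on $\Bool$, I would proceed by ordinary boolean induction on the input $b$. In the $\True$ branch, the $\beta$-rule for $\Open$ reduces $G(F(\True))$ to $\CrispBoolRec_{\Bool}(\True, \MkBox[\nu\circ\mu]{\True}, \MkBox[\nu\circ\mu]{\False})$; unfolding the definition of $\CrispBoolRec$ given in \S\ref{sec:adjoints:modal-eliminators}, the inner $\BoolRec$ at the closed constructor $\True$ picks the first branch, and the surrounding $\MComp*{-}{\nu}{\mu}$, $\ZApp{-}{-}{\nu}$, and $\Counit$ layers cascade via the modal $\beta$-rules to leave exactly $\True$. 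The witness is $\Refl{\True}$, and the $\False$ branch is symmetric.

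For the round trip $F \circ G$ on $\Modify[\nu]{\Bool}$, I would first apply modal induction on the input $x$, reducing the task to showing $F(G(\MkBox[\nu]{b})) = \MkBox[\nu]{b}$ with $b$ a $\nu$-declared boolean variable. Since $b$ is a variable, the crisp recursor inside $G$ does not unfold on its own, so I would invoke crisp induction a second time, now with motive $C'(b) \defeq \Id{\Modify[\nu]{\Bool}}{F(G(\MkBox[\nu]{b}))}{\MkBox[\nu]{b}}$. In each branch $b$ is instantiated with a closed constructor, the cascade from the previous paragraph runs, a further $\beta$-step through $F$ makes the identity definitional, and the proof is $\Refl{\MkBox[\nu]{\True}}$ (respectively $\Refl{\MkBox[\nu]{\False}}$), wrapped in $\MkBox[\nu\circ\mu]{-}$ to match the shape demanded by $\CrispBoolRec$.

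The main obstacle is verifying the single chain of definitional reductions that witnesses $\CrispBoolRec_{\Bool}(\True, \MkBox[\nu\circ\mu]{\True}, \MkBox[\nu\circ\mu]{\False}) = \True$ (and symmetrically for $\False$). Because $\CrispBoolRec$ is defined rather than primitive, this reduction threads a closed boolean through an internal $\BoolRec$ living under a $\MkBox[\nu]$, two instances of $\ZApp{-}{-}{\nu}$ (each expanding into nested $\Open$s), two applications of $\MComp*{-}{\nu}{\mu}$, and finally $\Counit$; its correctness hinges on the triangle-identity substitution calculation of \S\ref{sec:adjoints:internalization}, together with the fact that the key substitutions $(-)^\eta$ and $(-)^\epsilon$ act as the identity on closed constants like $\True$ and $\False$. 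Once that reduction is checked, both round-trip equations collapse to reflexivity.
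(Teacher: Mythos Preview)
Your proposal is correct and follows essentially the same approach as the paper: the same two maps (ordinary boolean recursion in one direction, modal elimination followed by $\CrispBoolRec$ in the other), and the same round-trip arguments (ordinary induction for $G\circ F$, modal induction followed by crisp induction for $F\circ G$). You spell out the definitional reduction chain through $\CrispBoolRec$ in more detail than the paper does, which simply asserts that $\BoolIso*(\BoolIso(\MkBox[\nu]{\True})) = \MkBox[\nu]{\True}$ and appeals to reflexivity.
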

\begin{proof}
  We define the two functions
  \[
    \arraycolsep=1.4pt
    \begin{array}{lcl}
      \BoolIso &:& \Modify[\nu]{\Bool} \to \Bool \Mute{{}\mathop{@} m}\\
      \BoolIso(x) &\defeq& \Open{x}[y]{\CrispBoolRec_{\Bool}(y, \MkBox[\nu \circ \mu]{\True}, \MkBox[\nu \circ \mu]{\False})}<\nu>[1]\\[0.3cm]
      \BoolIso* &:& \Bool \to \Modify[\nu]{\Bool} \Mute{{}\mathop{@} m}\\
      \BoolIso* &\defeq& \Lam[x]{\BoolRec{\_.\ \Modify[\nu]{\Bool}}{\MkBox[\nu]{\True}}{\MkBox[\nu]{\False}}{x}}
    \end{array}
  \]
  We now use full crisp induction to construct for every $x : \Modify[\nu]{\Bool}$ a proof of
  $\Id{\Modify[\nu]{\Bool}}{x}{\BoolIso*(\BoolIso(x))}$. First, use modal induction to write $x =
  \MkBox[\nu]{y}$ for some $\DeclVar{y}{\Bool}<\nu>$. We then have to prove that
  $\Id{\Modify[\nu]{\Bool}}{\MkBox[\nu]{y}}{\BoolIso*(\BoolIso(\MkBox[\nu]{y}))}$, so we perform
  crisp induction on $y$. If $y \defeq \True$, we have that $\BoolIso*(\BoolIso(\MkBox[\nu]{\True}))
  = \MkBox[\nu]{\True}$, so $\MkBox[\nu \circ \mu]{\Refl{\MkBox[\nu]{\True}}}$ has the right type.
  The case for $y \defeq \False$ is similar. The other direction is simpler, and follows by induction on $\Bool$.
\end{proof}

Similar results hold for other types with `positive,' `pattern-matching,' or `closed-scope'
elimination rules. For example, we can also formulate a crisp induction principle for identity
types, which can be used to prove that
\begin{thm}
  $\Modify[\nu]{\Id{A}{M_0}{M_1}} \Equiv \Id{\Modify[\nu]{A}}{\MkBox[\nu]{M_0}}{\MkBox[\nu]{M_1}}$
\end{thm}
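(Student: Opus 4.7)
The plan is to follow the blueprint of Theorem~\ref{thm:adjoints:preserve-bool} and first derive a \emph{crisp J} principle from which the equivalence falls out quickly. Fix $A$ at the appropriate mode and a motive
\[
  \IsTy[\ECxV{\ECxV{\ECxV{\LockCxV{\Gamma}<\nu \circ \mu>}{a_0}{A}<\nu>}{a_1}{A}<\nu>}{p}{\Id{A}{a_0}{a_1}}<\nu>]{C}[1]<n>\,.
\]
In parallel with the construction of $\CrispBoolRec_C$, I will build $\CrispIdRec_C$ by (i) taking, at mode $m$ under the lock $\LockCxV{-}<\nu>$, an ordinary $\mathsf{J}$-elimination with motive $a_0, a_1, p \vdash \Modify[\mu]{C(a_0, a_1, p)^{\eta}}$, (ii) boxing the refl-case with $\MkBox[\mu]{-}$, then (iii) bringing the result back to mode $n$ using $\MComp*{-}{\nu}{\mu}$ on the data and $\Counit$ on the final term, exactly as in the boolean case. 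The well-typedness will again boil down to showing $C(M_0, M_1, P)^{\eta\,\text{then}\,\epsilon} = C^{\epsilon}(M_0, M_1, P)$, which is a monstrous but routine substitution calculation using one of the triangle laws; I expect this to be the principal technical obstacle, but it is of exactly the same shape as the corresponding check for $\CrispBoolRec_C$ and will go through identically.

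With crisp J in hand, the forward map is straightforward. Given $x : \Modify[\nu]{\Id{A}{M_0}{M_1}}$, with $M_0, M_1 : A$ under $\LockCxV{\Gamma}<\nu>$, I set
\[
  F(x) \;\defeq\; \Open{x}[p]{\CrispIdRec_{\Id{\Modify[\nu]{A}}{\MkBox[\nu]{a_0}}{\MkBox[\nu]{a_1}}}\!\bigl(M_0, M_1, p, \MkBox[\nu \circ \mu]{\Lam[a]{\Refl{\MkBox[\nu]{a}}}}\bigr)}<\nu>[1]\,,
\]
i.e.\ modally unpack the identity, then eliminate crisply with reflexivity case $\Lam[a]{\Refl{\MkBox[\nu]{a}}}$ (which is well-typed after the same $\epsilon$/$\eta$ cancellation as above).

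For the backward map I do \emph{not} need crisp induction: given $q : \Id{\Modify[\nu]{A}}{\MkBox[\nu]{M_0}}{\MkBox[\nu]{M_1}}$, I apply ordinary $\mathsf{J}$ at mode $n$ with the dependent motive
\[
  \bigl(a, b : \Modify[\nu]{A}\bigr)\bigl(\_ : \Id{\Modify[\nu]{A}}{a}{b}\bigr) \;\vdash\;
  \Open{a}[a']{\Open{b}[b']{\Modify[\nu]{\Id{A}{a'}{b'}}}<\nu>[1]}<\nu>[1]\,,
\]
whose refl-case at $a$ is filled by modal induction on $a$, yielding $\MkBox[\nu]{\Refl{a'}}$. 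The $\beta$-law \ruleref{tm/modal-beta} then makes the resulting term compute to $\Modify[\nu]{\Id{A}{M_0}{M_1}}$ on the given endpoints, so $G(q)$ has the right type definitionally.

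Finally, to establish that $F$ and $G$ are mutual inverses I argue as in the proof of Theorem~\ref{thm:adjoints:preserve-bool}: for $G \circ F = \mathrm{id}$, use modal induction to reduce to $x = \MkBox[\nu]{p}$, then crisp J on $p$; in the refl case both sides compute to $\MkBox[\nu]{\Refl{a}}$, and $\Refl{-}$ of that term witnesses the equality. For $F \circ G = \mathrm{id}$, apply ordinary $\mathsf{J}$ to the given identity $q$; the refl case reduces both sides to $\Refl{\MkBox[\nu]{a}}$, which is again witnessed by reflexivity. No novel ingredients beyond crisp J and the calculations already exemplified in \S\ref{sec:adjoints:modal-eliminators} are required.
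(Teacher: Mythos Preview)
Your proposal is correct and follows precisely the approach the paper indicates: the paper does not spell out a proof of this theorem but merely remarks that ``we can also formulate a crisp induction principle for identity types, which can be used to prove'' it, and you have carried out exactly that programme---deriving crisp $\mathsf{J}$ by the same recipe as $\CrispBoolRec$, then using it together with ordinary $\mathsf{J}$ and modal induction to build the two maps and check the round-trips. Your handling of the type-level modal eliminations in the backward motive and of the $\epsilon/\eta$ substitution calculation mirrors the boolean case faithfully; nothing beyond what is already exemplified in \S\ref{sec:adjoints:modal-eliminators} is needed.
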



\section{Related Work}
\label{sec:related}

Modal type theory has been an active area of research for two decades and, as with any active field,
a precise taxonomy of modal type theories would be a paper in and of itself. Accordingly, we have
not attempted such a task here, and have instead focussed on separating modal type theories into
distinct strands based on their judgmental structure. Some of our characterizations are slightly
artificial, in that these lines of work are not nearly so separate as we seem to suggest. We feel,
however, that this is the simplest way to position \MTT{} in relation to current work.

\subsection{Dual-context modal calculi}
\label{sec:related:dual-context}

One of the first papers on (non-linear)\footnote{The idea of dual contexts arose in linear logic:
  see \cite{andreoli:1992,girard:1993,plotkin:1993}.} modal type theory was by
\cite{pfenning-davies:2001}, who constructed a proof theory for \textsf{S4}, \ie{} a comonadic
modality. The central idea of this approach was to reflect the distinction between modal and
non-modal assumptions (referred to as `truth vs.\ validity' in \emph{op.\ cit.}) in the judgmental
structure of the system itself.  The judgments for this calculus then contained not just a context
of true propositions, but rather two contexts: one for intuitionistic propositions, and one for
modal ones. Following this methodology, Davies and Pfenning internalized previously known patterns
of sequent calculus in a natural deduction style \cite{kavvos:2020}.

This kind of judgment straightforwardly allows the incorporation of a product-preserving comonad.
The type $\Box A$ merely internalizes a restriction to modal contexts only:
\begin{mathpar}
  \inferrule{
    \Delta; \Emp \vdash A\ \mathsf{true}
  }{
    \Delta; \Gamma \vdash \Box A\ \mathsf{true}
  }
  \and
  \inferrule{
    A \in \Delta \cup \Gamma
  }{
    \Delta; \Gamma \vdash A\ \mathsf{true}
  }
  \and
  \inferrule{
    \Delta; \Gamma \vdash \Box A\ \mathsf{true}\\
    \Delta, A; \Gamma \vdash B\ \mathsf{true}
  }{
    \Delta; \Gamma \vdash B\ \mathsf{true}
  }
\end{mathpar}
The second author showed that this pattern adapts well to the necessity fragment of a number of
normal modal logics \cite{kavvos:2020}. The dual-context style has been succesfully adapted to
dependent types: see e.g. the work of \cite{de-paiva:2015}, and the spatial and cohesive type
theories of \cite{shulman:2018}. Similarly, contextual modal type
theory~\cite{nanevski:2008,boespflug:2011,bock:2015,pientka:2019} has used a dual-context-like
structure in order to give a systematic account of higher-order abstract syntax.

\cite{zwanziger:2019} continues this program by formulating a precise categorical semantics
based on Awodey's natural models for a dependent type theory with either an adjunction
(\textsf{AdjTT}) or comonad (\textsf{CoTT})~\cite{zwanziger:2019}. The categorical semantics of
\MTT{} and \textsf{AdjTT} are closely related, though with minor differences in the precise
definition of the modality. For instance, in \MTT{} only the $\Lock$ operator is required to act
upon the context, while in \textsf{AdjTT} the modalities themselves must extend to
contexts.\footnote{This is similar to the relation between a CwF+A and a CwDRA from
\cite{clouston:dra:2018}, and we expect a similar relation to exist between the semantics of \MTT{}
with a single modality and \textsf{AdjTT}.} These differences arise because Zwanziger characterizes
only a certain, semantically well-behaved subclass of models, while in Section~\ref{sec:semantics} we
describe more general models, which also support the syntactic model and the gluing model of
Section~\ref{sec:canonicity}. Syntactically, \textsf{AdjTT} is a multimode type theory that includes a
mode for both ends of the adjunction.

Despite these stories of success, the dual-context style is difficult to generalize: as the
complexity of the modal situation increases, so must the complexity of the context structure. For
instance, the structure of a dependent dual-context type theory enforces that a `modal' type (one
belonging to $\Delta$) may not depend on an `intuistionistic' type (one belonging to $\Gamma$). This
is a reasonable restriction in the case of $\Box$, but it is already somewhat limiting. For
instance, it should be allowed for a valid type to depend on a merely true $\Box A$. Making such an
adjustment would not only present a typographical problem (with a type occurring to the left of one
of its dependencies), it would render the introduction rule for $\Box A$ nonsensical.

This restriction proves even more difficult to manage once there is not merely one modality, but two
distinct modalities ones, say $\mu$ and $\nu$. Questions such as ``should the $\mu$-modified types
be allowed to depend on $\nu$-modified types?'' defy general answers. These questions can be
addressed for each specific modal situation. For example, both \cite{shulman:2018} and
\cite{zwanziger:2019} hand-craft a system for two modalities. However, these constructions strongly
depend on the structure of the underlying model, encouraging the proliferation of tiresome
metatheoretic work as we discussed in Section~\ref{sec:intro}.

What is lacking with the dual-context style is the ability to work systematically with a large class
of modal situations without reconsidering the properties of the system in each case. Some of the
rules of \MTT{} can be directly traced to rules in dual-context calculi (in particular, the
elimination rule for modal types), but the structure of our contexts is radically different, in a
way which is far more accommodating.

\subsection{Modal type theories based on left division}
\label{sec:related:irrelevance}

A separate strand of modal type theories builds its syntax around a structure that is termed \emph{left division} by
\cite{nuyts:2018}. Rather than having a fixed number of distinct modal and intuitionistic contexts,
there is a single context consisting of variables with \emph{modal annotations}. The earliest
appearance of this pattern is in the work of \cite{pfenning:2001}, where the annotations described a
variable as having various degrees of proof (ir)relevance.

In a non-dependent type system, the distinction between annotations and different contexts is
artificial: we could simply sort variables by their annotation, and separate them into different
context zones. However, once generalized to a dependent type theory have a distinct advantage: they
do not impose a fixed dependence schedule between different contexts. Instead, a type may depend on
anything preceding it in the context, but the nature of that dependence is moderated by the modal
annotations.

The term `left division' is chosen to describe this structure because of the behavior of the
introduction rules for modal types. For instance, in \cite{pfenning:2001}, there is a rule for
introducing a term in an irrelevant context:
\[
  \inferrule{
    \Gamma^{\oplus} \vdash M : A
  }{
    \Gamma \vdash M :_{\mathsf{irr}} A
  }
\]
Here $-^{\oplus}$ is a metatheoretic operation, which traverses the context and removes irrelevance
annotations. The effect of this is that all the variables in $\Gamma^{\oplus}$ can be used freely
when type-checking $M$. This is acceptable, because $M$ itself is irrelevant. Viewed properly this
is a division operation which `divides' all the annotations in $\Gamma$ by $\mathsf{irr}$. The
metatheory of a full dependent type theory based on this idea was considered by \cite{abel:2012},
who prove that modelling irrelevance in this way is sound and decidable.

More recent work by the third author~\cite{nuyts:2017,nuyts:2018} has carried this idea to its
natural conclusion by incorporating an entire hierarchy of modalities. In a related but distinct
line of work, the \texttt{Granule} Project~\cite{gaboardi:2016,orchard:2019} has exploited a
similar structure to give a systematic account of substructurality. There is ongoing work to extend
this to a full dependent type theory.

The modal annotations of \MTT{} are very similar to the modal annotations of variables in calculi with
left division. Contrasting \MTT{} with \cite{pfenning:2001} in particular, we find that there are
three classes of variables in \emph{op.\ cit.}: normal variables (written $x : A$), irrelevant
variables ($x \div A$), and valid variables ($x :: A$). Such a situation would be modeled in \MTT{}
by a single mode that has three endomodalities: irrelevance, extensionality (the identity modality),
and validity. A composition table for these modalities can be built from the relations in
\cite{pfenning:2001}'s calculus.

The rules for interacting with the modalities in \emph{op.\ cit.} traverse the context and modify the
binding used for each variable. This bulk operation is very different to \MTT{}-style locks, but
amounts to similar constraints on variable use. By tagging the context with a lock, every time we
use a variable we must ensure that the annotated modality sufficiently strong to overcome the lock.
When we bulk-update the context, the same restrictions occur but they are performed `eagerly.'

The use of `lazy' locks has several advantages over `eager' bulk updates.
For instance, we do not have to explain what it means to divide one modality by another, and
non-trivial 2-cells are possible. Furthermore, when interpreting the calculus in a model, it is
unnecessary to describe variable by variable how modality update affects the interpretation of the
entire context (which can be challenging: see \eg~\cite{nuyts:tech-report:2018}).

\subsection{Fitch-style modal type theories}
\label{sec:related:fitch}

A recent series of papers has used a judgmental structure that is similar to \MTT{} in order to
manage a variety of modalities~\cite{bahr:2017,clouston:dra:2018,gratzer:2019}. This kind of
structure, informally often referred to as the \emph{Fitch-style}~\cite{clouston:fitch:2018},
divides the context into regions of variables separated by locks,
but does not use annotations on individual variables.
Locks are dynamically included or
removed by the typing rules.

The central advantage of the Fitch-style is the impressively simple introduction rule for
modalities: whenever we wish to introduce a modality we simply append a lock to the context---which
tags the modal shift---and continue typechecking. In particular, we never need to remove variables
from the context during the introduction of a modal term. Of course, like in \MTT{} this style is
only sound for a modality which comes equipped with some sort of left-adjoint-like operation.

Another desirable property of the Fitch-style calculi is their support for strong elimination rules
for modalities. Instead of the pattern matching-style rules of other systems, Fitch-style calculi
have had an \emph{open scope} elimination rule for their modalities, which often permits a
definitional $\eta$-rule for $\Box A$. It is generally of the following shape:
\[
  \inferrule{
    \mathfrak{F}(\Gamma) \vdash M : \Box A
  }{
    \Gamma \vdash \mathsf{open}(M) : A
  }
\]
$\mathfrak{F}$ is a meta-theoretic operation on contexts which removes some number of locks and
variables from $\Gamma$. For instance, in \cite{clouston:dra:2018} the operation
$\mathfrak{F}(\Gamma)$ was defined by
\[
  \mathfrak{F}(\Gamma, \text{\faUnlock}, \Gamma') = \Gamma \text{ where } \text{\faUnlock} \not\in \Gamma'.
\]
This rule is convenient, and strictly more powerful than that of \MTT{} (see Section~\ref{sec:dra}).
However, it is metatheoretically less than ideal. The source of the trouble in this case is that we
must show that substitutions can be pushed under the $\mathsf{open}$ construct. For instance,
suppose we have some substitution $\gamma : \Delta \to \Gamma, \text{\faUnlock}, \Gamma'$. It is
necessary to ensure that this substitution uniquely gives rise to a substitution
$\mathfrak{F}(\gamma) : \mathfrak{F}(\Delta) \to \Gamma$, which will then be applied to the body $M$
of the term. This property can only be shown by lengthy induction on syntax. Such a property is
proven laboriously in \cite{gratzer:2019} for the $\MLTTLock$ type theory, and several complex and
seemingly artificial typing rules are necessary to show it. The situation is in some ways similar to
dual-context calculi, where meticulous expert attention is needed to show the admissibility of
substitution in each modal setting.

The final and most serious issue with the Fitch-style is the difficulty of accounting for multiple
distinct modalities. Each modality should give rise to a different lock, but the structural rules
governing their interactions are complex. It is well-understood how to model the $\Later$ modality
in a Fitch-style type theory, and \cite{gratzer:2019} developed an extensive account of the $\Box$
modality. However, it is an open problem whether the two may be combined. There is work to this
effect in a simple type theory~\cite{bahr:2019}, but even in this case there are restrictions on
$\Box$ and $\Later$ which prevent the recovery of the \MLTTLock~type theory of \cite{gratzer:2019}
as a subsystem.

These issues seem to converge to one cause: rules that `remove' elements from the context during
type-checking appear difficult to manage when combining modalities. As they operate on a syntactic
level, they also seem to prohibit the formulation of internal languages. Drawing on this intuition,
\MTT{} has adopted the simple introduction rules of Fitch-style calculi, but not the elimination
rules. The result is a less powerful type theory, with a weaker definitional equality, and no
definitional $\eta$-principle. In return, \MTT{} scales to any mode theory, including any number of
interacting modalities.

\subsection{Other work}
\label{sec:related:general}

The question of a multimodal framework for type theory has also been tackled by other recent
work~\cite{licata:2016,licata:2017}. This line of research is commonly referred to as the \emph{LSR}
framework, after the initials of the authors. LSR is designed to handle a wide variety of modal
situations in combination with a variety of different \emph{substructural} settings. There has been
ongoing work on extending this system to a full dependent type theory, but as of late 2020 this work
remains unpublished.

The impetus for the LSR framework is mainly derived from a long-standing wish to address the
interaction between dependent types and substructural logics. This is an axis of generalization
which is entirely outside the scope of \MTT{}. However, we may compare LSR to \MTT{} along the modal
axis.

The idea of parametrizing a type theory by a mode theory, as we have done with \MTT{}, originates in
a paper preceding the LSR framework~\cite{licata:2016}. In fact, the modal situations that can be
handled by \MTT{} are a strict subset of those which can be handled by pre-LSR/LSR, which also
includes a modality representing the \emph{left adjoint} as an operation on types (and not just
contexts). By contrast, \MTT{} has a simpler syntax which is amenable to current proof and
implementation techniques. This is reflected in our proof of canonicity, and our experimental
implementation efforts~\cite{menkar}. We therefore believe that \MTT{} is a natural halfway point
between current modal type theories (which are custom-fitted for each modal situation) and the full
generality of LSR: it is a simpler theory which accounts for many situations of interest.


\section{Conclusions}
\label{sec:conclusions}

We introduced and studied \MTT{}, a dependent type theory parametrized by a mode theory that
describes interacting modalities. We have demonstrated that \MTT{} may be used to reason about
several important modal settings, and proven basic metatheorems about its syntax, including
canonicity.

Several distinct directions of future work present themselves.

\paragraph{Towards an Implementation of \MTT{}}
A major point of future work is the development of an implementation of \MTT{}. Substantial
preliminary implementation efforts are already underway with \texttt{Menkar}~\cite{menkar}. In
addition to the engineering effort, a systematic account for an algorithmic syntax of \MTT{} as well
as proof of normalization is needed. We believe that the general ideas of \cite{gratzer:2019} are
applicable to this situation and there is ongoing work to apply them to \MTT{} through more modern
\emph{gluing} techniques~\cite{coquand:2018}. Eventually, this work should prove that
$\EqTm{M}{N}{A}$ and $\EqTy{A}{B}$ are decidable relative to a decision procedure for equality in
the underlying mode theory.

\paragraph{Left Adjoints}
As discussed in Section~\ref{sec:related:general}, \MTT{} trades a measure of generality for a degree of
simplicity, as compared to LSR. One might hope, however, that it would be possible to include a
connective for \emph{left adjoints}, as well as the current connective which models right adjoints
without losing all of this simplicity. It is not obvious that this can be done without significantly
changing \MTT{}: the introduction rule for modalities is exceptionally specific to a right adjoint.
This additionally flexibility would allow us to model several modalities which are currently out of
reach. For instance, when modeling a string of adjoints, we always fail to model the final left
adjoint. Concretely speaking, the inclusion of left adjoints would allow \MTT{} to model
computational effects~\cite{moggi:1991,levy:2012}, as we will be able to internally recover the
corresponding monad as the composite of the two parts of an adjoint pair.

\subsection*{Acknowledgements}
We are grateful for productive conversations with Carlo Angiuli, Dominique Devriese, Adrien
Guatto, Magnus Baunsgaard Kristensen, Daniel Licata, Rasmus Ejlers M\o{}gelberg, Matthieu Sozeau,
Jonathan Sterling, and Andrea Vezzosi.

Alex Kavvos was supported in part by a research grant (12386, Guarded Homotopy Type Theory) from
the VILLUM Foundation.
Andreas Nuyts was supported by a PhD Fellowship from the Research Foundation -
Flanders (FWO) at imec-DistriNet, KU Leuven.
This work was supported in part by a Villum Investigator grant (no.  25804),
Center for Basic Research in Program Verification (CPV), from the VILLUM Foundation.


\bibliographystyle{alpha}
\bibliography{refs.bib}

\end{document}